\def\opA{{\mathbb A}}
\def\opH{{\mathbb H}}
\def\opN{{\mathbb N}}
\def\opU{{\mathbb U}}
\def\R{{\mathbb R}}
\def\Z{{\mathbb Z}}
\def\T{{\mathbb T}}
\def\frh{{\mathfrak h}}
\def\cA{{\mathcal A}}
\def\cB{{\mathcal B}}
\def\cD{{\mathcal D}}
\def\cF{{\mathcal F}}
\def\cH{{\mathcal H}}
\def\cL{{\mathcal L}}
\def\cN{{\mathcal N}}
\def\cS{{\mathcal S}}
\def\cU{{\mathcal U}}
\def\cX{{\mathcal X}} 
\def\cY{{\mathcal Y}}
\def\g{{\gamma}}
\def\gam{{\gamma}}
\newcommand{\al}{\alpha}
\newcommand{\s}{\sigma}
\def\om{\omega}
\newcommand{\p}{\partial}
\newcommand{\ra}{\rightarrow}
\newcommand{\ls}{\lesssim}
\newcommand{\im}{\operatorname{Im}}
\newcommand{\re}{\operatorname{Re}}
\def\tr{{\rm Tr}}
\newcommand{\ran}{\rangle}
\newcommand{\lan}{\langle}
\def\1{{\bf 1}}
\def\qf{\om^q}
\def\states{{\mathfrak S}}
\def\qfstates{{\mathfrak Q}}
\def\HHFB{\opH_{hfb}}
\def\Hhfb{\HHFB(\qf_t)}
\def\UHFB{\opU_{\om^q}}
\def\Hd{{H^1}}
\def\Hsdd{{\cH^1_\s}}
\def\cHLd{{\cH^1_\g}}
\def\XL{{X^1}}
\def\SobSymp{{\mathcal H^{\infty,2}}}
\newcommand{\DETAILS}[1]{}
\def\eqnn{\begin{eqnarray*}}
\def\eeqnn{\end{eqnarray*}}
\def\eqn{\begin{eqnarray}}
\def\eeqn{\end{eqnarray}}
\def\prf{\begin{proof}}
\def\endprf{\end{proof}} 
\theoremstyle{plain}
\newtheorem{theorem}{Theorem}[section]
\newtheorem{definition}[theorem]{Definition}
\newtheorem{proposition}[theorem]{Proposition}
\newtheorem{lemma}[theorem]{Lemma}
\newtheorem{corollary}[theorem]{Corollary}
\newtheorem{remark}[theorem]{Remark}
\begin{document}

\parskip=8pt


\title{The time-dependent Hartree-Fock-Bogoliubov equations for Bosons. Revision
  \footnote{This is a revision of an earlier manuscript, \cite{BachBreteauxChenFroehlichSigal2016} 
 aiming at correcting an error which crept into one of the estimates entering the proof of the local existence result.}} 

\newcommand{\DATUM}{11-May-2018}         
\pagestyle{myheadings}                         
\markboth{\hfill{The time-dependent Hartree-Fock-Bogoliubov Equations, \DATUM}}{{The time-dependent Hartree-Fock-Bogoliubov equations, 
 \DATUM}\hfill}  %


\author{V. Bach, S.  Breteaux, Th. Chen, J. Fr\"ohlich, I. M. Sigal}

\DETAILS{\author[V. Bach]{Volker Bach}
\address[V. Bach]{
Institut f\"ur Analysis und Algebra
Carl-Friedrich-Gauss-Fakult\"at,
Technische Universit\"at Braunschweig,
38106 Braunschweig,
Germany}
\email{v.bach@tu-bs.de}

\author[S. Breteaux]{S\'ebastien 
Breteaux}
\address[S. Breteaux]{BCAM - Basque Center for Applied Mathematics, 48009 Bilbao, Basque-Country, Spain}
\email{sbreteaux@bcamath.org}

\author[T. Chen]{Thomas Chen}
\address[T. Chen]{Department of Mathematics, University of Texas at Austin, Austin TX 78712, USA}
\email{tc@math.utexas.edu}

\author[J. Fr\"ohlich]{J\"urg Fr\"ohlich}
\address[J. Fr{\"o}hlich]{Institut f{\"u}r Theoretische Physik, ETH H{\"o}nggerberg, CH-8093 Z{\"u}rich, Switzerland}
\email{juerg@phys.ethz.ch}

\author[I.M. Sigal]{Israel Michael Sigal}
\address[I.M. Sigal]{Department of Mathematics, University of Toronto, Toronto, ON M5S 2E4, Canada}
\email{im.sigal@utoronto.ca}}
\maketitle

\begin{abstract}
We introduce the map of dynamics of quantum Bose gases into dynamics of  quasifree states, which we call the ``nonlinear quasifree approximation". We use this map to derive the time-dependent Hartree-Fock-Bogoliubov (HFB) equations describing the dynamics of quantum fluctuations around a Bose-Einstein condensate.
We prove global well-posedness of the HFB equations for pair potentials satisfying suitable regularity conditions, and we establish important conservation laws. 
We show that the space of solutions of the HFB equations has a symplectic structure reminiscent of a Hamiltonian system. 
This is then used to relate the HFB equations to the HFB eigenvalue equations discussed in the physics literature. We also construct Gibbs equilibrium states at positive temperature associated with the HFB equations, and we establish criteria for the appearance of Bose-Einstein condensation.
\end{abstract}

\section{Introduction}
\label{sec-intro-1}

In this paper,  we derive the time-dependent \textit{ Hartree-Fock-Bogoliubov (HFB) equations} describing  quantum fluctuations of a non-relativistic Bose gas around a Bose-Einstein condensate and study their properties. 

\subsection{Quantum Many-Body Problem} \label{sec:quant-mb-probl}

The starting point of our analysis is a second-quantized description of a quantum-mechanical many-body system of Bose point particles (bosonic atoms). We first consider systems of finitely many particles. The Hilbert space of pure state vectors is given by the bosonic Fock space 
\eqn \label{Fock}
    \cF := \bigoplus_{n=0}^\infty \frh^{\otimes_{sym}n} \,,
\eeqn 
where the $n$-th summand is an $n$-fold symmetric tensor product of the one-particle Hilbert space
$$\frh := L^2 (\R^{d}),$$
accounting for the Bose-Einstein statistics of the particles.
The time evolution of the system is generated by the quantum Hamiltonian  
\eqn  \label{model-ham} 
    \opH = \int dx \; \psi^*(x)h\psi(x) 
    + \frac{1}{2}\int dx dy \; v(x-y)\psi^*(x) \psi^*(y)
    \psi(x) \psi(y) \,,
\eeqn
where, in the position-space representation, the operator $h$ is given by 
$$h:= -\Delta +V(x),\quad x\in \mathbb{R}^{d},\, d=1,2,3, \dots,$$ 
with $\Delta$ the Laplacian acting on $\frh$. In \eqref{model-ham}, $\psi(x)$ and $\psi^*(x)$ denote annihilation-
and creation operators, respectively. These operators (actually operator-valued distributions)
satisfy the canonical commutation relations (CCR): 
\eqn\label{commrel}
    \big[\psi(x),\psi^*(y)\big]=\delta(x-y) \,,\qquad 
    \big[\psi(x),\psi(y)\big]=0\; = \; 
    \big[\psi^*(x),\psi^*(y)\big]\,,
\eeqn 
see, e.g., \cite{BratteliRobinson-II-1996}. We will write $\psi^\sharp(x)$ for either $\psi(x)$ or $\psi^*(x)$.

We always impose the following conditions:
\begin{align} \label{cond-a}
(a) \quad    &  \text{The external potential $V$ is infinitesimally bounded}
\nonumber \\ &  \text{with respect to the Laplacian $-\Delta$.}
\\[2ex] \label{cond-b}
(b) \quad    &  \text{The pair potential $v$ is even, $v(x)=v(-x)$,}
\nonumber \\ &  \text{and relatively bounded with respect to $\Delta$.}
\end{align}
These conditions imply that $\opH$ is self-adjoint on the domain of
the operator 
\begin{equation} \label{free-ham}
\opH_0 := \int dx \; \psi^*(x)(-\Delta)\psi(x)
\end{equation} 
(see Appendix~\ref{sec:Self-adjointness_H}). We note that 
these conditions allow both $V$ and $v$ to have Coulomb singularities.

Let $W^{p, r}(\R^d)$ denote the standard Sobolev space over $\R^d$.  In Section \ref{sec-GWP-HFB-1} we will use a stronger condition on $v$: 
\begin{align} \label{cond-b'}
(b') \quad   &  \text{The pair potential $v$ is even, $v(x)=v(-x)$,} 
\hspace*{15mm}
\nonumber \\ &  \text{and satisfies $v \in W^{p, 1}$, for some $p>d$.}
\end{align}

States of the system are normalized positive linear (`expectation') functionals, $\om$, on the Weyl algebra ${\frak W}$ over Schwartz space $\cS(\R^d)$, which is generated by Weyl operators, $$W (f):=e^{i\phi (f)}, \text{   with   }\, \phi (f):=\psi^* (f)+\psi(f), \text{    and   }\, f \in \cS(\R^d),$$ 
(see \cite{BratteliRobinson-II-1996}, Section 5.2.3).   The set of states is denoted by $ \states$.

States correspond either to a finite number of Bose particles, as in the case of BEC experiments in traps, or to an infinitely extended gas at a non-zero particle density and some fixed temperature. States of finitely many particles are given by density operators on Fock space $\cF$: $\om (\opA)=\tr (\opA D)$, for all bounded operators 
$\opA$ on $\mathcal{F}$, (and in particular for all elements ${\mathbb{A} \in \frak W}$), where $D$ is a positive, trace-class operator on $\cF$ with unit trace. 


 It will be convenient to consider states defined on arbitrary products,
$$\psi^{\#}(f_1) \ldots \psi^{\#}(f_n),$$ of creation- and annihilation
operators. Expectations of such products in a state $\omega$, henceforth called \textit{correlation functions}, can be defined by applying partial derivatives, 
$\partial_{s_k}$, to expectation values 
$$\om(W(s_1f_1) \ldots W(s_nf_n))$$ of Weyl operators.  
We will only consider states with the property that these derivatives, and hence the corresponding correlation functions, exist, for arbitrary $n$; such states are henceforth called \textit{``regular states''}. Of particular interest to us are correlation functions with $n\le 4$. Their existence is guaranteed by assuming, e.g., that $\om (\opN^2)<\infty$, where $\opN$ is the particle number operator, $\opN:= \int dx \; \psi^*(x) \psi(x)$. This assumption implies, in particular, that $\om$ is given by a density operator on $\mathcal{F}$. (We remark, however, that existence of correlation functions follows from considerably weaker assumptions; e.g., from an appropriate version of the assumption that the particle density in the gas is finite.)
 
 The multilinear functionals \mbox{$\om(\psi^{\sharp}(f_1) \cdots \psi^{\sharp}(f_n)), \, f_i \in \cS(\R^d), \forall i,$} are given by tempered distributions (this is the nuclear theorem), which we formally write as $$\om(\psi^{\sharp}(x_1) \ldots \psi^{\sharp}(x_n)).$$ By an ``observable'', we refer either to an element of the Weyl algebra ${\frak W}$ or to a linear combination of operators of the form $\psi^{\#}(f_1) \ldots \psi^{\#}(f_n)$. (We remark that the term ``observable'' is however usually reserved for products 
 $\psi^{\#}(f_1) \ldots \psi^{\#}(f_n)$ that are gauge-invariant, i.e., invariant under phase transformations, $\psi \mapsto e^{i\theta} \psi, \psi^{*} \mapsto e^{-i \theta} \psi^{*}$.)
 
 The time-evolution of regular states is  given by the 
von~Neumann-Landau equation \cite{vonNeumann1927, Landau1927} (see also \cite{Pauli1928, Bloch1946}, and \cite{TerHaar1969}) for some history)
\eqn\label{eq-vNeum-1} 
    i\partial_t\om_t(\opA) = \om_t([\opA,\opH]) \,,    
\eeqn
for arbitrary observables $\opA$, which extends the standard von~Neumann-Landau equation to general $C^*-$algebras (see e.g. \cite{BratteliRobinson-II-1996, MerkliMueckSigal2007}). 

\subsection{Quasifree States and Truncated Expectations}

Since the evolution equation \eqref{eq-vNeum-1} is extremely complicated to analyze, one is interested in manageable approximations to it. Our approximation consists of restricting the dynamics given by \eqref{eq-vNeum-1} to \textit{quasifree states},  
  the simplest - yet sufficiently rich - class of states generalizing the Hartree and Hartree-Fock ones, on one hand and the Gaussian random processes, on the other, as has been first realized and used in \cite{BachLiebSolovej1994}.\footnote{The notion of quasifree states was introduced in \cite{Robinson1965}; see 
\cite{BratteliRobinson-II-1996} and references therein.}

Quasifree states are defined in terms of truncated expectations, which we define next. We use the short-hand notation
$\psi_j  :=\psi^{\sharp_j}(x_j)$.  The $n^{th}$ order \textit{truncated expectations (correlation functions)},
$\om^T(\psi_1, \ldots, \psi_n),$ 
of a state $\omega$ are defined recursively through 
\begin{equation}
\label{eqn:quasi-wick} 
\omega(\psi_1 \cdots \psi_n)
 =
\sum_{P_n} \prod_{J \in P_n}
 \om^T(\psi_{i_1}, \ldots, \psi_{i_{\vert J\vert}})\,,
\end{equation}
where the $P_n$ are partitions of the ordered set $\{1, ..., n\}$ into
ordered subsets, $J$. 
The simplest examples of truncated (or connected) correlations are
\eqn\label{2term-wick}
     \om^T(\psi(x)) &=&\omega(\psi(x)) \,,  \nonumber\\
     \om^T(\psi_1, \, \psi_2) &=&   \omega(\psi_1 \psi_2) -
    \omega(\psi_1) \, \omega(\psi_2) \,.
\eeqn
%

A state $\omega$ is called \textit{quasifree} if truncated n-point expectations vanish for $n>2$, i.e.,
\begin{equation}\label{quasifree}
 \om^T(\psi_1, \ldots, \psi_n) = 0, \quad \forall n>2,
 \end{equation}
We denote quasifree states by $\om^q$ and 
the set of quasifree states by $\qfstates \subset \states$.

It follows from the definition that all $n$-point expectations, $\qf(\psi_1^{\sharp_1}\cdots \psi_n^{\sharp_n})$, 
with $n>2$, in a quasifree state $\omega^{q}$
 can be expressed in terms of $\qf(\psi^{\sharp_i}_i)$ and
$\qf(\psi^{\sharp_j}_j\psi^{\sharp_k}_k)$, with $i,j,k\in\{1,\dots,n\}$. The explicit formula is called Wick's formula, or \textit{Wick's theorem}; see \cite{BratteliRobinson-II-1996}. 
Examples for small orders are given in Appendix~\ref{app-quasifreedef}.

Given an arbitrary, not necessarily quasifree state 
$\om \in \states$, with $\om(\opN) < \infty$, there exists a unique 
quasifree state, denoted $q[\om] \in \qfstates$, such that expectations
\eqn\label{2term-wick.01}
\om(\psi_1^{\sharp_1}) \ = \ q[\om](\psi_1^{\sharp_1})
\quad \text{and} \quad
\om(\psi_1^{\sharp_1} \, \psi_2^{\sharp_2})
\ = \ 
q[\om](\psi_1^{\sharp_1} \, \psi_2^{\sharp_2})
\eeqn
of quadratic or lower order agree (see Subsection~\ref{sec:truct-expect} below). We call the state $q[\om]$ the
quasifree reduction of $\om$.\footnote{This notion was introduced in \cite{ArakiShiraishi1971} (see below). For a related notion in the context the gauge invariant twice differentiable states, see \cite{OhyaPetz1993}. For the definition of the gauge invariant states, see Subsection \ref{sec:qf-dyn} below.} The map $q: \states \to
\qfstates$ is idempotent, $q \circ q = q$, and acts as a projection of
the convex space $\states$ of all states onto the space of quasifree states
$\qfstates$.

\subsection{Quasifree Dynamics}\label{sec:qf-dyn}

As mentioned above, detailed properties of the dynamics of a
many-body system described by the von Neumann-Landau equation \eqref{eq-vNeum-1} are
difficult to unravel, and approximations are therefore needed to extract interesting
qualitative features. 

The main idea is to restrict the dynamics to quasifree states.
However, the property of being quasifree is not preserved by the dynamics given by \eqref{eq-vNeum-1} and the main question here is how to map the true quantum evolution onto the class of quasifree states.

The {\it effective} dynamics we propose replaces equation \eqref{eq-vNeum-1},  with 
an initial condition  $\om_0 \in \states$,  by the equation 
\eqn\label{eq-vNeum-quasifree}
    i\partial_t\qf_t(\opA) = \qf_t\big( [\opA,\opH] \big) \, ,  \quad \text{ with } \omega^{q}_{t=0}=q[\omega_{0}]\,,
\eeqn
for all observables $\opA$ that are at most \textit{quadratic} in
creation- and annihilation operators. 

For the Hamiltonian $\opH$ given
by \eqref{model-ham}, the commutator $[\opA,\opH]$ contains products
of at most four creation- and annihilation operators; their expectation in $\omega^{q}_{t}$ is then evaluated by using Wick's theorem for the quasi-free state $\omega^{q}_{t}$. 

 In contrast to the von Neumann-Landau equation \eqref{eq-vNeum-1}, the {\it quasifree dynamics} \eqref{eq-vNeum-quasifree} is \textit{non-linear}.

Of course, one expects the effective evolution to be close the original one only if $\omega_{0}$ is close to $q[\omega_{0}]$ in an appropriate sense. 
We emphasize that, in general, $$\qf_t \neq q[\om_t],$$ 
even if the initial state $\om_0 = q[\om_0] \in \qfstates$ is quasifree. 
That is, the trajectory of quasifree states $\omega^{q}_{t}$ determined by 
\eqref{eq-vNeum-quasifree} is \textit{not} the projection, $q$, of
the trajectory $\omega_{t}$ of states evolving according to the full dynamics in \eqref{eq-vNeum-1} 
onto the space $\qfstates$ of quasifree states.

 We call \eqref{eq-vNeum-quasifree} the  {\it nonlinear quasifree approximation} (it was called the quasifree reduction in \cite{BachBreteauxChenFroehlichSigal2016}.)

The deviation of a state $\om \in \states$ from its quasifree reduction
 $q[\om] \in \qfstates$ can be quantified in terms of their relative 
entropy $S_{\mathrm{rel}}(\om , q[\om]) := 
\tr\big\{ D_\om \big( \ln[D_\om] -  \ln[D_{q[\om]}] \big) \big\}$,
provided $\om$ and hence $q[\om]$ are given by density operators $D_\om$ and
$D_{q[\om]}$, respectively \cite{GottliebMauser2007}. In fact, 
$S_{\mathrm{rel}}(\om, q[\om])$ may be viewed as the distance of
$\om$ to $\qfstates$, since $S_{\mathrm{rel}}(\om, \om') \geq 0$
with equality if, and only if, $\om = \om'$ and
\begin{align} \label{RelativeEntropy}
S_{\mathrm{rel}}(\om, q[\om])
\ = \
\inf_{q \in \qfstates} S_{\mathrm{rel}}(\om, q) .
\end{align}

 It has been shown in \cite{BenedikterSokSolovej2018} that for pure states the quasifree dynamics as defined above (\cite{BachBreteauxChenFroehlichSigal2016}) is a consequence of the Dirac-Frenkel principle,  in which the right side of the von Neumann-Landau equation \eqref{eq-vNeum-1} is projected onto a selected class of states. 


We will show that equation \eqref{eq-vNeum-quasifree} is 
 equivalent to the nonlinear, self-consistent 
evolution equation
\eqn\label{eq-qf-consistency-0}
i\partial_t \qf_t(\opA) 
\ = \ 
\qf_t\big( [\opA,\opH_{\rm hfb}(\qf_t)] \big) ,
\eeqn
for all observables $\opA$, 
where $\opH_{\rm hfb}(\qf)$ is an explicit \textit{quadratic} 
Hamiltonian given in Eq. \eqref{eq-HHFB-def-1}, which depends on a quasifree state $\omega^{q}$; see Theorem~\ref{thm-Hquadr-dyn-1}.  The equivalence holds for observables linear or quadratic in creation- and annihilation operators. 

 Equation \eqref{eq-vNeum-quasifree}, with the Hamiltonian $\opH$ given
by \eqref{model-ham}, is equivalent to  the HFB equations \eqref{eq:BHF-phi}-\eqref{k} derived from it below.

For $U(1)$-gauge invariant quasifree states, i.e., states, $\qf$, satisfying $\qf(\psi)=\qf(e^{i\theta}\psi), \forall \theta$, etc., Eq.  \eqref{eq-vNeum-quasifree}, with $\opH$ given
by \eqref{model-ham},  and consequently self-consistent equation \eqref{eq-qf-consistency-0} and the HFB equations \eqref{eq:BHF-phi}-\eqref{k}, reduce to the bosonic Hartree-Fock equation.
 Indeed, gauge-invariant quasifree states have vanishing truncated expectations $\phi_{\qf}$ and $\sigma_{\qf}$, as follows from the  $U(1)$-gauge invariance, 
 which implies that $\qf(\psi)=\qf(e^{i\theta}\psi)=e^{i\theta}\qf(\psi)$, and hence $\phi_{\qf}=0$, and similarly one shows that $\sigma_{\qf}=0$.

\subsection{HFB Equations for Truncated Expectations}\label{sec:truct-expect}

As was mentioned above, a quasifree state $\om^q \in \qfstates$ determines, and 
is determined by, the truncated expectations up to second order in the 
following sense
\begin{itemize}
\item[1.] \underline{$\om \to \Gamma$:} Given a (not necessarily quasifree)
state $\om \in \states$ and its expectations
\begin{equation} \label{omq-mom} 
\begin{cases}
    \phi(x) & :=\om [\psi(x)],\\
    \gamma (x;y) & :=   \om[\psi^{*}(y) \, \psi(x)] - \om [\psi^{*}(y)] \, \om [\psi(x)] ,\\
    \sigma (x,y) & :=   \om[\psi(x) \, \psi(y)] - \om [\psi(x)] \, \om [\psi(y)] \, ,
\end{cases} 
\end{equation}
up to second order, and denoting by $\gamma$ and $ \sigma$ the operators
with integral kernels given by $\gamma(x,y)$ and $\sigma(x,y)$,
respectively, we have that (see \eqref{Gam-posit} below)
\begin{align} \label{Gam-pos}
\Gamma = \begin{pmatrix}
\gamma & \sigma\\
 \bar{\sigma} & 1+\bar \gamma
\end{pmatrix} \geq 0 \,, 
\end{align}
where $\bar A := C \sigma C$, with $C$ denoting complex conjugation in
the position-space representation, (i.e., complex conjugation of wave
functions of spatial variables). Note in passing that this implies, in
particular, that
\begin{equation} \label{gam-sig-cond}
\gamma=\gamma^*\ge 0\ \text{ and }\ \sigma^*=\bar\sigma .
\end{equation} 

\item[2.] \underline{$\Gamma \to \om^q$:} Conversely, given 
$\gamma = \gamma^* \geq 0$ and $\sigma^* = \bar\sigma$ such that
$\Gamma := \big( \begin{smallmatrix}
\gamma & \sigma\\
 \bar{\sigma} & 1+\bar \gamma
\end{smallmatrix}\big) \geq 0$ obeys \eqref{Gam-pos} and 
$\phi \in L^2(\mathbb{R}^d)$, there exists a unique quasifree state 
$\om^q \in \qfstates$ such that \eqref{omq-mom} holds true with
$\om^q$ replacing $\om$. 

Actually, the condition that $\phi \in L^2(\mathbb{R}^d)$ is too restrictive
and can be relaxed, depending on the context.

\item[3.] \underline{$\om \to q[\om]$:} Given a state $\om \in \states$
and going through 1.\ and 2.\ above yields the quasifree reduction
$q[\om] := \om^q$ of $\om$.
\end{itemize}

%
The matrix operator in \eqref{Gam-pos} is called ``generalized
one-particle density matrix''. The positivity condition on $\Gamma$ in
\eqref{Gam-pos} can be expressed directly in terms of $\g$ and $\s$;
see \cite{BachBreteauxKnoerrMenge2014}, \cite{BachBreteauxChenFroehlichSigal2016}. 
The steps \textit{1.}\
  and \textit{2.}, whose composition yields the quasifree reduction
  $q$, were first carried out in \cite[Lemmata~3.2-3.5]{ArakiShiraishi1971}.

 We will use \eqref{Gam-pos} in proving the global existence for the HFB equations (see Proposition~\ref{prp-gampos-sigsym-1}(4) and the pragraph after Eq. \eqref{gam-E-bnd}). 
\DETAILS{ 
\eqref{omq-mom} implies that 
 \begin{equation} \label{Gam-pos}\Gamma=\begin{pmatrix}
\gamma & \sigma\\
 \bar{\sigma} & 1+\bar \gamma
\end{pmatrix}\geq 0\,.
\end{equation}
The converse is shown in  \cite{BachLiebSolovej1994} for fermion states and in \cite{Solovej2014} for pure boson ones, with $\tr\g<\infty$ in both cases. 
In the following, we always assume that \eqref{Gam-pos} holds.}

Evaluating \eqref{eq-vNeum-quasifree} 
for monomials $\opA\in\mathcal A^{(2)}$, where 
\[
\mathcal A^{(2)}
\ := \ 
\big\{ \psi(x), \: \psi^*(x)\psi(y), \; \psi(x)\psi(y) \big\},
\] 
yields a system of coupled nonlinear PDE's for
$(\phi_t,\gamma_t,\sigma_t)$, the {\it Hartree-Fock-Bogo\-liubov
  (HFB) equations}, presented in Eqs.~\eqref{eq:BHF-phi}-\eqref{k},
below. Since quasifree states are characterized by their truncated
expectations $\phi$, $\gamma$ and $\sigma$, this system of equations is
equivalent to Eq.~\eqref{eq-vNeum-quasifree}.

To give a first impression of the HFB equations \eqref{eq:BHF-phi}-\eqref{k}, 
we formally assume the pair interaction potential to be a
delta distribution, $v(x)=g\delta(x)$, where~$g\geq0$ is a coupling
constant. The HFB equations then have the form
\begin{align}
    i\partial_{t}\phi_t & =h_{g\delta}(\gamma_t^{\phi_t})\phi_t +g d(\sigma_t^{\phi_t}) \bar\phi_t-2g|\phi_t|^{2}\phi_t \,,
    \label{eq-HFBdelt-1-1}\\
    i\partial_{t}\gamma_t & =[h_{g\delta}(\gamma_t^{\phi_t}),\gamma_t]+g d(\sigma_t^{\phi_t}) \sigma_t^*-g \sigma_t \overline{d(\sigma_t^{\phi_t})}\,,
    \label{eq-HFBdelt-1-2}\\
    i\partial_{t}\sigma_t & =[h_{g\delta}(\gamma_t^{\phi_t}),\sigma_t]_{+}+ g [d(\sigma_t^{\phi_t}), \gamma_t]_+
    + d(\sigma_t^{\phi_t})\,,
    \label{eq-HFBdelt-1-3}
\end{align}
where $[A,B]_+=AB^T+BA^T$, with  :$A^T:=C A^* C$, 
 $d(\sigma)(x) :=\sigma(x, x)$, $d(\gamma)(x):=\gamma(x,x)$, and
\begin{align}
\label{eq-nmwh-notations-0} 
   & \sigma^\phi:=\sigma+\phi\otimes \phi \,, \quad  \gamma^\phi:=\gamma+|\phi\ran\lan\phi| \,,  \\ 
   & h_{g\delta}(\gam):=h+2g d(\gam) \,,
\end{align}
with $h$ as in Eq.\,\eqref{model-ham}. Here and in what follows, we denote the {\it multiplication operators} and the {\it functions} by which they multiply by the {\it same symbols}. The meaning will always be clear from context.

The physical interpretation of the truncated expectations of $\qf_t$
is as follows: The function $\phi_t$ is the quantum-mechanical one-particle wave function of the
Bose-Einstein condensate, while $\gamma_t$ and $\sigma_t$ describe the
dynamics of sound waves in the quasifree approximation; in particular,
$d(\gamma_t)$ determines the density of the ``thermal cloud'' of atoms. (In the
physics literature, $n=d(\gamma)$ and $m=d(\sigma)$ are called the
\textit{non-condensate density} and \textit{anomalous density},
respectively.)

The HFB equations \eqref{eq-HFBdelt-1-1}, \eqref{eq-HFBdelt-1-2} and \eqref{eq-HFBdelt-1-3} provide a {\it time-dependent extension} of the
standard \textit{stationary} Hartree-Fock-Bogoliubov equations for a Bose gas
found in the physics literature; see,
e.g., \cite{BlaizotRipka1986, DoddEdwardsClarkBurnett1998, Griffin1996, ParkinsWalls1981}.  Related
equations (with $\phi_t=0$) appear in superconductivity. 
These so-called \textit{Bogoliubov-de~Gennes} equations are equivalent to the
\textit{BCS} effective Hamiltonian description.

\subsection{Summary of Main Results} 
The formulation of the nonlinear quasi-free approximation in the form of equation\,
\eqref{eq-vNeum-quasifree}, and the derivation of its equivalent formulations as self-consistent equation \eqref{eq-qf-consistency-0} for $\qf_t$ and 
 the HFB equations \eqref{eq:BHF-phi}-\eqref{k}) for the truncated expectations $\phi$, $\gamma$, $\sigma$, 
are among the main results presented in this paper; (see Theorems~\ref{prop-HFB-vfull-1} and \ref{thm-Hquadr-dyn-1}).

We also initiate a mathematical study of solutions of the HFB equations. 
 In particular, if the initial state $\omega_0$ is s.t. the operator $\gam_0$ is 
 trace-class (i.e., the number of atoms is finite)  and $\sigma_0$ is Hilbert-Schmidt -- for precise hypotheses 
see Section~\ref{sec-HFB-general-1} -- we have the following results:
\vspace*{-1ex}
\begin{itemize}
\item Conservation of the total number of atoms in the gas:
\eqn
\cN(\phi_t,\gamma_t,\sigma_t) \ := \ \qf_t(\opN) \, , 
\eeqn 
where $\opN$ is the particle-number operator; (see
Corollary~\ref{cor-partnumb-en-conserv}).

\item Existence and conservation of the total energy (under suitable conditions 
on the two-body potential $v$ and on the initial state $\qf_0$):
\eqn 
    \mathcal{E}(\phi_t,\gamma_t,\sigma_t) \, := \ \qf_t(\opH) =  \qf_0(\opH)  \,,
\eeqn
i.e., $\mathcal{E}(\phi_t,\gamma_t,\sigma_t)$ is \textit{independent} of $t$; 
see Corollary~\ref{cor-partnumb-en-conserv} and
Theorem~\ref{prop-en-vfull-1}, or
Prop~\ref{prop:quasi-hamiltonian-structure}.

\item  {\it Positivity preservation property}:  If $\Gamma=\begin{pmatrix}
\gamma & \sigma\\
 \bar{\sigma} & 1+\bar \gamma
\end{pmatrix}\geq 0$ at $t=0$, then this holds for all times.

\item Global well-posedness
  (Theorem~\ref{thm:existence-uniquenes-up-to-Coulomb}) of the HFB
  equations.

 \end{itemize}

It is in the proof of the local existence part of the last statement (Lemma~\ref{prop:continuity_of_B_and_K}\eqref{enu:continuity-B-H1L2-LH1}) that an error was made in  \cite{BachBreteauxChenFroehlichSigal2016}. 
In Appendix \ref{sec:ops-b-k} (Lemma~\ref{prop:continuity_of_B_and_K}\eqref{enu:continuity-B-H1L2-LH1}) we prove the corresponding estimate under a more restrictive condition on the pair potential $v$ - Condition (b') above.

In \cite{BenedikterSokSolovej2018}, the program outlined in this paper has been pursued for equations analogous to the HFB equations valid for fermions, namely the Bogolubov-de Gennes equations; see also \cite{ChenSigal2017}. For references to related work see \cite{BachBreteauxChenFroehlichSigal2016, ChenSigal2017, BenedikterSokSolovej2018}.

We will show that any observable conserved by the von
  Neumann-Landau dynamics which is linear or quadratic in the
  creation- and annihilation operators is also conserved by the
  quasifree dynamics; see
  Theorem~\ref{thm:preservation-particle-number}. In the special 
  case of the observable $\opN$, this yields the statement above.
  Energy conservation follows from Eq. \eqref{eq-vNeum-quasifree}, with $\mathbb{A}= \opH_{\rm hfb}(\qf_t)$,
  the quadratic nature of $\opH_{\rm hfb}(\qf_t)$, and Eq. \eqref{eq-qf-consistency-0}.
  
Note that conservation of the  total number of atoms in the gas is a consequence of (global) $U(1)$-gauge invariance, 
i.e., invariance of  the Hamiltonian $\opH$ under the transformations 
$$\psi(x) \rightarrow e^{i\theta}\psi(x), \qquad \psi^{*}(x) \rightarrow e^{-i\theta} \psi^{*}(x), \qquad \forall \theta \in \mathbb{R}, \forall x \in \mathbb{R}^{d}.$$

  The  total particle number, $\cN (\phi,\gamma,\sigma) \, := \, \qf (\opN)$, and energy, $\mathcal{E} (\phi,\gamma,\sigma)  :=\qf (\opH)$, as functions of $ (\phi,\gamma,\sigma)$, can be evaluated explicitly:  
\eqn 
    \cN(\phi,\gamma,\sigma) \,  = \,  \int \big(\gamma (x;x)+|\phi (x)|^2\big) dx\, .
\eeqn 
The energy $\mathcal{E} (\phi,\gamma,\sigma)$ is given explicitly in Eq.
\eqref{eq:energy}. For a delta-function pair potential, $v=g\delta$, 
it takes the form
\begin{align} \label{eq-energy-delta-1}
    \mathcal{E}(\phi,\gamma,\sigma)
   & \; = \; \tr[h(\gamma+|\phi\ran\lan\phi|)]\\
    & 
     +g\int\big(2n(x)|\phi(x)|^{2}
    +n(x)^{2}+\frac{1}{2}|w(x)|^{2}\big)dx \,.
    \nonumber
\end{align}
(In terms of $\opH_{\rm hfb}(\qf)$, we have that $\mathcal{E}(\phi,\gamma,\sigma)  :=\qf(\opH) =\qf(\opH_{\rm hfb}(\qf))+\text{scalar}$.)

As usual, if $\gam$ is trace-class and $\s$ is Hilbert-Schmidt the energy functional $\mathcal{E}$ can be used to give a variational characterization of stationary Gibbs states:
\begin{itemize}\item Gibbs states minimize the energy $\mathcal{E}(\phi,\gamma,\sigma)$ under the constraint of constant entropy and for a fixed value of the expected particle number.
\end{itemize}

Equation~\eqref{eq-qf-consistency-0} suggests to define {\it HFB stationary states} as the quasifree states satisfying the equation
\eqn\label{sc-eq-stat}
\qf\big( [\opA,\opH_{\rm hfb}(\qf)] \big) \ = \ 0 \,,
\eeqn 
for all observables $\opA$. (If $\qf$ is given by a density matrix, we
can rewrite this equation as an explicit fixed point equation, see
\eqref{fixed-point-Gibbs-quasifree-state} below.) The most interesting ones
among such states are the \textit{ground states} and \textit{Gibbs states}.  These
states are defined as 
\[
\qf_{\beta, \mu} \ := \ \lim_{L\to \infty} \qf_L,
\]
where $\qf_L$ is the quasifree ground state or Gibbs state of a
Bose gas confined to a torus, $\Lambda_L=\R^d/2 L \Z^d$, i.e., to the box
$[- L, L]^d$ with periodic boundary conditions. It satisfies the
fixed point equation 
\eqn\label{fixed-point-Gibbs-quasifree-state}
  \Phi_{\beta,\mu}(\qf_L) \ = \ \qf_L\,, \,\,\,\text{with}\,\,\,
  \Phi_{\beta,\mu}(\qf_L)(\opA) \ := \
  \tr[\opA\,\exp(-\beta(\opH_{\text{hfb}}(\qf_L)-\mu\opN))]/\Xi \, , 
\eeqn 
where $\beta>0$ is the inverse temperature, $\mu$ is the
chemical potential, and
$\Xi=\tr[\exp(-\beta(\opH_{\text{hfb}}(\qf_L))-\mu\opN)]$ is the partition
function (the exponential of the negative pressure) of the gas. The quasifree state $\qf_L$
  fulfilling \eqref{fixed-point-Gibbs-quasifree-state} is a solution to Eq.~\eqref{sc-eq-stat} (or a stationary solution of Eq.~\eqref{eq-qf-consistency-0}) for a gas confined to the box
$\Lambda_L$. With regard to the thermodynamic limit, $L\rightarrow \infty$, we note that if the external potantial $V$
vanishes (i.e, for a translation-invariant Hamiltonian),
\[
\qf\big( [\opH_{\text{hfb}}(\qf), \opA] \big) 
\ := \ 
\lim_{L\to \infty} \qf_L\big( [\opH_{\text{hfb}}(\qf_L), \opA] \big)
\ = \ 0 ,
\] 
for any observable $\opA$ localized in a compact region of position space.

Furhtermore, if the external potential $V$ vanishes (the translation-invariant case) one should replace the total energy and the particle number by the energy density and particle density, respectively, in order to study the approach to the thermodynamic limit, 
 $L \rightarrow \infty$. 

 
If $V=0$ and $\g$ and $\s$ are translation-invariant, then the integrand in the energy functional $\mathcal{E}(\phi, \gamma, \sigma)$ (see \eqref{eq:energy}) is the energy density functional introduced in \cite{CritchleySolomon1976} and further studied in \cite{NapiorkoswkiReuversSolovej2018a, NapiorkoswkiReuversSolovej2018b}. 
  It is shown in the latter papers that this functional has minimizers under the constraint of constant entropy- and particle densities. 
In \cite{NapiorkoswkiReuversSolovej2018a, NapiorkoswkiReuversSolovej2018b} it is also shown that a condensate appears in the corresponding minimizers. (To complete the picture one should show that the states thus obtained are stationary solutions to equation~\eqref{eq-qf-consistency-0}.)


In this paper we do not consider the general problem of existence of  static solutions. However, for $V=0$,  we present a result concerning existence of the positive-temperature, $U(1)$-gauge- and translation-invariant HFB Gibbs states, and we show that Bose-Einstein condensation (BEC) occurs above a critical density; see Theorem~\ref{thm:existence-bose-condensate}. 

As mentioned above, for $U(1)$-gauge-invariant quasifree states, $\phi=0$ and $\sigma=0$; and hence HFB Gibbs states with these properties are, in fact, stationary solutions of the bosonic Hartree-Fock equation. Moreover, as the results of \cite{NapiorkoswkiReuversSolovej2018a, NapiorkoswkiReuversSolovej2018b} show, in the BEC regime, these states are not minimizers of the full HFB energy denisty, at fixed values of the entropy- and particle density.
However, the existence of such states exhibiting  Bose-Einstein condensation suggests that there are also $U(1)$-symmetry breaking HFB Gibbs states with  $\phi\neq 0$ and $\sigma\neq 0$.

\subsection{Fixed Point Equation}
Let $\UHFB(t,s)$ denote the unitary propagator on bosonic Fock space $\cF$, see \eqref{Fock},
solving
\begin{align}\label{eqn:evolution-td-quartic-hamiltonian}
    i\partial_{t}\UHFB(t,s) =\HHFB(\qf_t)\, \UHFB(t,s)\,, \qquad \text{ with }\, \UHFB(s,s) & =\1\,, \forall s\,.
\end{align} 
In terms of this propagator, we can rewrite equation \eqref{eq-qf-consistency-0}, with initial condition $\qf_0 =\omega_0$, as 
a fixed point problem, 
\begin{equation} \label{FPproblem}
\qf_{t}=\Phi_{t}(\qf_{(\cdot)}), \qquad \text{with  }\,\, \Phi_{t}(\qf_{(\cdot)})(\opA) :=\qf_0(\UHFB(t,0)^{*}\, \opA \, \UHFB(t,0)),
\end{equation}
for all times $t\in \mathbb{R}$. Since the propagators $\UHFB(t,s)$ are 
generated by families of quadratic Hamiltonians, we have that $\qf_0(\UHFB(t,0)^{*}\, \opA \, \UHFB(t,0))$
 is a quasifree state, for any time $t$. This formulation opens the possibility 
 to prove existence of the quasifree dynamics directly, using a Brouwer-Schauder-type fixed-point theorem, 
 without passing to the truncated expectations $\phi$, $\gamma$ and $\sigma$.

In this paper, we do not study whether the quasi-free effective dynamics  \eqref{eq-qf-consistency-0} (the HFB equations) provide an
accurate approximation to the many-body dynamics \eqref{eq-vNeum-1},
for finite times. There is a large literature concerning the
derivation of the simpler Hartree- and Hartree-Fock equations from many-body dynamics in a limiting (mean-field) regime.
Recently, evolution equations that include linear fluctuations around solutions of the
Hartree equation (i.e., equations arizing from linearization of the HFB equation in
$\gamma$ and $\sigma$) have been derived in
\cite{GrillakisMachedon2013a, LewinNamSchlein2015, NamNapiorkoswki2017, LewinNamRougerie2017, LewinNamRougerie2016};
see \cite{Lewin2015} for a recent review, and \cite{JackiwKerman1979} for an early
contribution. Very recently, it has been brought to our attention that, independently and in a different
framework, equations equivalent to
\eqref{eq:BHF-phi}--\eqref{eq:BHF-sigma} 
are derived for \textit{pure states} in some recent papers
\cite{GrillakisMachedon2013b,GrillakisMachedon2017}. 
For pure quasifree states, the relation $\gamma + \gamma^2 = \sigma \sigma^*$
holds, and equations
\eqref{eq:BHF-phi}--\eqref{eq:BHF-sigma} turn out to be hamiltonian evolution equations.
\subsection{Organization of the paper}
In Section~\ref{sec-HFB-general-1}, we first present the HFB equations, which we derive in Appendix~\ref{sec-HFB-deriv-1}.  We then show that certain conservation laws for the many-body problem imply corresponding conservation laws for the HFB equation.\\
In Section~\ref{sec:symplectic-hamiltonian-structure}, we show that the space of solutions of the HFB equations has a symplectic structure, and that these equations have similarities with Hamiltonian equations of motion. \\
In  Section~\ref{sec:Relation_with_HFB_eigengalues_equations}, we explain how the symplectic version of the HFB equations is related to the HFB eigenvalue equations found in the physics literature. \\
In Section~\ref{sec-GWP-HFB-1}, we prove that the Cauchy problem for the HFB equations is globally well-posed in the ``energy space'', provided that the pair interaction potential is assumed to have suitable regularity properties.
Our proof of global well-posedness is inspired in part by previous
work on the Hartree-Fock equation \cite{BoveDaPratoFano1974, ChadamGlassey1975, BoveDaPratoFano1976,
  Chadam1976, Zagatti1992}.  We note that global existence for the related
time-dependent Bogolubov-de Gennes equations for fermion systems has
recently been established in \cite{BenedikterSokSolovej2018}, using a
similar proof strategy.\\
In Section~\ref{sec-Gibbs-HFB-1}, we prove Bose-Einstein
condensation for stationary states. \\
 A brief summary of the theory of quasifree
states and proofs of various technical lemmata are
collected in Appendices.

\subsection*{Acknowledgements}
The work of S.B.\ is supported by the Basque Government through the
BERC 2014-2017 program, and by the Spanish Ministry of Economy and
Competitiveness MINECO (BCAM Severo Ochoa accreditation SEV-2013-0323,
MTM2014-53850), and the European Union's Horizon 2020 research and
innovation programme under the Marie Sklodowska-Curie grant agreement
No.~660021. The work of T.C.\ is supported by NSF grants DMS-1151414
(CAREER) and DMS-1716198. The work of I.M.S.\ is supported in part by
NSERC Grant No.~NA7901 and SwissMAP Grant.

\section{The HFB equations and their basic properties}
\label{sec-HFB-general-1}

In this section, we formulate the HFB equations for a general pair
potential $v$  and prove the associated conservation laws. The derivation of the HFB equations is done in Appendix \ref{sec-HFB-deriv-1} by applying the quasifree reduction as in the introduction.

\paragraph{{\bf Definition of spaces.}} 
Let $M:=\langle \nabla_{x}\rangle=\sqrt{1-\Delta_{x}}$, 
with $\Delta_{x}$ being the Laplacian in $d$ dimensions. 
We denote by $\mathcal L^1$ and $\mathcal L^2$  the spaces of  trace-class and Hilbert-Schmidt operators on  $L^2(\mathbb R ^d)$ endowed with the trace norms $\|\cdot\|_{\mathcal L^1}$ and $\|\cdot\|_{\mathcal L^2}$.  For $j\in \mathbb N_0$ we define the spaces  
\begin{align}\label{eq-Xjspace-def-1}
X^{j} & =\big\{(\phi,\gamma,\sigma)\in 
H^{j} \times \mathcal{H}^{j}_\g \times \cH^{j}_\s \big\}\,,
\end{align}
with $H^{j}$ being the Sobolev space $H^{j}(\mathbb{R}^{d})$,  
 $\mathcal{H}^{j}_\g:=M^{-j}\mathcal{L}^{1} M^{-j}$, 
and $\mathcal{H}^{j}_\s:=\{\s\in \mathcal L^2: \|M^{j}\s\|_{\mathcal L^2} + \|\s M^{j}\|_{\mathcal L^2} < \infty\}$ with the norms 
\begin{align}\label{norms}\|\phi\|_{H^j}:=\|M^{j}\phi\|_{L^{2}},\  \|\gamma\|_{\mathcal{H}^j_\g}:=\|M^{j}\gamma M^{j}\|_{\mathcal{L}^{1}},\  \|\s\|_{\mathcal{H}^j_\s}:=\|M^{j}\s \|_{\mathcal{L}^{2}} + \|\s M^{j}\|_{\mathcal{L}^{2}}.\end{align}
We endow the spaces $X^{j}$ with the norms \[\|(\phi,\gamma,\sigma)\|_{X^{j}}=\|\phi\|_{H^j} +  \|\gamma\|_{\mathcal{H}^j_\g}+ \|\s\|_{\mathcal{H}^j_\s}\] 

Furthermore, we let  $\cX_T:=C^{0}([0,T); X^{3})\cap C^{1}([0,T); X^{1})$ and 
 we denote by $X^{j}_{\rm qf}$ and $\cX_T^{\rm qf}$ the spaces of quasifree states and families of quasifree states with the $1^{st}$ and $2^{nd}$ order truncated expectations  from the spaces $X^{j}$ and $\cX_T$, respectively. 
 
\begin{remark}  For systems with infinite number of particles and finite density, one could replace $\R^d$ by the torus $\T^d_L:=\R^d/(L \Z)^d$ and then pass to the thermodynamic limit.
\end{remark} 

\DETAILS{We denote by $\mathcal B$ the space of bounded operators on 
$L^2(\mathbb R ^d)$, with the operator norm denoted by $\|\cdot\|$. 
For $j\in \mathbb N_0$ we define the spaces  
\begin{align}\label{eq-Xinfty-jspace-def-1}
X^{j, \infty} & =\big\{(\phi,\gamma,\sigma)\in 
H^{j} \times \cB^{j} \times \cB^{j}: \gamma=\gamma^*\ge 0\ \text{ and }\ \sigma^*=\bar\sigma
 \big\}\,,
\end{align}
with $H^{j}$ the Sobolev space $H^{j}(\mathbb{R}^{d})$ and $\cB^{j}=M^{-j}\mathcal B M^{-j}$. 
The norms on $X^{j, \infty}$ are given by the norms on the Banach spaces, 
$H^{j} \times \cB^{j} \times \cB^{j}$, i.e., with $\|\cdot\|_{L^{2})}\equiv \|\cdot\|_{L^{2}(\mathbb{R}^{d})}$, m
\[
\|(\phi,\gamma,\sigma)\|_{X^{j, \infty}}= \|M^{j}\phi\|_{L^{2}} + \|M^{j}\gamma M^{j}\| + \|M^{j}\sigma M^{j}\|.
\]
(The superindex $\infty$ indicates that we are dealing with bounded operators, as opposed to the trace-class and Hilbert-Schmidt ones appearing later.)

 Moreover, we let  $\cX_T^{\infty}:=C^{0}([0,T); X^{j, \infty})\cap C^{1}([0,T); X^{0, \infty})$,  for a fixed $j$ satisfying $j> d/2$ and $j\ge 2$, and   denote by $X^{j, \infty}_{\rm qf}$ and $\cX_{T, \rm qf}^{\infty}$ 
spaces of quasifree states and families of quasifree states with the $1^{st}$ and $2^{nd}$ order truncated expectations  from the spaces $X^{j, \infty}$ and $\cX_T^{\infty}$, respectively. 

The reason for introducing the spaces $\cB^{j}$ is the following elementary result.
\begin{lemma} \label{lem:d-k-prop}  Any $\al\in\cB^{j}, j>d/2,$ has a bounded, H\"older continuous integral kernel $\al(x, y)$.

Let $v\in L^1$. Then the operator $k : \al \ra v \,\sharp\, \al$, defined through its integral kernel:
\eqn
v \,\sharp\, \al \, (x;y):= v(x, y)\al (x;y) \,,
\eeqn
is bounded  from $\cB^{j}, j>d/2,$  to $\cB^{s}, s<  j-d/2$. \end{lemma}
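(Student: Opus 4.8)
The plan is to prove both assertions from the single structural fact that $\al\in\cB^{j}$ means $\al=M^{-j}BM^{-j}$ with $B:=M^{j}\al M^{j}$ bounded, together with the observation that $M^{-j}=(1-\Delta)^{-j/2}$ is convolution by the Bessel kernel $G_{j}$, whose Fourier transform is $\langle\xi\rangle^{-j}$. Since $j>d/2$ we have $G_{j}\in L^{2}(\R^{d})$, and more precisely $G_{j}\in H^{s'}(\R^{d})$ for every $s'<j-d/2$, because $\int\langle\xi\rangle^{2s'}\langle\xi\rangle^{-2j}\,d\xi<\infty\iff s'<j-d/2$.

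For the kernel statement I would introduce $\phi_{x}:=M^{-j}\delta_{x}=G_{j}(\cdot-x)\in L^{2}$, so that $\|\phi_{x}\|_{2}=\|G_{j}\|_{2}$ is independent of $x$, and write the kernel of $\al$ as the inner product $\al(x,y)=\langle\phi_{x},B\phi_{y}\rangle$. Boundedness is then immediate, $|\al(x,y)|\le\|B\|\,\|G_{j}\|_{2}^{2}$. For H\"older continuity I would use the bilinear bound $|\al(x,y)-\al(x',y')|\le\|B\|\big(\|\phi_{x}-\phi_{x'}\|_{2}\|\phi_{y}\|_{2}+\|\phi_{x'}\|_{2}\|\phi_{y}-\phi_{y'}\|_{2}\big)$ combined with the estimate for the $L^{2}$-modulus of continuity $\|\phi_{x}-\phi_{x'}\|_{2}=\|G_{j}-G_{j}(\cdot-(x-x'))\|_{2}\le C|x-x'|^{s'}$, which follows in Fourier from $|e^{ih\cdot\xi}-1|\le C|h|^{s'}|\xi|^{s'}$ and $G_{j}\in H^{s'}$. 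This yields joint H\"older continuity with any exponent $s'<\min(1,j-d/2)$.

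For the mapping property I would first prove plain $L^{2}$-boundedness by a Schur test: by the previous step the kernel of $v\sharp\al$ obeys $|v(x-y)\al(x,y)|\le\|G_{j}\|_{2}^{2}\|B\|\,|v(x-y)|$, and since $\sup_{x}\int|v(x-y)|\,dy=\sup_{y}\int|v(x-y)|\,dx=\|v\|_{L^{1}}$, this gives $\|v\sharp\al\|\le\|G_{j}\|_{2}^{2}\,\|v\|_{L^{1}}\,\|\al\|_{\cB^{j}}$, i.e. $v\sharp\al\in\cB^{0}=\cB$. To reach $\cB^{s}$ I would then have to control $\|M^{s}(v\sharp\al)M^{s}\|$. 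The natural route is to conjugate the kernel multiplication by a modulation, using $e^{-i\xi\cdot x}M^{s}e^{i\xi\cdot x}=\langle\nabla+\xi\rangle^{s}$, so that the problem reduces to bounding the modulated blocks $\langle\nabla+\xi\rangle^{s}\al\langle\nabla+\xi\rangle^{s}=\big(\langle\nabla+\xi\rangle^{s}M^{-j}\big)B\big(M^{-j}\langle\nabla+\xi\rangle^{s}\big)$; Peetre's inequality $\langle\eta+\xi\rangle^{s}\le C\langle\xi\rangle^{s}\langle\eta\rangle^{s}$ makes each such block bounded by $C\langle\xi\rangle^{2s}\|B\|$ as soon as $s<j$.

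The delicate point — the step I expect to be the main obstacle — is precisely the transfer of the two factors $M^{s}$ across the pointwise multiplication by $v(x-y)$. The factor $M^{s}$ on the left acts on the $x$-variable of the \emph{product} $v(x-y)\al(x,y)$, not on $\al$ alone, so one cannot simply replace $\phi_{x}$ by $M^{s}\phi_{x}$ in the reproducing-kernel representation; controlling the resulting cross terms (equivalently, integrating the modulated blocks against $v$ in a way compatible with $\|v\|_{L^{1}}$) is where the argument is quantitative. It is also where the extra $d/2$ of the statement originates: the relevant smoothed reproducing kernel satisfies $\|M^{s}\phi_{x}\|_{2}=\|G_{j}\|_{H^{s}}<\infty$ exactly when $s<j-d/2$, by the same Bessel-kernel integrability $\int\langle\eta\rangle^{2(s-j)}\,d\eta<\infty$ used in the first paragraph. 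I would therefore organise the proof so that all $s$ derivatives are absorbed into the Bessel factors $M^{-j}$, the loss of $d/2$ being the price of keeping those factors square-integrable, and the $L^{1}$-norm of $v$ supplying the remaining Schur bound.
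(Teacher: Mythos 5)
For the first statement and for boundedness into $\cB^{0}=\cB$, your argument is correct and is essentially the paper's own proof: the paper writes $\al(x,y)=\langle G_x,\al_1 G_y\rangle$ with $G$ the Bessel kernel of order $j$ and $\al_1=M^{j}\al M^{j}$, deduces $|\al(x,y)|\le\|\al_1\|\,\|G\|_{L^2}^2$, and then estimates $\|(v\,\sharp\,\al)f\|_{L^2}\le\|\al(\cdot)\|_{L^\infty}\,\|v\|_{L^1}\,\|f\|_{L^2}$, which is your Schur/Young step; your Fourier-side proof of the H\"older modulus fills in a detail the paper leaves to the reader.

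The step you single out as the main obstacle, however, is not merely delicate: it cannot be carried out under the hypothesis $v\in L^1$ alone, because the mapping property into $\cB^{s}$ is \emph{false} for every $s>0$. Test it on $\al=M^{-2j}\in\cB^{j}$, whose kernel is $G_{2j}(x-y)$, with $G_{2j}$ (the kernel of $M^{-2j}$) bounded and continuous since $2j>d$. Then $v\,\sharp\,\al$ is convolution by $w:=v\,G_{2j}$, so $M^{s}(v\,\sharp\,\al)M^{s}$ is the Fourier multiplier with symbol $\langle\xi\rangle^{2s}\widehat{w}(\xi)$, where $\widehat{w}=c\,\widehat{v}*\langle\cdot\rangle^{-2j}$; boundedness would force $\widehat{w}(\xi)=O(\langle\xi\rangle^{-2s})$. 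But an $L^1$ potential carries no quantitative decay of $\widehat{v}$: in $d=1$, superposing Fej\'er kernels (P\'olya's construction, which represents any even function that is convex and decreasing to $0$ on $(0,\infty)$ as the Fourier transform of a nonnegative $L^1$ function) yields $v\in L^1$, $v\ge 0$, with $\widehat{v}(\xi)=1/\log(e+|\xi|)$. Then $\widehat{w}(\xi)\ge c/\log(e+1+|\xi|)$, so $\langle\xi\rangle^{2s}\widehat{w}$ is unbounded for every $s>0$ (tensorize to reach $d\ge 2$). Hence no rearrangement that ``absorbs the $s$ derivatives into the Bessel factors'' can succeed: the derivatives falling on $v(x-y)$ are exactly what is uncontrolled. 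Indeed, your own modulation computation makes the missing hypothesis explicit: it gives $\|M^{s}(v\,\sharp\,\al)M^{s}\|\le C\,\|\al\|_{\cB^{j}}\int|\widehat{v}(\xi)|\,\langle\xi\rangle^{2s}\,d\xi$, which is fine if $\langle\cdot\rangle^{2s}\widehat{v}\in L^1$ (for instance $v\in W^{2s,1}$), but not for general $v\in L^1$.

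You should not fault yourself for failing to close this step, because the paper does not close it either: its proof establishes only the $s=0$ case and disposes of the rest with the sentence ``Similarly, one shows the estimates involving $M^{s}$,'' and the sole application of the lemma in the paper (showing that the right-hand sides of the HFB equations lie in $X^{0,\infty}$) uses only $s=0$. In short, your proof is complete and coincides with the paper's on the parts of the lemma that are true and actually used; the remaining claim ($0<s<j-d/2$) is not provable as stated, and the obstruction you identified is precisely the reason.
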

\begin{proof} 
Let $G(x)$ be the Fourier transform of the function $(1+|\xi|^2)^{-j/2}, j>d/2$. Then $G\in H^{s}(\mathbb{R}^{d}), s<j-d/2,$ and we can write $\al(x, y)=\lan G_x, \al_1 G_y\ran,$
where $G_x(x'):=G(x-x')$ and $\al_1:=M^j \al M^j$. Since $\al\in\cB^{j}, j>d/2,$ $\al_1$ is a bounded operator and therefore $|\al(x, y)| \le \| G_x\|_{L^2} \|\al_1\| \| G_y\|_{L^2}=\|\al_1\| \| G\|_{L^2}^2$. Similarly, one shows the H\"older continuity.

For the second statement, the result above gives $\| (v \,\sharp\, \al) f\|_{L^2} \le \| \al(\cdot)\|_{L^\infty} \| |v| * |f|\|_{L^2}\le \| \al(\cdot)\|_{L^\infty} \| v\|_{L^1} \| f\|_{L^2}$. Similarly, one shows the estimates involving $M^s$. \end{proof}}


In what follows, we assume Conditions (a) and (b)
stated in the Introduction [see Eqs.~\eqref{cond-a} and \eqref{cond-b}].

\begin{theorem}\label{prop-HFB-vfull-1}
The family of quasifree states $\qf_t\in \cX_{T}^{\rm qf} 
$  satisfies 
\eqn\label{eq-omt-commut-2}
    i\partial_t\qf_t(\opA) = \qf_t([\opA,\opH])
    \;,\;\;
    \forall \; \opA  \in \mathcal{A}^{(2)} \,, \; 
\eeqn
with the Hamiltonian $\opH$ defined in \eqref{model-ham}, if and only if the triple $(\phi_t,\gamma_t,\sigma_t)\in \cX_T 
$ of the  $1^{st}$ and $2^{nd}$ order truncated expectations of $\qf_t$ satisfies the time-dependent
Hartree-Fock-Bogoliubov equations 
\begin{align}
i\partial_{t}\phi_{t} & =h (\gamma_t)\phi_{t}+k (\sigma^{\phi_{t}}_t)\bar{\phi}_{t} \label{eq:BHF-phi}\,,\\
i\partial_{t}\gamma_{t} & =[h (\gamma_t^{\phi_{t}}),\gamma_{t}]+k (\sigma^{\phi_{t}}_t)\sigma_{t}^*-\sigma_t k (\sigma^{\phi_{t}}_t)^* \label{eq:BHF-gamma}\,,\\
i\partial_{t}\sigma_{t} & =[h (\gamma_t^{\phi_{t}}),\sigma_{t}]_{+}+[k (\sigma^{\phi_{t}}_t),\gamma_{t}]_+ +k (\sigma^{\phi_{t}}_t), \label{eq:BHF-sigma}
\end{align}
where $[A_1,A_2]_{+}=A_1A_2^{T}+A_2A_1^{T}$, $\gamma^{\phi}:=\gamma+|\phi\rangle\langle\phi|$ and $\sigma^{\phi}:=\sigma +|\phi\rangle\langle\bar\phi|$, and 
\begin{align}\label{h}
& h (\gamma)  =h+b [\gamma]\,,\  b [\gamma]:= v*d(\gamma)  + v \,\sharp\, \gamma \,, \\
 \label{k}
 & k (\sigma)  =v\,\sharp \, \sigma\,,\ \quad d(\al)(x):=\al(x, x).
\end{align}
\end{theorem}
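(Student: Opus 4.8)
The plan is to evaluate the single identity \eqref{eq-omt-commut-2} on each of the three generators of $\mathcal{A}^{(2)}$ separately and to show, using the canonical commutation relations \eqref{commrel} and Wick's theorem \eqref{eqn:quasi-wick}, that the resulting equations for the two-point functions are exactly \eqref{eq:BHF-phi}--\eqref{eq:BHF-sigma}. The conceptual point that makes the equivalence painless is that a quasifree state is uniquely determined by, and conversely determines, its first- and second-order truncated expectations; hence $\qf_t\mapsto(\phi_t,\gamma_t,\sigma_t)$ is a bijection between $\cX_{T,\mathrm{qf}}^{\infty}$ and $\cX_T^{\infty}$ (indeed this is how $\cX_{T,\mathrm{qf}}^{\infty}$ is defined), and under it the three evaluations of \eqref{eq-omt-commut-2} and the three HFB equations carry identical information. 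Thus the ``if and only if'' is automatic once each evaluation is matched to the corresponding HFB equation, and the whole content of the theorem is the explicit computation of $\qf_t([\opA,\opH])$ for $\opA\in\mathcal{A}^{(2)}$. All manipulations are at the level of expectation values and are legitimate under the standing moment assumptions on the states.

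I would begin with the operator commutators. Writing $\opH$ as the sum of its quadratic and quartic parts, the CCR \eqref{commrel} give, for example,
\[
[\psi(x),\opH]=(h\psi)(x)+\int dy\, v(x-y)\,\psi^{*}(y)\psi(y)\psi(x),
\]
and analogous but longer expressions for $[\psi^{*}(x)\psi(y),\opH]$ and $[\psi(x)\psi(y),\opH]$; the quadratic part of $\opH$ contributes the one-body operator $h$, while the quartic part contributes cubic monomials in $\psi^{\sharp}$. Applying $\qf_t$ and reducing by Wick's theorem, a cubic expectation such as $\qf_t(\psi^{*}(y)\psi(y)\psi(x))$ splits into a direct contraction $d(\gamma_t)(y)\,\phi_t(x)$, an exchange contraction $\gamma_t(x;y)\,\phi_t(y)$, a pairing contraction proportional to $\sigma_t$, and the fully factorized term $|\phi_t(y)|^{2}\phi_t(x)$. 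After convolution with $v$, the direct and exchange contractions assemble into $b[\gamma]$, which together with the $h$-term forms $h(\gamma)\phi$ as in \eqref{h}, while the pairing contraction and the factorized term combine into $k(\sigma^{\phi})\bar\phi$ as in \eqref{k}, the factorized term supplying exactly the $\phi\otimes\phi$ piece that promotes $\sigma$ to $\sigma^{\phi}$. Hence the $\opA=\psi(x)$ evaluation reproduces \eqref{eq:BHF-phi}.

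The main bookkeeping obstacle arises for the two quadratic observables, because the quantity whose time derivative appears on the left is the \emph{full} two-point function, $\qf_t(\psi^{*}(y)\psi(x))=\gamma_t^{\phi_t}(x;y)$ and $\qf_t(\psi(x)\psi(y))=\sigma_t^{\phi_t}(x,y)$, rather than the truncated $\gamma_t,\sigma_t$ themselves. I would therefore differentiate these full kernels, writing $\partial_t\gamma_t^{\phi_t}=\partial_t\gamma_t+\partial_t(|\phi_t\rangle\langle\phi_t|)$ and $\partial_t\sigma_t^{\phi_t}=\partial_t\sigma_t+\partial_t(\phi_t\otimes\phi_t)$, and substitute the already-derived equation \eqref{eq:BHF-phi} to evaluate the condensate contributions. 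Subtracting these is precisely what converts the equations for the full kernels into the stated equations for $\gamma_t$ and $\sigma_t$; it accounts for the appearance of the truncated $\gamma_t$ as the second entry of the commutator in \eqref{eq:BHF-gamma}, and, together with the symmetry $\sigma^{*}=\bar\sigma$ and the transposition $A^{T}=CA^{*}C$, it produces the symmetric bracket $[\,\cdot\,,\,\cdot\,]_{+}$ of \eqref{eq:BHF-sigma}. Organizing the quartic Wick contractions into direct, exchange and pairing classes while tracking every condensate cancellation is where essentially all of the labor lies; this is the step I expect to be delicate.

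Finally I would check that the nonlinearities are well defined on $\cX_T^{\infty}$, so that both systems live in the asserted spaces. By Lemma~\ref{lem:d-k-prop}, $d(\gamma)$ is a bounded continuous function and the maps $\gamma\mapsto v\sharp\gamma$ and $\sigma\mapsto k(\sigma)=v\sharp\sigma$ send $\cB^{j}$ boundedly into $\cB^{s}$; hence $b[\gamma]$ and $k(\sigma^{\phi})$ define operators of the regularity needed for the right-hand sides of \eqref{eq:BHF-phi}--\eqref{eq:BHF-sigma} to be consistent with the spaces built into $\cX_T^{\infty}$. With this the correspondence $\qf_t\in\cX_{T,\mathrm{qf}}^{\infty}\leftrightarrow(\phi_t,\gamma_t,\sigma_t)\in\cX_T^{\infty}$ is meaningful and the equivalence is complete.
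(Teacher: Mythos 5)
Your proposal is correct, and its overall skeleton coincides with the paper's: reduce \eqref{eq-omt-commut-2} to the three evaluations on $\mathcal A^{(2)}$, relate the full two-point functions to the truncated ones (so that the $\gamma$- and $\sigma$-equations acquire the subtraction terms $i\partial_t(\phi_t(x)\overline{\phi_t(y)})$ and $i\partial_t(\phi_t(x)\phi_t(y))$), compute the commutators via the CCR, and Wick-reduce; your treatment of the $\opA=\psi(x)$ case reproduces the paper's verbatim, including the grouping of direct/exchange contractions into $b[\gamma]$ and of the pairing plus factorized terms into $k(\sigma^{\phi})\bar\phi$. Where you genuinely diverge is in the quadratic observables: the paper does \emph{not} expand the non-centered quartic expectations with Wick's theorem for states with nonzero means. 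Instead it conjugates by the time-dependent Weyl operators, passing to the centered quasifree state $\om_{C,t}^{q}(\opA):=\qf_t(W_{\phi_t}\opA W_{\phi_t}^*)$, which has vanishing first moments and hence kills all odd monomials; the equations become $i\partial_t\gamma_t(x;y)=\om^q_{C,t}([\psi^*(y)\psi(x),W_{\phi_t}^*\opH W_{\phi_t}])$ (and similarly for $\sigma_t$), the time derivative of $W_{\phi_t}$ automatically absorbing exactly the condensate subtraction that you propose to perform by hand via substitution of \eqref{eq:BHF-phi}. The trade-off is real: the paper's centering means only pair contractions survive in the quartic expectations and the cancellations you call ``delicate'' never appear, at the cost of computing the conjugated Hamiltonian $W_{\phi_t}^*\opH W_{\phi_t}$ modulo odd and constant terms; your route avoids Weyl operators entirely and is more elementary, but each quartic monomial then Wick-expands into roughly ten terms (partitions into singletons and pairs), and you must verify term-by-term that the condensate contributions cancel against $i\partial_t(\phi_t\bar\phi_t)$ and $i\partial_t(\phi_t\phi_t)$ — feasible, but substantially heavier bookkeeping, and it is precisely the step your proposal leaves unexecuted. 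One small overstatement: the correspondence $\qf_t\mapsto(\phi_t,\gamma_t,\sigma_t)$ is injective but not a bijection onto $\cX_T^{\infty}$ (surjectivity would require the positivity constraint $\Gamma\geq 0$ of Proposition \ref{prp-gampos-sigsym-1}); this does not affect your argument, since both directions of the equivalence start from a state and its truncated expectations, so only injectivity and the algebraic dictionary between full and truncated kernels are used.
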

If $v=g\delta$, $h (\gamma)$ agrees with $h_{g \delta}(\gamma)$ in \eqref{eq-nmwh-notations-0},
and $k (\s)$ agrees with the   multiplication operator by $g\ d(\s)(x)$ in \eqref{eq-nmwh-notations-0}.

Due to 
the fact that $h(\gamma_t)$ is $\Delta-$bounded, for each $t>0$, the r.h.s. of \eqref{eq:BHF-phi} - \eqref{eq:BHF-sigma} belongs to the space $X^{0}$. 
 The proof of Theorem \ref{prop-HFB-vfull-1} is given in Appendix \ref{sec-HFB-deriv-1}.

%
We now show that equations \eqref{eq-vNeum-quasifree} (or \eqref{eq:BHF-phi} to~\eqref{eq:BHF-sigma}) and \eqref{eq-qf-consistency-0} describing the quasifree dynamics are equivalent.

For a quasifree state $\qf$ with $1^{st}$ and $2^{nd}$ order truncated expectations $(\phi,\gamma,\sigma)\in \XL$, we define the quadratic Hamiltonian parametrized by $(\phi,\gamma,\sigma)$ as 
\begin{align}\label{eq-HHFB-def-1}
    \HHFB(\qf) 
    & =\int \psi^{*}(x)h_v(\gamma)\psi(x)\, dx 
    \nonumber \\
    & \quad -\int \psi^{*}(x) b [|\phi \rangle\langle\phi|]\phi (x)\, dx+h.c.  
    \nonumber \\
    & \quad+\frac{1}{2}\int\psi^{*}(x)k (\sigma)\psi^{*}(x)\, dx+h.c. \,.
\end{align}

\begin{theorem}
\label{thm-Hquadr-dyn-1}
%
Equation \eqref{eq-vNeum-quasifree} is 
 equivalent to the nonlinear, self-consistent evolution equation 
\eqn\label{eq-omt-commut-3}
    i\partial_t \qf_t(\opA) = \qf_t ({[\opA, \HHFB(\qf_t)]}) \,,
\eeqn
defined for all observables $\opA$. The equivalence holds for observables linear or quadratic in creation- and annihilation operators.
 
 Moreover, runcated expectations  $(\phi_t,\gamma_t,\sigma_t)\in \cX_T$ 
 satisfy the HFB equations \eqref{eq:BHF-phi} to~\eqref{eq:BHF-sigma} if and only if 
the corresponding  quasifree state $\qf_t\in \cX_{T}^{ \rm qf}$ satisfies \eqref{eq-omt-commut-3}.\end{theorem}
The proof of Theorem \ref{thm-Hquadr-dyn-1} is given in Appendix \ref{sec-HFB-equiv-1}.

We now prove the conservation laws for the number of particles (or more generally, for any observable commuting with the Hamiltonian $\mathbb H$ which is quadratic
with respect to creation and annihilation operators), and for the energy.

\begin{theorem}
\label{thm:preservation-particle-number}
Assume that an observable $\opA\in \cA^{(2)}$ satisfies  $[\opH,\opA]=0$. 
 Then $\qf_t(\opA)$ is   conserved:
 \eqn 
    \qf_t(\opA) = \qf_0(\opA) \;\;\;\;\forall\;t\in\R\,.
\eeqn
\end{theorem}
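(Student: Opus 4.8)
The plan is to read off the result directly from the defining evolution equation of the quasifree dynamics, which by Theorem~\ref{prop-HFB-vfull-1} is equation~\eqref{eq-omt-commut-2}. The essential observation is that this equation is guaranteed to hold precisely for observables in $\cA^{(2)}$, which is exactly the class to which the hypothesis restricts $\opA$. So no additional machinery---neither the explicit HFB equations~\eqref{eq:BHF-phi}--\eqref{eq:BHF-sigma} nor the quadratic Hamiltonian $\HHFB$---should be needed. In fact, routing through the equivalent formulation~\eqref{eq-omt-commut-3} would be counterproductive, since $[\opA,\HHFB(\qf_t)]$ need not vanish even when $[\opH,\opA]=0$; the full Hamiltonian $\opH$ must appear on the right-hand side for the commutator to die.

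First I would invoke~\eqref{eq-omt-commut-2} with the given $\opA$, obtaining $i\partial_t \qf_t(\opA) = \qf_t([\opA,\opH])$. Next, using the antisymmetry of the commutator together with the hypothesis $[\opH,\opA]=0$, I would note that $[\opA,\opH] = -[\opH,\opA] = 0$ as an operator identity. Consequently $\qf_t([\opA,\opH]) = \qf_t(0) = 0$, so that $\partial_t \qf_t(\opA) = 0$ for all $t$. Integrating in time over $[0,t]$ then yields $\qf_t(\opA) = \qf_0(\opA)$, which is the claim; the same argument run backwards covers $t<0$.

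The one point that warrants care---and really the only place the hypothesis $\opA\in\cA^{(2)}$ is used---is that~\eqref{eq-omt-commut-2} holds only for at most quadratic observables. Although $[\opA,\opH]$ is a priori a polynomial of degree up to four in the creation and annihilation operators (since $\opH$ is quartic), we apply the state $\qf_t$ to it only after establishing that it vanishes identically as an operator; thus its quartic character is immaterial and $\qf_t([\opA,\opH])$ is unambiguously zero. If one wishes to conserve a quantity such as $\opN$ that is an integral or linear combination of elements of $\cA^{(2)}$ rather than a single monomial, the identity extends by linearity of both $\qf_t$ and the commutator. I do not anticipate a genuine obstacle here: the content of the theorem is essentially that the quasifree reduction retains, verbatim, the commutator structure responsible for the conservation laws of the full von~Neumann--Landau dynamics~\eqref{eq-vNeum-1}.
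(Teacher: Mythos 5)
Your proof is correct and follows exactly the same route as the paper's own (one-line) argument: apply equation~\eqref{eq-omt-commut-2}, which holds precisely for $\opA\in\cA^{(2)}$, use $[\opA,\opH]=0$ to kill the right-hand side, and integrate in time. The additional remarks you make---that the equivalent formulation~\eqref{eq-omt-commut-3} would not work and that the quartic nature of $[\opA,\opH]$ is irrelevant once it vanishes as an operator---are sound and simply spell out what the paper leaves implicit.
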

\prf
This follows from \eqref{eq-omt-commut-2} for $\opA$ of order up to two, with $[\opA,\opH]=0$. 
\endprf
To draw some consequences from this result we need to define additional spaces.



\begin{corollary} \label{cor-partnumb-en-conserv} 
  Let $\qf_t\in \cX_T^{\rm qf}$ solve \eqref{eq-omt-commut-2} (or \eqref{eq-omt-commut-3}). Then the number of particles $\cN(\phi_t,\gamma_t,\sigma_t)=\qf_t(\opN)$ and the energy $\qf_t(\opH)$ are conserved. 
 \end{corollary}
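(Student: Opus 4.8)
The plan is to treat the two conservation laws separately: the particle number follows from the abstract conservation principle of Theorem~\ref{thm:preservation-particle-number}, while the energy follows from the self-consistent formulation of Theorem~\ref{thm-Hquadr-dyn-1}.

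For the particle number, I would argue that $\opN=\int\psi^{*}(x)\psi(x)\,dx$ is an (at most) quadratic observable, namely the diagonal smearing of the elements $\psi^{*}(x)\psi(y)\in\cA^{(2)}$, so that the evolution law \eqref{eq-omt-commut-2} applies to it by linearity (integrating the identity for $\psi^{*}(x)\psi(y)$ along $x=y$). The Hamiltonian $\opH$ in \eqref{model-ham} is $U(1)$-gauge invariant: each of its two terms contains equally many creation and annihilation operators, whence $[\opH,\opN]=0$. Theorem~\ref{thm:preservation-particle-number} then gives $\qf_t(\opN)=\qf_0(\opN)$ at once. As a cross-check one can differentiate $\cN(\phi_t,\gamma_t,\sigma_t)=\tr\gamma_t+\|\phi_t\|_{L^{2}}^{2}$ directly: the commutator $[h(\gamma^{\phi_t}_t),\gamma_t]$ is traceless, and the remaining contributions from \eqref{eq:BHF-phi} and \eqref{eq:BHF-gamma} combine into $2\,\mathrm{Im}$ of a manifestly real quantity, hence vanish.

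For the energy, $\opH$ is quartic and therefore not (at most) quadratic, so Theorem~\ref{thm:preservation-particle-number} does not apply directly; instead I would exploit that $\HHFB(\qf)$ is the functional gradient of the energy functional $\mathcal{E}(\phi,\gamma,\sigma):=\qf(\opH)$. Writing $\mathcal{E}$ in its explicit Wick-reduced form (as in \eqref{eq-energy-delta-1} for the general potential) and using that the coefficients of the quadratic Hamiltonian \eqref{eq-HHFB-def-1} are exactly the variational derivatives of $\mathcal{E}$ with respect to $(\phi,\gamma,\sigma)$, the chain rule yields the key identity
\begin{equation}\label{eq-grad-identity-plan}
    \frac{d}{dt}\qf_t(\opH)=\big(\partial_t\qf_t\big)\big(\HHFB(\qf_t)\big),
\end{equation}
where $(\partial_t\qf_t)(\opB):=\tfrac{d}{ds}\qf_s(\opB)\big|_{s=t}$ denotes the rate of change of the state applied to the mean-field Hamiltonian frozen at time $t$. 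The factor-$\tfrac12$ mismatch between $\mathcal{E}$ and $\qf_t(\HHFB(\qf_t))$ (the usual mean-field double counting) does not enter \eqref{eq-grad-identity-plan}, since differentiating the state against a frozen coefficient reproduces the same factor as the product rule applied to the quadratic energy terms. Since $\HHFB(\qf_t)$ is quadratic, I would then insert it as $\opA$ into the self-consistent evolution law \eqref{eq-omt-commut-3}, obtaining
\begin{equation}\label{eq-selfcomm-plan}
    \big(\partial_t\qf_t\big)\big(\HHFB(\qf_t)\big)=\tfrac1i\,\qf_t\big([\HHFB(\qf_t),\HHFB(\qf_t)]\big)=0,
\end{equation}
so that $\qf_t(\opH)$ is constant. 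Alternatively, the same conclusion follows by differentiating $\mathcal{E}(\phi_t,\gamma_t,\sigma_t)$, substituting \eqref{eq:BHF-phi}--\eqref{eq:BHF-sigma}, and cancelling terms via cyclicity of the trace together with $\gamma=\gamma^*$, $\sigma^*=\bar\sigma$ and $v(x)=v(-x)$: the commutator brackets are traceless and the direct, exchange and pairing contributions cancel pairwise.

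I expect the main obstacle to be analytic rather than algebraic: one must justify that $t\mapsto\qf_t(\opH)$ is differentiable on the regularity class $\cX_T$ and that the gradient identity \eqref{eq-grad-identity-plan} holds rigorously when $\gamma_t$ is trace-class and $\sigma_t$ is Hilbert-Schmidt, which requires the convergence of all the traces and pairings defining $\mathcal{E}$ and its derivative, as well as the legitimacy of interchanging $\tfrac{d}{dt}$ with these traces. This is precisely where the hypothesis on $v$ from Theorem~\ref{thm-Hquadr-dyn-1} enters: it guarantees that $\HHFB(\qf_t)$ is a well-defined self-adjoint quadratic Hamiltonian, that \eqref{eq-omt-commut-3} is available for $\opA=\HHFB(\qf_t)$, and that the energy functional is finite and continuously differentiable along the flow.
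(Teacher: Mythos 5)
Your proposal is correct, and its particle-number half is exactly the paper's own argument: the corollary is placed immediately after Theorem~\ref{thm:preservation-particle-number} precisely because it follows by applying that theorem to the quadratic, $\opH$-commuting observable $\opN$, with the same smearing/linearity step you supply (note that the individual monomials $\psi^*(x)\psi(y)$ do \emph{not} commute with $\opH$; only their diagonal integral does, so that step is genuinely needed). For the energy, the paper states the corollary without a written proof; the only detailed energy-conservation argument it contains is Proposition~\ref{prop:quasi-hamiltonian-structure}, which proceeds quite differently: it parametrizes $(\gamma_t,\sigma_t)$ through a time-dependent symplectomorphism $\mathcal{U}_t$ diagonalizing the generalized density matrix $\Gamma_t$ (Proposition~\ref{prop:symplectic-structure}), rewrites $\mathcal{E}(\phi_t,\gamma_t,\sigma_t)=\mathcal{H}_{\gamma_0^\prime}(\phi_t,u_t,v_t)$, and shows $\frac{d}{dt}\mathcal{H}_{\gamma_0^\prime}=0$ by explicit computation using the relations \eqref{eq:def-symplecto-alternative}. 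Your route --- the gradient identity $\frac{d}{dt}\qf_t(\opH)=(\partial_t\qf_t)(\HHFB(\qf_t))$ followed by inserting the \emph{frozen} quadratic observable $\HHFB(\qf_t)$ into \eqref{eq-omt-commut-3} to get $\frac{1}{i}\qf_t([\HHFB(\qf_t),\HHFB(\qf_t)])=0$ --- is shorter, stays entirely at the level of states, and needs \eqref{eq-omt-commut-3} only on quadratic observables, which is exactly what Theorem~\ref{thm-Hquadr-dyn-1} supplies; what it costs is an honest verification of the gradient identity, i.e., that the coefficients of $\HHFB$ are the variational derivatives of $\mathcal{E}$. That verification does go through, but only with the coefficients $h(\gamma^{\phi})$ and $k(\sigma^{\phi})$ actually used in Appendix~\ref{sec-HFB-equiv-1}; with \eqref{eq-HHFB-def-1} as literally printed (quadratic part $h_v(\gamma)$, pairing part $v\,\sharp\,\sigma$) the identity fails, because the linear term $-b[|\phi\rangle\langle\phi|]\phi\,\psi^{*}+h.c.$ is there precisely to cancel the time derivative of the condensate self-interaction correction $-\frac{1}{2}\tr\big[b[|\phi\rangle\langle\phi|]\,|\phi\rangle\langle\phi|\big]$, and this cancellation requires the $|\phi\rangle\langle\phi|$-dressed quadratic coefficients. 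Finally, your closing ``alternative'' (differentiate \eqref{eq:energy}, substitute \eqref{eq:BHF-phi}--\eqref{eq:BHF-sigma}, and cancel by cyclicity and the symmetries of $\gamma$, $\sigma$, $v$) is the most self-contained version and is essentially what the paper appeals to in the proof of Theorem~\ref{thm:existence-uniquenes-up-to-Coulomb}, part~(\ref{ite:conservation-laws}).
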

 
\begin{theorem} \label{prop-en-vfull-1} 
Let 
 $\qf\in X^{\rm qf}$.
Then  the energy $\qf(\opH) = \mathcal{E}(\phi,\gamma,\sigma)$ is given explicitly as 
\begin{multline}
    \mathcal{E}(\phi,\gamma,\sigma)=\tr[h(\gamma+|\phi\rangle\langle\phi|)]
    +\tr[b[|\phi\rangle\langle\phi|]\gamma]\\
    +\frac{1}{2}\tr[b[\gamma]\gamma]+\frac{1}{2}\int v(x-y)|\sigma(x,y)
    +\phi(x)\phi(y)|^{2}dxdy\,.
    \label{eq:energy}
\end{multline}  
\end{theorem}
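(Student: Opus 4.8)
The plan is to compute $\qf(\opH)$ directly from the definition \eqref{model-ham}, splitting $\opH$ into its one-body and two-body parts, and using that $\qf$ is quasifree to reduce every correlation function to the truncated expectations $(\phi,\gamma,\sigma)$ via Wick's formula \eqref{eqn:quasi-wick}. For the one-body part $\int dx\,\psi^*(x)h\psi(x)$, I would write its expectation as $\int dx\,[h_x\,\qf(\psi^*(y)\psi(x))]_{y=x}$. By \eqref{omq-mom} the two-point function $\qf(\psi^*(y)\psi(x))$ is precisely the integral kernel of $\gamma+|\phi\rangle\langle\phi|=\gamma^{\phi}$, so recognizing $\int [h_x A(x;y)]_{y=x}\,dx=\tr[hA]$ immediately yields the first term $\tr[h(\gamma+|\phi\rangle\langle\phi|)]$ of \eqref{eq:energy}.

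The heart of the computation is the two-body part, where I would apply \eqref{eqn:quasi-wick} to the four-point function $\qf(\psi^*(x)\psi^*(y)\psi(x)\psi(y))$. Since $\qf$ is quasifree, only partitions into singletons and pairs survive: one all-singletons contribution $|\phi(x)|^2|\phi(y)|^2$, the six one-pair-plus-two-singletons contributions, and the three two-pair contributions. Reading off the one- and two-point truncated expectations from \eqref{omq-mom}, I obtain an explicit integrand whose pieces are $|\sigma(x,y)|^2$, $n(x)n(y)$ and $\gamma(x;y)\gamma(y;x)$ (with $n=d(\gamma)$), together with the mixed terms $n(x)|\phi(y)|^2$, $\gamma(y;x)\phi(x)\bar\phi(y)$, and $\bar\sigma(x,y)\phi(x)\phi(y)$ and their conjugates/reflections.

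Then I would multiply by $\tfrac12 v(x-y)$, integrate, and reorganize using the evenness $v(x-y)=v(y-x)$ to collapse the symmetric pairs. With the definitions \eqref{h}--\eqref{k} of $b[\cdot]$ and $v\sharp$, the purely-$\gamma$ terms $\tfrac12\int v\,n(x)n(y)+\tfrac12\int v\,\gamma(x;y)\gamma(y;x)$ assemble into $\tfrac12\tr[b[\gamma]\gamma]$; the cross terms $\int v\,n(x)|\phi(y)|^2+\int v\,\gamma(y;x)\phi(x)\bar\phi(y)$ assemble into $\tr[b[|\phi\rangle\langle\phi|]\gamma]$; and the remaining $|\sigma|^2$, $\sigma$--$\phi$, and $|\phi|^2|\phi|^2$ terms combine, upon expanding the square, into $\tfrac12\int v(x-y)|\sigma(x,y)+\phi(x)\phi(y)|^2\,dx\,dy$, giving exactly \eqref{eq:energy}.

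I expect the main difficulty to be pure bookkeeping: tracking all ten Wick contractions while keeping straight the kernel convention for $\gamma$ (which slot carries the creation operator), the complex conjugations (e.g.\ $\qf(\psi^*(x)\psi^*(y))=\bar\sigma(x,y)$), and the distinction between $\phi\otimes\phi$ and $|\phi\rangle\langle\phi|$, and then correctly re-identifying each group of integrals as a trace against $b[\gamma]$ or $b[|\phi\rangle\langle\phi|]$. A secondary point needing care is the justification that every trace and integral is finite for $\qf\in X^{\rm qf}$; here I would invoke that $\gamma$ is trace-class and $\sigma$ Hilbert--Schmidt, the mapping properties of $v\sharp$ from Lemma~\ref{lem:d-k-prop}, and the hypothesis $v^2\le CM^2$ from Theorem~\ref{thm-Hquadr-dyn-1}, which together control $\tr[h\gamma^\phi]$ and the interaction traces.
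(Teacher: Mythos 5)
Your proposal is correct: the Wick expansion you describe (one all-singleton partition, six pair-plus-two-singletons partitions, three two-pair partitions, ten in total) reproduces exactly the terms of \eqref{eq:energy}, and your identifications of $\tfrac12\tr[b[\gamma]\gamma]$, $\tr[b[|\phi\rangle\langle\phi|]\gamma]$ and the completed square $\tfrac12\int v\,|\sigma+\phi\otimes\phi|^2$ are the right groupings (note that since $\opH$ is normal-ordered, no $\delta$-contractions of the form $\om^T(\psi\,\psi^*)$ ever arise, so your bookkeeping closes). The paper organizes the same computation differently: instead of applying the truncated-expectation formula \eqref{eqn:quasi-wick} directly to $\qf$, it first conjugates by the Weyl operator $W_\phi$, passing to the centered state $\om^q_C(\opA):=\om^q(W_\phi \opA W_\phi^*)$, which is again quasifree but has vanishing one-point function, hence vanishes on all odd monomials. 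The energy is then $\om^q_C(W_\phi^*\opH W_\phi)$, where the shift $W_\phi^*\psi W_\phi=\psi+\phi$ makes the $\phi$-dependence explicit in the Hamiltonian, and the remaining quartic and quadratic expectations require only pair contractions against $(0,\gamma,\sigma)$. The two routes are algebraically equivalent --- the Weyl shift implements precisely what your singleton contractions do --- but the centering device reduces the ten partitions to a handful of pair contractions and is reused elsewhere in the paper (it is the same trick that drives the derivation of the HFB equations in Appendix~\ref{sec-HFB-deriv-1} and the positivity argument in Proposition~\ref{prp-gampos-sigsym-1}), whereas your version is more elementary and self-contained, at the price of heavier combinatorial bookkeeping. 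Your closing remark on finiteness (via $\gamma$ trace-class, $\sigma$ Hilbert--Schmidt, $v^2\le CM^2$ and Lemma~\ref{lem:d-k-prop}) addresses a point the paper's proof passes over silently, which is a welcome addition.
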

\prf
We use 
\eqn
    \om_{C}^{q}(\mathbb{A}) :=\om^{q}(W_{\phi}\opA W_{\phi}^{*}) \,,
\eeqn
where the Weyl operators are defined through 
$W_{\phi}=\exp\big(\psi^{*}(\phi)-\psi(\phi)\big)$ and satisfy 
\begin{align}
    W_{\phi}^*\psi(x)W_{\phi}=\psi(x)+\phi(x) \, .
\end{align}
Note that the state $\om_{Ct}^{q}$ is quasifree because $\om^{q}$
is quasifree. By construction $\om_{C}^{q}(\psi(x))=0$ and thus using \eqref{eqn:quasi-wick} and the quasifreeness of $\om_{C}^{q}$ one sees that 
$\om_{C}^{q}$ vanishes on monomials of odd order in the creation and annihilation operators. 
Note that $\mathcal{E} (\phi,\gamma,\sigma)
 = \om_{C}^{q}(W_{\phi}^*\opH W_{\phi})$, hence using the vanishing on monomials of odd order in the creation and annihilation operators
\begin{align*}
\mathcal{E} (\phi,\gamma,\sigma)
& =
\om_{C,t}^{q}\Big(\int  v(x-y)\psi^*(x)\psi^*(y)\psi(x)\psi(y)dxdy\\
 & \quad + \frac{1}{2}\big(\int  v(x-y)\phi_t(x)\phi_t(y)\psi^*(x)\psi^*(y)dxdy +h.c.\big)\\
& \quad +\int \big(h+b[|\phi \rangle\langle\phi_t|]\big)(x;y) \psi^*(x)\psi(y)dxdy \Big)\\
& \quad + \frac{1}{2}\int |\phi (x)\phi (y)|^{2}v(x-y)dxdy +\langle \phi, h\phi \rangle \,.
\end{align*}
Then, using that $\om_{C}^{q}$ is a quasifree state with expectations $(0,\gamma,\sigma)$ yields
\begin{align*}
\mathcal{E}  (\phi ,\gamma,\sigma)
& = \frac{1}{2}\tr[b[\gamma]\gamma]+\frac{1}{2}\int\overline{\sigma (x,y)}v(x-y)\sigma (x,y)dxdy)\\
& \quad + \Re\Big(\int\overline{\sigma (x,y)}v(x-y)\phi (x)\phi (y)dxdy\Big)\\
& \quad +\tr[(h+b[|\phi \rangle\langle\phi|])\gamma]
+ \frac{1}{2}\int |\phi (x)\phi (y)|^{2}v(x-y)dxdy +\langle \phi, h\phi\rangle
\end{align*}
which gives the expression of the energy in terms of $\phi $, $\gamma $ and $\sigma $.
\endprf

\section{Generalized One-particle Density Matrix and Bogolubov Transforms}
\label{sec:symplectic-hamiltonian-structure}

In this section, we consider  the HFB equations \eqref{eq:BHF-gamma} - \eqref{eq:BHF-sigma} for $\gamma_t$ and $\sigma_t$ and reformulate them in terms the  generalized one-particle density matrix 
$\Gamma_t=\big(\begin{smallmatrix}\gamma_t & \sigma_t\\
\bar{\sigma_t} & 1+\bar{\gamma_t} \end{smallmatrix}\big)$. 
We show that the  diagonalizing maps for $\Gamma_t$ are symplectomorphisms (see below for the definition) and that the resulting equation for $\Gamma_t$ is equivalent to the evolution equation for these symplectomorphisms. The latter will allow us to (a)  give another proof of the conservation of  energy without using to the second quantization framework and (b) connect  the time-dependent HFB equations \eqref{eq:BHF-gamma} - \eqref{eq:BHF-sigma} to the  time-independent HFB equations used in the physics literature. See Section~\ref{sec:Relation_with_HFB_eigengalues_equations}.

We begin by relating properties of $\Gamma=\big(\begin{smallmatrix}\gamma & \sigma\\
\bar{\sigma} & 1+\bar{\gamma}
\end{smallmatrix}\big)$ to those of $\gamma$ and $\sigma$.

\begin{proposition} \label{prp-gampos-sigsym-1}
The generalized one-particle density matrix, $\Gamma$, satisfies:
\begin{align}\label{eqn:positivity-Gamma}
\Gamma=\begin{pmatrix}
\gamma & \sigma\\
 \bar{\sigma} & 1+\bar \gamma
\end{pmatrix}\geq 0 \,.
\end{align}
This property is equivalent to the following statements:
\begin{enumerate}
\item
The operator $\gamma\geq0$ is positive semidefinite.
\item
The expectation $\sigma(x,y)=\sigma(y,x)$ is symmetric.
\item \label{point:operator-inequality-gamma-sigma}
The inequality $\sigma (1+\bar \gamma)^{-1} \sigma^* \leq \gamma $ holds true in the sense of quadratic forms.
\item
The bound 
$\frac{1}{2}\|\sigma\|^2_{\Hsdd}\leq \|\gamma\|_{\cHLd}(1+ \tr[\gamma])$ 
holds true.
\end{enumerate}
 (Statement (4) follows from (1) and (3) and is given here for later convenience of references.) \end{proposition}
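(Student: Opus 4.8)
The plan is to deduce statement (4) from (1) and (3) by turning both sides of the claimed inequality into trace expressions and comparing them through a single operator bound on $\sigma\sigma^*$. First I would unfold the norm on the left. By the definition of the $\Hsdd$ norm, $\|\sigma\|_{\Hsdd}^2 = \langle \sigma, (M^2\otimes\1 + \1\otimes M^2)\sigma\rangle = \|(M\otimes\1)\sigma\|_{L^2}^2 + \|(\1\otimes M)\sigma\|_{L^2}^2$. Invoking the symmetry $\sigma(x,y)=\sigma(y,x)$ of statement (2), the two summands coincide, so $\tfrac12\|\sigma\|_{\Hsdd}^2 = \|(M\otimes\1)\sigma\|_{L^2}^2$. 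Identifying $(M\otimes\1)\sigma$ with the kernel of the operator $M\sigma$, this is the squared Hilbert--Schmidt norm $\|M\sigma\|_{\mathcal{L}^2}^2 = \tr[M\sigma\sigma^* M]$, by cyclicity of the trace.

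Next I would extract the operator inequality. From (1), $\gamma\geq 0$, hence $\bar\gamma\geq 0$ and $0\leq \bar\gamma\leq \|\gamma\|\,\1 \leq (\tr\gamma)\,\1$, so that $1+\bar\gamma\leq (1+\tr\gamma)\,\1$. Inverting (which reverses order on positive operators) gives $(1+\bar\gamma)^{-1}\geq (1+\tr\gamma)^{-1}\,\1$, and conjugating by $\sigma$ (which preserves order) yields $\sigma(1+\bar\gamma)^{-1}\sigma^*\geq (1+\tr\gamma)^{-1}\sigma\sigma^*$. Chaining this with (3), namely $\gamma\geq \sigma(1+\bar\gamma)^{-1}\sigma^*$, produces the clean bound $\sigma\sigma^*\leq (1+\tr\gamma)\,\gamma$.

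Finally, since $0\leq \sigma\sigma^*\leq (1+\tr\gamma)\,\gamma$ and the map $A\mapsto \tr[MAM]$ is monotone on positive operators, I would conclude $\tfrac12\|\sigma\|_{\Hsdd}^2 = \tr[M\sigma\sigma^* M]\leq (1+\tr\gamma)\,\tr[M\gamma M] = (1+\tr\gamma)\,\|\gamma\|_{\cHLd}$, which is exactly (4); here all traces are finite because $\gamma\in\cHLd$ forces $\tr\gamma\leq \|\gamma\|_{\cHLd}<\infty$. The hard part is purely bookkeeping: one must track the direction of the inequalities through inversion and through conjugation by $\sigma$, and must arrange the norm reduction to land on $\sigma\sigma^*$ rather than $\sigma^*\sigma$ so that (3) applies directly. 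No genuine analytic difficulty arises beyond these order-of-operator matters.
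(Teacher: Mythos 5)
Your derivation of statement (4) from (1) and (3) is correct, and for that step it is essentially the paper's own argument: one bounds $\bar\gamma \le \tr[\gamma]\,\1$, inverts to get $(1+\bar\gamma)^{-1} \ge (1+\tr[\gamma])^{-1}\1$, conjugates by $\sigma$, and combines with (3) to obtain $\sigma\sigma^* \le (1+\tr[\gamma])\,\gamma$, after which one inserts $M$ on both sides and takes the trace. You are in fact more careful than the paper's one-line conclusion on two points: you make explicit that the factor $\tfrac12$ comes from the symmetry $\sigma(x,y)=\sigma(y,x)$ (so strictly you invoke (2) alongside (1) and (3); this is harmless, since elements of $\Hsdd$ are symmetric by definition), and you spell out the identification $\tfrac12\|\sigma\|^2_{\Hsdd}=\tr[M\sigma\sigma^*M]$ through the kernel of $M\sigma$, including the finiteness bookkeeping $\tr[\gamma]\le\|\gamma\|_{\cHLd}$.

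However, there is a genuine gap: what you prove is only the parenthetical addendum of the proposition, not the proposition itself. The main content --- the positivity \eqref{eqn:positivity-Gamma} of $\Gamma$ for the truncated expectations of a state, and the equivalence of $\Gamma\ge 0$ with statements (1)--(3) --- is not addressed at all. The paper proves $\Gamma \ge 0$ by passing to the centered state $\om_{C}(\opA):=\om(W_{\phi}\opA W_{\phi}^{*})$, whose second truncated expectations are exactly $\gamma$ and $\sigma$, and observing that for all $f,g\in L^2$,
\[
\Big\langle\begin{pmatrix}f\\ g\end{pmatrix},\Gamma\begin{pmatrix}f\\ g\end{pmatrix}\Big\rangle
=\om_{C}\big((\psi^{*}(f)+\psi(\bar{g}))(\psi(f)+\psi^{*}(\bar{g}))\big)\geq 0\,,
\]
the CCR being responsible for the $1+\bar\gamma$ in the lower-right block. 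For the implication $\Gamma\ge0\Rightarrow(3)$, the paper uses a Schur-complement argument: conjugating $\Gamma$ by $\big(\begin{smallmatrix}1 & -\sigma(1+\bar{\gamma})^{-1}\\ 0 & 1\end{smallmatrix}\big)$ block-diagonalizes it and exhibits $\gamma-\sigma(1+\bar{\gamma})^{-1}\sigma^{*}\ge 0$; the same factorization run backwards gives the converse direction of the equivalence. None of these ideas appear in your proposal, so as a proof of Proposition \ref{prp-gampos-sigsym-1} it is incomplete: it establishes only the implication (1)+(3) (plus the symmetry (2)) $\Rightarrow$ (4).
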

\prf
We remark that the truncated expectations $\gamma$ and $\sigma$ are the expectations of the state 
\[    \om_{C}(\mathbb{A}) :=\om(W_{\phi} \opA W_{\phi}^{*})\]
 where $W_{\phi}=\exp\big(\psi^{*}(\phi)-\psi(\phi)\big)$ are the Weyl operators. $W_{\phi}$ satisfy 
$W_{\phi}\psi(x)W_{\phi}^*=\psi(x)-\phi(x)$. 
The generalized one particle density matrix $\Gamma$ of $\om_{C}$ is non-negative, since, for all $f,g$ in $L^2$,
\eqn\label{Gam-posit}
\Big\langle\begin{pmatrix}f\\
g\end{pmatrix},\begin{pmatrix}\gamma & \sigma\\
\bar{\sigma} & 1+\bar{\gamma}\end{pmatrix}\begin{pmatrix}f\\
g\end{pmatrix}\Big\rangle=\om_{C}\big((\psi^{*}(f)+\psi(\bar{g}))(\psi(f)+\psi^{*}(\bar{g}))\big) \geq 0\,.
\eeqn

Statements (1) and (2) are obvious. The inequality in Point \eqref{point:operator-inequality-gamma-sigma} follows from the Schur complement argument:
\begin{align*} 
0\leq \begin{pmatrix}1 & -\sigma(1+\bar{\gamma})^{-1}\\
0 & 1
\end{pmatrix}
&\begin{pmatrix}\gamma & \sigma\\
\sigma^{*} & 1+\bar{\gamma}
\end{pmatrix}
\begin{pmatrix}1 & -\sigma(1+\bar{\gamma})^{-1}\\
0 & 1
\end{pmatrix}^{*} \\
=&\begin{pmatrix}\gamma-\sigma(1+\bar{\gamma})^{-1}\sigma^{*} & 0\\
0 & 1 + \bar \gamma
\end{pmatrix}\,.
\end{align*} 

Finally, we observe that (1) and (3) and the inequality $\g \le \tr[\gamma] \1$ imply 
  the following bound on $\sigma\sigma^*$, 
$$(1+\tr[\gamma])^{-1}\sigma\sigma^*\leq \sigma (1+\bar{\gamma})^{-1}\sigma^* \leq \gamma.$$ 
Inserting  $M=\sqrt{1-\Delta_x}$ on both sides and taking the trace yields (4). 
\endprf

\paragraph{{\bf Notations.}} 
With the spaces and norms defined after \eqref{eq-Xjspace-def-1} and 
for $j\in \mathbb N_0$ we define the spaces  
\begin{align}\label{Yjspace-def}
Y^{j} & =\mathcal{H}^j_\g \times \mathcal{H}^j_\s,
\end{align}
with the norms on $Y^{j}$ are given by 
\[
\|(\gamma,\sigma)\|_{Y^{j}}=\|\gamma\|_{\mathcal{H}^j}+\|\sigma\|_{H^j_s}.
\]
We also use the spaces $\cY_T:=C^{0}([0,T); Y^{3})\cap C^{1}([0,T); Y^{1})$ and $\tilde \cY_T:=$ the space of  generalized one-particle density matrices, $\Gamma$, with entries in $\cY_T$.

In what follows we fix a number $T>0$ and a family $\phi_t\in C^{0}([0,T);H^{3})\cap C^{1}([0,T); H^1)$ (not necessarily a solution \eqref{eq:BHF-phi}) and do not display it in our notation. 
 A simple computation yields the first result of this section: 
\begin{proposition}\label{prop:GamEvol} 
 $(\gamma_t,\sigma_t)\in \cY_T$ is a solution to the HFB equations \eqref{eq:BHF-gamma} - \eqref{eq:BHF-sigma}  
iff $\Gamma_t=\big(\begin{smallmatrix}\gamma_t & \sigma_t\\
\bar{\sigma_t} & 1+\bar{\gamma_t} \end{smallmatrix}\big)\in \tilde \cY_T$  solves the equation 
\eqn\label{eq:Gam-eq}
i\partial_t\Gamma_t = \mathcal S \Lambda(\Gamma_t) \Gamma_t -  \Gamma_t  \Lambda(\Gamma_t) \mathcal S\,,
\eeqn
with $\Lambda(\Gamma) =  \big(\begin{smallmatrix} 
h(\g^\phi) & k(\s^\phi) \\ \overline{k(\s^\phi)} & \overline{h(\g^\phi)}
\end{smallmatrix}\big)$, 
where, recall, $h (\g)$ and $k (\s)$ are defined in \eqref{h} and \eqref{k}, and $\mathcal S=\big(\begin{smallmatrix} 1 & 0\\ 0 & -1\end{smallmatrix}\big)$. \end{proposition}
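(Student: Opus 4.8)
The plan is to verify the equivalence by expanding both sides of \eqref{eq:Gam-eq} into $2\times2$ blocks and matching them against \eqref{eq:BHF-gamma}--\eqref{eq:BHF-sigma}, exploiting the symmetry constraints on $\gamma_t$ and $\sigma_t$ recorded in \eqref{gam-sig-cond} and Proposition~\ref{prp-gampos-sigsym-1}. Throughout I abbreviate $h:=h(\gamma_t^{\phi_t})$ and $k:=k(\sigma_t^{\phi_t})$, so that $\Lambda(\Gamma_t)=\big(\begin{smallmatrix} h & k\\ \bar k & \bar h\end{smallmatrix}\big)$. First I would observe that, because $\Gamma_t$ has the constrained form with lower blocks $\bar\sigma_t$ and $1+\bar\gamma_t$, and because the lower row of $\Lambda$ is the complex conjugate of the upper one, the $(2,1)$ and $(2,2)$ blocks of $\mathcal S\Lambda\Gamma_t-\Gamma_t\Lambda\mathcal S$ are, up to the sign supplied by the factor $i$, the complex conjugates of the $(1,2)$ and $(1,1)$ blocks. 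Since $i\partial_t\bar\gamma_t=-\overline{i\partial_t\gamma_t}$ and $i\partial_t\bar\sigma_t=-\overline{i\partial_t\sigma_t}$, the two lower-block equations are automatically implied by the two upper-block ones; in particular the flow \eqref{eq:Gam-eq} preserves the ansatz defining $\tilde\cY_T$. It therefore suffices to match the $(1,1)$ and $(1,2)$ blocks.

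Next I would expand the products. A short computation gives the $(1,1)$ block $[h,\gamma_t]+k\bar\sigma_t-\sigma_t\bar k$ and the $(1,2)$ block $h\sigma_t+\sigma_t\bar h+\gamma_t k+k\bar\gamma_t+k$. To identify these with the right-hand sides of \eqref{eq:BHF-gamma}--\eqref{eq:BHF-sigma} I would invoke the elementary identities that follow from the constraints: $\sigma_t^*=\bar\sigma_t$ together with the symmetry $\sigma_t(x,y)=\sigma_t(y,x)$, which in the transpose notation $A^T=CA^*C$ read $\sigma_t^T=\sigma_t$; self-adjointness of $h$ (immediate from $\gamma_t=\gamma_t^*$ and the reality and evenness of $v$), giving $h^T=\bar h$; and the relations $k^*=\bar k$, $k^T=k$, which follow from the symmetry of $\sigma_t^{\phi_t}$ together with $v(x)=v(-x)$ real. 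With these, the $(1,1)$ block becomes $[h,\gamma_t]+k\sigma_t^*-\sigma_t k^*$, which is exactly the right-hand side of \eqref{eq:BHF-gamma}, and the $(1,2)$ block regroups as $(h\sigma_t^T+\sigma_t h^T)+(k\gamma_t^T+\gamma_t k^T)+k=[h,\sigma_t]_+ +[k,\gamma_t]_+ +k$, the right-hand side of \eqref{eq:BHF-sigma}. Every step is reversible, yielding the ``if and only if''.

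The only genuinely delicate point is the bookkeeping of the transpose $A^T=CA^*C$ and of the anticommutator $[A_1,A_2]_+=A_1A_2^T+A_2A_1^T$: one must track which operators are Hermitian ($\gamma_t$, $h$), which are symmetric ($\sigma_t$, $\sigma_t^{\phi_t}$, $k$), and how conjugation interchanges $\sigma_t$ with $\bar\sigma_t$. Once the three identities $\sigma_t^T=\sigma_t$, $h^T=\bar h$ and $k^T=k$ are established, the matching is purely algebraic. Mapping into and out of the spaces $\cY_T$ and $\tilde\cY_T$ is routine: Lemma~\ref{lem:d-k-prop} guarantees that $h(\gamma_t^{\phi_t})$ and $k(\sigma_t^{\phi_t})$ act boundedly between the relevant scales, so both sides of \eqref{eq:Gam-eq} have entries in the required spaces whenever $(\gamma_t,\sigma_t)\in\cY_T$.
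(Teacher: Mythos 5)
Your proof is correct and is precisely the ``simple computation'' the paper alludes to (the paper gives no further details for this proposition): block-expanding $\mathcal S\Lambda(\Gamma_t)\Gamma_t - \Gamma_t\Lambda(\Gamma_t)\mathcal S$, using $\gamma_t^*=\gamma_t$, $\sigma_t^*=\bar\sigma_t$ (equivalently $\sigma_t^T=\sigma_t$), $h^T=\bar h$ and $k^T=k$ to identify the $(1,1)$ and $(1,2)$ blocks with the right-hand sides of \eqref{eq:BHF-gamma}--\eqref{eq:BHF-sigma}, and observing that the lower blocks are the negative complex conjugates of the upper ones, so that they carry no additional information. One minor correction: for the mapping properties on the trace-class/Hilbert--Schmidt scale underlying $\cY_T$ you should invoke Lemma~\ref{prop:continuity_of_B_and_K} rather than Lemma~\ref{lem:d-k-prop}, which concerns the bounded-operator spaces $\cB^{j}$ and assumes $v\in L^1$.
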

To formulate the next result we introduce some definitions.
\begin{definition}\label{def:symplecto-implementable}
Let $\mathfrak h$ denote a complex Hilbert space. 
A bounded linear operator $\mathcal U= \big(\begin{smallmatrix} u & v\\ \bar v & \bar u \end{smallmatrix}\big)$ on $\mathfrak h \times \mathfrak h$ with the property that 
\eqn \label{eq:def-symplecto}
\cU^* \cS \cU=\cS \qquad \text{and} \qquad \cU\cS\cU^* =\cS \,,
\eeqn
with $\mathcal S=\big(\begin{smallmatrix} 1 & 0\\ 0 & -1\end{smallmatrix}\big)$, is called a symplectomorphism.

If, moreover, there exists a unitary transformation $\mathbb U$ on the Fock space, sometimes called implementation of $\mathcal U$, such that
\[
\forall f,g\in\mathfrak h\,,\quad 
\mathbb U [\psi^* (f ) + \psi(\bar g)] \mathbb U^* = \psi^* (uf + vg) + \psi(v \bar f + u \bar g)\,,
\]
 then the symplectomorphism $\mathcal U$ is said to be implementable.
\end{definition}
\begin{remark}
The operator $\cU$ is a symplectomorphism in the sense that it preserves the symplectic form $\im\langle \, \cdot \, ,\cS  \, \cdot  \, \rangle$ on $\mathfrak h \times \mathfrak h$ (i.e. is a canonical map). (In fact, $\cU$ preserves $\langle \, \cdot \, ,\cS  \, \cdot  \, \rangle$.)
\end{remark}
\begin{remark}
The operator $\mathcal U$ is a symplectomorphism if and only if the operator $f\mapsto uf+v\bar f$ is a symplectomorphism on $(\mathfrak h,\im\langle\cdot,\cdot\rangle)$ in the usual sense (i.e., it preserves the symplectic form $\im\langle\cdot,\cdot\rangle$)
\end{remark}
\begin{remark}
The conditions in \eqref{eq:def-symplecto} are equivalent to satisfying the four equations
\begin{align}\label{eq:def-symplecto-alternative}
uu^*-vv^*=\1\,,\quad u^*u-v^T\bar v = \1 \,, \quad u^*v=v^T\bar u\,, \quad uv^T=vu^T \,.
\end{align}
\end{remark}
\begin{remark}
The transformation 
\begin{align}\label{Bog-transf}\forall f,g\in\mathfrak h\,,\quad 
(\psi^* (f ), \psi(\bar f)) \ra ( \psi^* (uf) + \psi(v \bar f), \psi^* (v f) + \psi(u \bar f))
\end{align}
is called the Bogoliubov transformation. It is easy to check that it preserves the CCR iff the  operator $\mathcal U= \big(\begin{smallmatrix} u & v\\ \bar v & \bar u \end{smallmatrix}\big)$ satisfies \eqref{eq:def-symplecto}.
\end{remark}

If $v$ is Hilbert-Schmidt, then  the Bogoliubov transformation \eqref{Bog-transf}  is implementable.  
This condition is referred to as the Shale condition; see \cite{Shale1962}.

For later use, we introduce the Banach space
\[
\SobSymp =\Big\{\begin{pmatrix}a & b\\
\bar{b} & \bar{a}
\end{pmatrix}\,\Big\vert\, 
a\in\mathcal{B}(\Hd)\simeq M\cB M^{-1}\,,\quad 
b\in M\cL^2 M^{-1}\Big\}\,,
\]
endowed with the norm $\big\|\big(\begin{smallmatrix}a & b\\
\bar{b} & \bar{a}
\end{smallmatrix}\big)\big\|_{\SobSymp}=\|a\|_{\mathcal B(\Hd)}+\|b\|_{M\cL^2 M^{-1}}$, using the same identification between operators and kernels as before.

We begin with an auxiliary result: 

\begin{proposition}
\label{thm:diagonalization-with-symplectomorphism} Let $\Gamma=\big(\begin{smallmatrix}\gamma & \sigma\\ \bar{\sigma} & 1+\bar{\gamma} \end{smallmatrix}\big)\in Y^1$  and $\Gamma\geq0$. 
 Then there exist an implementable symplectomorphism
$\mathcal{U}\in\SobSymp$ 
 such that
\[
\Gamma=\mathcal{U}\begin{pmatrix}\gamma' & 0\\
0 & 1+\overline{\gamma'}
\end{pmatrix}\mathcal{U}^{*}\,,
\]
where $0\leq\gamma'\leq\gamma$. The operator $\gamma'$ is unique up to conjugation by a unitary
operator.
\end{proposition}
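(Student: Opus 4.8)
\emph{Strategy.} Since the desired $\mathcal U$ satisfies $\mathcal U\mathcal S\mathcal U^*=\mathcal S$, right-multiplication of $\Gamma=\mathcal U\,\Gamma'\,\mathcal U^*$ (with $\Gamma'=\big(\begin{smallmatrix}\gamma'&0\\0&1+\overline{\gamma'}\end{smallmatrix}\big)$) by $\mathcal S$ and the relation $\mathcal U^*\mathcal S=\mathcal S\mathcal U^{-1}$ from \eqref{eq:def-symplecto} turn the claim into the similarity relation
\[
\Gamma\,\mathcal S \;=\; \mathcal U\,(\Gamma'\mathcal S)\,\mathcal U^{-1},
\qquad
\Gamma'\mathcal S=\begin{pmatrix}\gamma' & 0\\ 0 & -(1+\overline{\gamma'})\end{pmatrix}.
\]
Thus the plan is to diagonalize the (non-self-adjoint) operator $T:=\Gamma\mathcal S$ by a \emph{symplectic} similarity: the spectrum of $T$ in $[0,\infty)$ furnishes $\gamma'$, the spectrum in $(-\infty,-1]$ the block $-(1+\overline{\gamma'})$, and $\mathcal U$ is assembled from the two spectral subspaces. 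The Shale condition for implementability—that the off-diagonal block of $\mathcal U$ be Hilbert--Schmidt—will be supplied by the hypothesis $\Gamma\in Y^1$, which forces $\sigma\in\Hsdd\subset\mathcal L^2$.

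\emph{Spectral gap.} First I would record, alongside $\Gamma\geq0$, the complementary CCR-positivity $\Gamma+\mathcal S\geq0$; exactly as in \eqref{Gam-posit} this is $\om_C(BB^*)\geq0$ for $B=\psi(f)+\psi^*(\bar g)$, obtained from $\om_C(B^*B)\geq0$ and the commutation relations. Next, the block structure gives the antilinear symmetry $\mathcal J\,\Gamma\,\mathcal J=\Gamma+\mathcal S$, where $\mathcal J\big(\begin{smallmatrix}f\\ g\end{smallmatrix}\big)=\big(\begin{smallmatrix}\bar g\\ \bar f\end{smallmatrix}\big)$ satisfies $\mathcal J^2=\1$ and $\mathcal J\mathcal S\mathcal J=-\mathcal S$, whence $\mathcal J T\mathcal J=-T-\1$. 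The two positivity conditions then exclude spectrum of $T$ in the open interval $(-1,0)$: if $T w=\lambda w$ and $\chi:=\mathcal S w$, then $\Gamma\chi=\lambda\mathcal S\chi$, so $\langle\chi,\Gamma\chi\rangle=\lambda\langle\chi,\mathcal S\chi\rangle\geq0$ while $\langle\chi,(\Gamma+\mathcal S)\chi\rangle=(\lambda+1)\langle\chi,\mathcal S\chi\rangle\geq0$; this forces $\lambda\geq0$ when $\langle\chi,\mathcal S\chi\rangle>0$, $\lambda\leq-1$ when $\langle\chi,\mathcal S\chi\rangle<0$, and $\lambda=0$ in the borderline case. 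Since $\sigma$ is Hilbert--Schmidt and $\gamma\in\cHLd$ is trace class, $T=\big(\begin{smallmatrix}0&0\\0&-1\end{smallmatrix}\big)+\text{compact}$, so $T$ has essential spectrum $\{0,-1\}$ with discrete spectrum elsewhere; together with the gap this yields a clean splitting. When $\Gamma$ is invertible, $T$ is similar to the self-adjoint operator $\Gamma^{1/2}\mathcal S\Gamma^{1/2}$, hence has real spectrum and a genuine spectral decomposition.

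\emph{Construction of $\mathcal U$.} Let $P$ be the Riesz projection of $T$ for a contour in the gap separating the part of the spectrum in $[0,\infty)$ from that in $(-\infty,-1]$, and let $\mathcal X_\pm$ be the corresponding invariant subspaces. The computation above also shows that $\langle\,\cdot\,,\mathcal S\,\cdot\,\rangle$ is positive definite on $\mathcal X_+$ and negative definite on $\mathcal X_-$, so each is a Hilbert space under $\pm\langle\,\cdot\,,\mathcal S\,\cdot\,\rangle$; and $\mathcal J T\mathcal J=-T-\1$ gives $\mathcal J\,\mathcal X_+=\mathcal X_-$, i.e. $\mathcal J P\mathcal J=\1-P$. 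I would then pick an $\mathcal S$-orthonormal ``column'' $W\colon\mathfrak h\to\mathcal X_+$ and set $\mathcal U:=(\,W\ \ \mathcal J W\,)$. The symmetry $\mathcal J\mathcal X_+=\mathcal X_-$ guarantees that $\mathcal U$ has exactly the block form $\big(\begin{smallmatrix}u&v\\ \bar v&\bar u\end{smallmatrix}\big)$, and the $\mathcal S$-orthonormality of the columns is precisely $\mathcal U^*\mathcal S\mathcal U=\mathcal S$, i.e. the relations \eqref{eq:def-symplecto}--\eqref{eq:def-symplecto-alternative}. Reading off $T$ in these coordinates identifies $\gamma'$ with $T|_{\mathcal X_+}\geq0$, yielding $\Gamma=\mathcal U\,\Gamma'\,\mathcal U^*$. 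The inequality $\gamma'\leq\gamma$ reflects that the transformation removes the pairing: it follows from the $(1,1)$-block identity of $\Gamma=\mathcal U\Gamma'\mathcal U^*$ together with \eqref{eq:def-symplecto-alternative}, in the same spirit as the bound $\sigma(1+\bar\gamma)^{-1}\sigma^*\leq\gamma$ of Proposition~\ref{prp-gampos-sigsym-1}(3). Uniqueness of $\gamma'$ up to unitary conjugation is immediate, since $\gamma'$ is the nonnegative spectral part of the canonically defined $T$ and the only freedom is the choice of basis $W$ of $\mathcal X_+$.

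\emph{Main obstacle.} Two points are genuinely technical. The degenerate case comes first: when $\Gamma$ is not invertible (equivalently $T$ has a kernel, or spectrum accumulating at the endpoints $0$ and $-1$), the form $\langle\,\cdot\,,\mathcal S\,\cdot\,\rangle$ may degenerate on the spectral subspaces and the similarity to a self-adjoint operator is not automatic; I would treat this by regularizing $\Gamma$ to a nearby invertible $\Gamma+\epsilon(\cdots)$, constructing $\mathcal U_\epsilon$, and passing to the limit, which is also where $0\leq\gamma'\leq\gamma$ and the up-to-unitary uniqueness are cleanly controlled. The second and harder point is to upgrade the bounded diagonalization to the regularity statement $\mathcal U\in\SobSymp$: one must show that the off-diagonal block $v$ lies in $M\cL^2 M^{-1}$ (so that \eqref{Bog-transf} is implementable) and that the diagonal block is bounded on $\Hd$. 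This requires propagating the weighted Hilbert--Schmidt regularity $\sigma\in\Hsdd$ and trace-class regularity $\gamma\in\cHLd$ through the projection $P$, and I expect this to be the most delicate step, since $P$ is a contour integral of the resolvent of the non-self-adjoint $T$, which must be controlled in the relevant operator ideals.
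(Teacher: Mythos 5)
Your route is genuinely different from the paper's: you propose to diagonalize the non-self-adjoint operator $T=\Gamma\mathcal S$ by Riesz projections, in the spirit of the cited work \cite{2015arXiv150807321T}, whereas the paper never invokes spectral theory at all. Instead it runs an explicit flow, $\partial_{t}\gamma_{t}=-2\sigma_{t}\bar{\sigma}_{t}$, $\partial_{t}\sigma_{t}=-(\sigma_{t}+\sigma_{t}\bar{\gamma}_{t}+\gamma_{t}\sigma_{t})$, which forces $\|\sigma_{t}\|_{\Hsdd}\leq e^{-t}\|\sigma_{0}\|_{\Hsdd}$, conjugates $\Gamma_t$ by symplectomorphisms solving the linear equation $i\partial_{t}\mathcal{U}_{t}^{*}=\mathcal{S}\Lambda_{t}\mathcal{U}_{t}^{*}$ with $\Lambda_t=\big(\begin{smallmatrix}0 & i\sigma_t\\ -i\bar{\sigma}_t & 0\end{smallmatrix}\big)$, and obtains $\mathcal U_\infty\in\SobSymp$ by a Gr\"onwall estimate. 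Your approach could in principle work, but as written it contains a step that fails: the claim that $0\leq\gamma'\leq\gamma$ ``follows from the $(1,1)$-block identity together with \eqref{eq:def-symplecto-alternative}.'' That identity gives
\[
\gamma \;=\; u\gamma'u^{*}+v(1+\overline{\gamma'})v^{*}\;\geq\; u\gamma'u^{*}\,,
\]
and $\gamma\geq u\gamma'u^{*}$ does \emph{not} imply $\gamma\geq\gamma'$: take $v=0$ and $u$ a unitary that does not commute with $\gamma'$, so that $\gamma=u\gamma'u^{*}$ while $\gamma'\not\leq\gamma$ in general. The root of the problem is that in your construction $\gamma'$ is defined only up to the choice of the $\mathcal S$-orthonormal column $W$, i.e.\ up to unitary conjugation, and the relation $\gamma'\leq\gamma$ is \emph{not} invariant under unitary conjugation of $\gamma'$. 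Repairing this requires an additional argument — for instance, a min--max comparison (since $u^{*}u\geq1$, one has $\lambda_{k}(\gamma')\leq\lambda_{k}(u\gamma'u^{*})\leq\lambda_{k}(\gamma)$ for all $k$) followed by the particular choice of $W$ that aligns the eigenbasis of $\gamma'$ with that of $\gamma$. The paper's flow method gets this inequality for free, because $\partial_{t}\gamma_{t}=-2\sigma_{t}\bar{\sigma}_{t}\leq0$ makes $\gamma_t$ monotone decreasing, so $\gamma_\infty\leq\gamma_0$.

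Beyond this, the two points you yourself flag as ``technical'' are really the heart of the theorem and remain unproven in your sketch. First, your positivity computation for $\langle\cdot,\mathcal S\,\cdot\rangle$ on $\mathcal{X}_{+}$ is carried out only on eigenvectors, but $\mathcal{X}_{+}$ contains the spectral subspace of the essential-spectrum point $0$ and is not spanned by eigenvectors; more seriously, boundedness of $\mathcal U=(\,W\ \ \mathcal J W\,)$ requires not just positivity but a \emph{uniform} bound $\|x\|^{2}\leq C\,\langle x,\mathcal S x\rangle$ on all of $\mathcal{X}_{+}$, and no such constant is produced — obtaining one needs the structural inequality of Proposition~\ref{prp-gampos-sigsym-1}(3) (which controls the second component of a vector in $\mathcal{X}_{+}$ by $\langle f,\gamma f\rangle$) applied on the whole subspace, not only on eigenvectors. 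Second, the upgrade to $\mathcal U\in\SobSymp$, i.e.\ $u\in M\cB M^{-1}$ and $v\in M\cL^{2}M^{-1}$, requires controlling the Riesz projection of the non-self-adjoint $T$ in weighted operator ideals, and your regularization of the degenerate case (spectrum accumulating at $0$ and $-1$) needs a compactness argument to pass $\mathcal U_{\epsilon}\to\mathcal U$ in $\SobSymp$; families of symplectomorphisms do not converge without uniform bounds, which is precisely what is missing. These are exactly the difficulties the paper's construction bypasses: its symplectomorphisms are produced directly in $\SobSymp$ as solutions of a linear ODE whose generator decays exponentially, so their convergence and regularity come from Gr\"onwall rather than from spectral analysis. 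In short: interesting alternative strategy, but with one incorrect step and the genuinely hard estimates deferred rather than done.
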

This result  is related to Theorem~1 of \cite{NamNapiorkoswkiSolovej2016}, which is stronger. See also \cite{BachBru2016,Berezin1966}. As the relation between the two results is not obvious, we give a direct proof of Proposition \ref{thm:diagonalization-with-symplectomorphism} after the proof of Proposition~\ref{prop:symplectic-structure}.

The next result relates the evolution of $\Gamma_t$ to the evolution of  implementable symplectomorphisms $\mathcal U_t\in\mathcal H^{\infty,2}(\mathfrak{h} \times \mathfrak{h})$, diagonalizing  $\Gamma_t$.
\begin{proposition}\label{prop:symplectic-structure}
(i)   
For  any  $\Gamma_t\in \tilde \cY_T$ and any implementable symplectomorphism $\mathcal{U}_{0}\in\SobSymp$, the initial value problem
\eqn\label{eq:evolution-symplecto}
i\partial_t\mathcal U_t^* = \mathcal S \Lambda(\Gamma_t)\mathcal U_t^* \,,\quad
\mathcal U_{t=0}=\mathcal U_0\,,
\eeqn
has a unique solution in $\SobSymp$, which is a symplectomorphism for every $t$.

(ii)  Let $\Gamma_t\in \tilde\cY_T$ solve the equation \eqref{eq:Gam-eq}, with an initial condition $\Gamma_0\in \tilde Y^{3}$, s.t. $\Gamma_0 \ge 0$.
Let  $\mathcal U_0$ be an implementable symplectomorphism diagonalizing $\Gamma_0$: 
\[
\Gamma_0=\mathcal{U}_{0}\Gamma_0^\prime \mathcal{U}_{0}^{*} \,,\quad \Gamma_0^\prime =\begin{pmatrix}\gamma_{0}^\prime & 0\\ 0 & 1+\overline{\gamma_{0}^\prime}\end{pmatrix}\,.
\] 
Then the continuous family of implementable symplectomorphisms $\mathcal U_t$ in $\mathcal H^{\infty,2}(\mathfrak{h} \times \mathfrak{h})$  satisfying \eqref{eq:evolution-symplecto}, with the above  $\mathcal U_0$, diagonalizes  $\Gamma_t$: 
\begin{align}\label{eq:symplectic-transfo-Gamma}
\Gamma_t=\mathcal U_t^* \Gamma_0^\prime \mathcal U_t \geq 0\,.
\end{align}
\end{proposition}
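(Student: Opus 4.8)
The plan is to treat part~(i) as a well-posedness statement for a non-autonomous linear evolution equation and part~(ii) as a uniqueness argument, once the relevant algebraic identities are in place.

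For part~(i), I would first decompose $\Lambda(\Gamma_t)=\Lambda_0+R_t$, where $\Lambda_0=\big(\begin{smallmatrix} h & 0\\ 0 & \bar h\end{smallmatrix}\big)$ carries the unbounded part $h=-\Delta+V$ and $R_t=\big(\begin{smallmatrix} b[\gamma_t] & k(\sigma_t)\\ \overline{k(\sigma_t)} & \overline{b[\gamma_t]}\end{smallmatrix}\big)$ is bounded on $L^2(\R^d)\times L^2(\R^d)$. The boundedness of $R_t$ follows from Lemma~\ref{lem:d-k-prop}, which makes $k(\sigma_t)=v\,\sharp\,\sigma_t$ bounded and $b[\gamma_t]=v*d(\gamma_t)+v\,\sharp\,\gamma_t$ bounded (the kernel of $\gamma_t$ is bounded, so $d(\gamma_t)\in L^\infty$ and $v*d(\gamma_t)\in L^\infty$), while continuity of $t\mapsto R_t$ comes from $\Gamma_t\in\tilde\cY_T$. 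The block-diagonal generator $\cS\Lambda_0=\big(\begin{smallmatrix} h & 0\\ 0 & -\bar h\end{smallmatrix}\big)$ generates a strongly continuous free flow, and I would solve \eqref{eq:evolution-symplecto} by Duhamel iteration (equivalently, a Dyson series), treating $\cS R_t$ as a bounded, strongly continuous perturbation; this yields a unique, invertible solution $\cU_t^*\in\SobSymp$, the $M$-conjugation structure of $\SobSymp$ being propagated because $\cS R_t$ maps $\SobSymp$ into itself.

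To see that $\cU_t$ is a symplectomorphism for every $t$, the key observation is that $\Lambda(\Gamma_t)$ is self-adjoint: $h(\gamma_t)$ is self-adjoint because $\gamma_t=\gamma_t^*$ and $v$ is even and real (so both $v*d(\gamma_t)$ and $v\,\sharp\,\gamma_t$ are self-adjoint), and a short kernel computation using $\sigma_t(x,y)=\sigma_t(y,x)$ gives $k(\sigma_t)^*=\overline{k(\sigma_t)}$, whence $\Lambda(\Gamma_t)^*=\Lambda(\Gamma_t)$. Taking the adjoint of \eqref{eq:evolution-symplecto} then yields $i\partial_t\cU_t=-\cU_t\Lambda(\Gamma_t)\cS$, and differentiating $\cU_t\cS\cU_t^*$ gives
\[
\partial_t(\cU_t\cS\cU_t^*)=i\cU_t\Lambda(\Gamma_t)\cS^2\cU_t^*-i\cU_t\cS^2\Lambda(\Gamma_t)\cU_t^*=0
\]
since $\cS^2=\1$. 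As $\cU_0\cS\cU_0^*=\cS$, this gives $\cU_t\cS\cU_t^*=\cS$ for all $t$; invertibility of $\cU_t$ together with the algebraic identity $\cS\cU_t^*\cS=\cU_t^{-1}$ then yields the companion relation $\cU_t^*\cS\cU_t=\cS$, so \eqref{eq:def-symplecto} holds. Implementability is preserved because the off-diagonal block of $\cU_t^*$ lies in $M\cL^2 M^{-1}\subseteq\cL^2$ and is therefore Hilbert-Schmidt, so the Shale condition holds for every $t$.

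For part~(ii), I would set $\tilde\Gamma_t:=\cU_t^*\Gamma_0^\prime\cU_t$ and show it solves the same equation as $\Gamma_t$. Using \eqref{eq:evolution-symplecto} and its adjoint $i\partial_t\cU_t=-\cU_t\Lambda(\Gamma_t)\cS$, a direct computation gives
\[
i\partial_t\tilde\Gamma_t=\cS\Lambda(\Gamma_t)\,\tilde\Gamma_t-\tilde\Gamma_t\,\Lambda(\Gamma_t)\cS,
\]
which is precisely the linear equation, with the already-determined coefficient $\Lambda(\Gamma_t)$, that $\Gamma_t$ itself satisfies by \eqref{eq:Gam-eq}. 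Since $\cU_0$ is a symplectomorphism, the diagonalization of $\Gamma_0$ can be arranged in the form $\Gamma_0=\cU_0^*\Gamma_0^\prime\cU_0$ matching the $t=0$ value of $\tilde\Gamma_t$, so both $\Gamma_t$ and $\tilde\Gamma_t$ solve the same linear Cauchy problem in $\tilde\cY_T$; uniqueness for this linear equation (Gronwall, using the bounded-perturbation structure of part~(i)) forces $\Gamma_t=\tilde\Gamma_t=\cU_t^*\Gamma_0^\prime\cU_t$. Finally $\Gamma_t\ge 0$ is immediate, since $\Gamma_0^\prime\ge 0$ (as $\gamma_0^\prime\ge0$) and conjugation by the bounded operator $\cU_t$ preserves positivity, giving \eqref{eq:symplectic-transfo-Gamma}. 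I expect the main obstacle to be the functional-analytic part of~(i): making the non-autonomous flow generated by the unbounded block-diagonal operator $\cS\Lambda_0$ rigorous in the operator space $\SobSymp$, verifying that the flow genuinely stays in $\SobSymp$ (propagation of the $M$-weighted boundedness and of the Hilbert-Schmidt off-diagonal block) and that $\cU_t$ is invertible; by contrast, the self-adjointness of $\Lambda(\Gamma_t)$, the preservation of $\cS$, and the intertwining computation in~(ii) are short algebraic verifications once this regularity is in hand.
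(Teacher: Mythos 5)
Your proof is correct and follows essentially the same route as the paper's: the same splitting of $\Lambda(\Gamma_t)$ into the unbounded free diagonal block plus a bounded, continuous-in-time perturbation, well-posedness via standard non-autonomous linear evolution theory (Duhamel/Dyson), differentiation of $\cU_t\cS\cU_t^*$ for the symplectic property, and the same ``both sides solve the same linear equation, hence coincide by uniqueness'' argument for part (ii). The only departures are minor: you derive $\cU_t^*\cS\cU_t=\cS$ from invertibility of the propagator plus algebra where the paper uses uniqueness of the linear ODE admitting the constant solution $\cS$, and you make explicit several points the paper leaves implicit (self-adjointness of $\Lambda(\Gamma_t)$, the matching of the diagonalization convention at $t=0$, and preservation of positivity and of the Shale condition), which if anything tightens the argument.
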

\begin{proof}[Proof of Prop.~\ref{prop:symplectic-structure}]  
The operator $\Lambda_t$ can be decomposed as $\Lambda_t = \Lambda_{1} + \Lambda_{2,t}$ with
\[\Lambda_1 =
\begin{pmatrix}
h & 0\\
0 & \bar h
\end{pmatrix}
\,,\quad
\Lambda_{2,t} =
\begin{pmatrix}
b[\gamma_t+|\phi_t\rangle\langle\phi_t|] & k[\sigma_t+\phi_t\otimes\phi_t]\\
 \overline{k[\sigma_t+\phi_t\otimes\phi_t]} & \overline{b[\gamma_t+|\phi_t\rangle\langle\phi_t|]}
\end{pmatrix}\,.\]
The first operator, $\Lambda_1$, is the generator of a continuous one-parameter group in $\SobSymp$. As for the second one, using the continuity of $t\mapsto\rho_t\in \XL$, and  Lemma~\ref{prop:continuity_of_B_and_K}, we get the continuity of $t\mapsto\Lambda_{2,t}\in \SobSymp$. We can thus use classical results of functional analysis (see, e.g., \cite{Kato1970}) to obtain the existence and uniqueness of $\mathcal U_t$ and its regularity. 

The same arguments as in the next lemma prove that $\cU_t$ is a symplectomorphism.

Finally, $\Gamma_t$ and $\mathcal U_t^* \Gamma_0 \mathcal U_t $ satisfy the same differential equation, and the uniqueness of a solution to~\eqref{eq:evolution-symplecto} proves the last equality.
\end{proof}

\begin{proof}[Proof of existence in  Prop.~\ref{thm:diagonalization-with-symplectomorphism}] 
We split the proof into two lemmas, Lemmas~\ref{lem:construction-U_t-diagonalisation}
and \ref{lem:ordinary-differential-equation-giving-gamma-infty} below.
The strategy is to construct $\Gamma_{t}$ and symplectomorphisms
$\mathcal{U}_{t}$ such that $\mathcal{U}_{t}\Gamma_{t}\mathcal{U}_{t}^{*}=\Gamma_{0}$,
for all $t$, and in the limit $t\to\infty$, $\Gamma_{\infty}$ has
the desired form. The key step will be to use a differential equation
for $\Gamma_{t}$ implying $\|\sigma_{t}\|_{\Hsdd}\searrow0$.
\begin{lemma} \label{lem:construction-U_t-diagonalisation} Let $T>0$ and $\Lambda_{t}=\big(\begin{smallmatrix}a_{t} & b_{t}\\
\bar{b}_{t} & \bar{a}_{t}
\end{smallmatrix}\big)\in C([0,T);\SobSymp )$\textup{.}
Then, the ordinary differential equation
\begin{align}\label{eqn:symplectic-equation-for-U}
i\partial_{t}\mathcal{U}_{t}^{*}=\mathcal{S}\Lambda_{t}\mathcal{U}_{t}^{*}\,,
\end{align}
with initial data $\mathcal{U}_{0}^{*}=\big(\begin{smallmatrix}1 & 0\\
0 & 1
\end{smallmatrix}\big)$, 
 has a unique global solution $\mathcal{U}_{t}\in C^{1}([0,T);\SobSymp)$,
and $\mathcal{U}_{t}$ is a symplectomorphism for all time.

Moreover, if $\gamma_{t}\in C^1([0,T);\cHLd),\sigma_{t}\in C^1([0,T);\Hsdd)$ satisfy
\begin{align}
i\partial_{t}\gamma_{t} & =a_{t}\gamma_{t}-b_{t}\bar{\sigma_{t}}-\gamma_{t}a_{t}+\sigma_{t}\bar{b}_{t}\,,\label{eq:d-gamma-symplecto-general}\\
i\partial_{t}\sigma_{t} & =a_{t}\sigma_{t}-b_{t}(1+\bar{\gamma}_{t})-\gamma_{t}b_{t}+\sigma_{t}\bar{a}_{t}\,,\label{eq:d-sigma-symplecto-general}
\end{align}
with initial data $\sigma_{0}=\sigma$, $\gamma_{0}=\gamma$ given
in Prop~\ref{prop:symplectic-structure}(i), 
then, for all time $t$,
\eqn \label{Gam-evol}
\mathcal{U}_{t}\Gamma_{t}\mathcal{U}_{t}^{*}=\Gamma_{0}\,.
\eeqn
\end{lemma}
\begin{proof}   
The existence and uniqueness of $\mathcal{U}_{t}^{*}$ follows from
the theory of time-dependent linear ordinary differential equations once one observes that $ \cHLd $ and $\Hsdd$ are continuously embedded in $\cB(\Hd)$ and $M\cL^2(L^2)M^{-1}$.
At $t=0$, $\cU_{0}\cS\cU_{0}^{*}=\mathcal{S}$
and
\[
i\partial_{t}(\cU_{t}\cS\cU_{t}^{*})
=\cU_{t}\big(-\Lambda_{t}\cS\cS
+\cS\cS\Lambda_{t}\big)\cU_{t}^{*}=0\,,
\]
thus $\cU_{t}\cS\cU_{t}^{*}=\cS$
for all time, and, to prove $\cU_{t}^{*}\mathcal{S}\cU_{t}=\mathcal{S}$, one observes that
\[
i\partial_{t}(\cU_{t}^{*}\cS\cU_{t})
=-(\cU_{t}^{*}\cS\cU_{t}) \Lambda_{t}\cS +\cS\Lambda_{t}(\cU_{t}^{*}\cS\cU_{t})\,,
\]
which is a linear time-dependent ordinary differential equation for $\cU_{t}^{*}\cS\cU_{t}$ which also admits the constant solution $\cS$. By uniqueness of the solution, one gets that $\cU_{t}^{*}\mathcal{S}\cU_{t}=\mathcal{S}$. 
Hence $\cU_t$ is a symplectomorphism for all time.

Similarly, the derivative $i\partial_{t}\big(\mathcal{U}_{t}\Gamma_{t}\mathcal{U}_{t}^{*}\big)$
vanishes because, using (\ref{eq:d-gamma-symplecto-general}) and
(\ref{eq:d-sigma-symplecto-general}), 
\[
i\partial_{t}\Gamma_{t}=\Lambda_{t}\mathcal{S}\Gamma_{t}-\Gamma_{t}\mathcal{S}\Lambda_{t}\,.
\]
Thus $\mathcal{U}_{t}\Gamma_{t}\mathcal{U}_{t}^{*}=\mathcal{U}_{0}\Gamma_{0}\mathcal{U}_{0}^{*}=\Gamma_{0}$
for all times.
\end{proof}

We choose $a_{t}$ and $b_{t}$ in (\ref{eq:d-gamma-symplecto-general})
and (\ref{eq:d-sigma-symplecto-general}) such that $\sigma_{t}$ vanishes
in the limit $t\to\infty$. Let $\mathcal{L}^{1}(\mathfrak{h})$ and $\mathcal{L}^{2}(\mathfrak{h})$ denote the spaces of  trace-class and Hilbert - Schmidt operators on  the space $\mathfrak{h}$.
\begin{lemma}
\label{lem:ordinary-differential-equation-giving-gamma-infty}The
ordinary differential equation 
\begin{align}
\partial_{t}\gamma_{t} & =-2\sigma_{t}\bar{\sigma_{t}}\,,\label{eq:d-gamma-diagonalization}\\
\partial_{t}\sigma_{t} & =-(\sigma_{t}+\sigma_{t}\bar{\gamma}_{t}+\gamma_{t}\sigma_{t})\,,\label{eq:d-sigma-diagonalization}
\end{align}
with initial data $\sigma_{0}=\sigma$, $\gamma_{0}=\gamma$ given
in Prop.~\ref{prop:symplectic-structure}(i), 
 has a unique global solution $(\gamma_{t},\sigma_{t})\in C^{1}\big([0,\infty);\mathcal{L}^{1}(\mathfrak{h})\times\mathcal{L}^{2}(\mathfrak{h})\big)$.

Let $\Lambda_{t}=\big(\begin{smallmatrix}0 & i\sigma_{t}\\
-i\bar{\sigma}_{t} & 0
\end{smallmatrix}\big)$, and  $\mathcal{U}_{t}=\big(\begin{smallmatrix}
u_t & v_t\\ \bar v_t & \bar u_t
\end{smallmatrix}\big)$ and~$\Gamma_{t}=\big(\begin{smallmatrix}\gamma_t & \sigma_t\\
\bar{\sigma}_t & 1+\bar{\gamma}_t
\end{smallmatrix}\big)$ as in
Lemma~\ref{lem:construction-U_t-diagonalisation}:
\begin{itemize}
\item $\mathcal{U}_{t}$ converges in $\SobSymp$
to a symplectomorphism $\mathcal{U}_{\infty}$.
\item $\Gamma_{0}=\mathcal{U}_{\infty}\Gamma_{\infty}\mathcal{U}_{\infty}^{*}=\mathcal{U}_{\infty}\big(\begin{smallmatrix}\gamma_{\infty} & 0\\
0 & 1+\bar{\gamma}_{\infty}
\end{smallmatrix}\big)\mathcal{U}_{\infty}^{*}$ with $0\leq\gamma_{\infty}\leq\gamma_{0}$.
\end{itemize}
\end{lemma}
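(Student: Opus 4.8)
The plan is to read \eqref{eq:d-gamma-diagonalization}--\eqref{eq:d-sigma-diagonalization} as an autonomous ODE on $Y^1=\cHLd\times\Hsdd$ (which refines the $\mathcal{L}^1(\mathfrak h)\times\mathcal{L}^2(\mathfrak h)$ statement and is the regularity inherited from $\Gamma_0\in Y^1$), to extract a dissipation identity forcing $\sigma_t\to0$, and then to feed this decay into the linear flow \eqref{eqn:symplectic-equation-for-U} for $\mathcal U_t$. For local existence and uniqueness I note that the right-hand sides $(\gamma,\sigma)\mapsto-2\sigma\bar\sigma$ and $(\gamma,\sigma)\mapsto-(\sigma+\sigma\bar\gamma+\gamma\sigma)$ are affine-bilinear, and the product bounds $\|\sigma\bar\sigma\|_{\cHLd}\lesssim\|\sigma\|_{\Hsdd}^2$ and $\|\gamma\sigma\|_{\Hsdd}+\|\sigma\bar\gamma\|_{\Hsdd}\lesssim\|\gamma\|_{\cHLd}\|\sigma\|_{\Hsdd}$ make the vector field locally Lipschitz; Picard--Lindel\"of then yields a unique maximal solution.

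The key structural input is positivity, which I would obtain \emph{not} from the $\gamma$-equation directly but from the symplectic conjugation. On the maximal interval I apply Lemma~\ref{lem:construction-U_t-diagonalisation} with the choice $a_t=0$, $b_t=i\sigma_t$, so that \eqref{eq:d-gamma-symplecto-general}--\eqref{eq:d-sigma-symplecto-general} reduce exactly to \eqref{eq:d-gamma-diagonalization}--\eqref{eq:d-sigma-diagonalization}; this produces a symplectomorphism $\mathcal U_t$ with $\mathcal U_t\Gamma_t\mathcal U_t^*=\Gamma_0$. Since $\mathcal U_t$ is invertible and $\Gamma_0\ge0$, we get $\Gamma_t=\mathcal U_t^{-1}\Gamma_0(\mathcal U_t^{-1})^*\ge0$, hence $\gamma_t\ge0$ and $\bar\gamma_t\ge0$ by Prop.~\ref{prp-gampos-sigsym-1}(1). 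With this in hand the dissipation identity follows by differentiating $\|\sigma_t\|_{\mathcal{L}^2}^2$ and using positivity to discard the two remaining trace terms:
\[\tfrac12\partial_t\|\sigma_t\|_{\mathcal{L}^2}^2=-\|\sigma_t\|_{\mathcal{L}^2}^2-\tr[\gamma_t\sigma_t\sigma_t^*]-\tr[\bar\gamma_t\sigma_t^*\sigma_t]\le-\|\sigma_t\|_{\mathcal{L}^2}^2,\]
so $\|\sigma_t\|_{\mathcal{L}^2}\le e^{-t}\|\sigma_0\|_{\mathcal{L}^2}$. Simultaneously $\partial_t\tr[\gamma_t]=-2\|\sigma_t\|_{\mathcal{L}^2}^2\le0$, giving the uniform bound $\|\gamma_t\|_{\mathcal{L}^1}=\tr[\gamma_t]\le\tr[\gamma_0]$; these a priori bounds exclude finite-time blow-up and yield the global solution. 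Since $\partial_t\gamma_t=-2\sigma_t\sigma_t^*\le0$ has time-integrable $\mathcal{L}^1$-norm, $\gamma_t$ converges in $\mathcal{L}^1$ to $\gamma_\infty=\gamma_0-2\int_0^\infty\sigma_s\sigma_s^*\,ds$, and monotonicity together with closedness of positivity give $0\le\gamma_\infty\le\gamma_0$.

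For the convergence of $\mathcal U_t$ in $\SobSymp$ I must upgrade the $\mathcal{L}^2$-decay to control of $\|\Lambda_t\|_{\SobSymp}\sim\|\sigma_t\|_{M\cL^2 M^{-1}}$, since this is the norm driving \eqref{eqn:symplectic-equation-for-U}. Repeating the energy computation with the weight $M$ produces the damping $-\|\sigma_t\|_{\Hsdd}^2$ plus commutator terms of the schematic form $\tr[\sigma_t^* M[M,\gamma_t]\sigma_t]$, after writing $M^2\gamma=M\gamma M+M[M,\gamma]$ and keeping the sign-definite piece $\tr[\gamma_t(M\sigma_t)(M\sigma_t)^*]\ge0$. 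The commutator is controlled by $\|[M,\gamma_t]\|_{\mathcal B}\lesssim\|\gamma_t\|_{\cHLd}$ and one \emph{weak} factor $\|\sigma_t\|_{\mathcal{L}^2}\le e^{-t}$; crucially $\tr[M\gamma_tM]$ is again monotone non-increasing (its derivative is $-2\|M\sigma_t\|_{\mathcal{L}^2}^2\le0$), so $\|\gamma_t\|_{\cHLd}\le\|\gamma_0\|_{\cHLd}$ is uniformly bounded for free. A Gr\"onwall argument then yields $\partial_t\|\sigma_t\|_{\Hsdd}\le-\|\sigma_t\|_{\Hsdd}+Ce^{-t}$, hence $\|\sigma_t\|_{\Hsdd}\lesssim(1+t)e^{-t}$, which is integrable. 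Consequently $\int_0^\infty\|\Lambda_t\|_{\SobSymp}\,dt<\infty$; applying Gr\"onwall to \eqref{eqn:symplectic-equation-for-U} bounds $\|\mathcal U_t\|_{\SobSymp}$ uniformly and shows $t\mapsto\mathcal U_t^*$ is Cauchy in $\SobSymp$, with limit $\mathcal U_\infty$. The relations $\mathcal U_t^*\mathcal S\mathcal U_t=\mathcal S=\mathcal U_t\mathcal S\mathcal U_t^*$ pass to the limit, so $\mathcal U_\infty$ is a symplectomorphism, and letting $t\to\infty$ in $\mathcal U_t\Gamma_t\mathcal U_t^*=\Gamma_0$ (with $\Gamma_t\to\big(\begin{smallmatrix}\gamma_\infty & 0\\0 & 1+\bar\gamma_\infty\end{smallmatrix}\big)$ since $\sigma_t\to0$, and the uniform bound on $\mathcal U_t$) gives $\mathcal U_\infty\Gamma_\infty\mathcal U_\infty^*=\Gamma_0$ with $0\le\gamma_\infty\le\gamma_0$.

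I expect the main obstacle to be precisely the weighted estimate in the last paragraph: the plain $\mathcal{L}^2$ dissipation is immediate once positivity is established, but controlling $\sigma_t$ in the $\SobSymp$-compatible norm $M\cL^2 M^{-1}$ requires handling the non-commutation of $M$ with $\gamma_t$. What makes this tractable rather than genuinely hard is the observation that $\tr[M\gamma_tM]$, like $\tr[\gamma_t]$, is monotone along the flow, so the needed uniform $\cHLd$-bound on $\gamma_t$ comes at no cost and the commutator only involves the degree-one object $[M,\gamma_t]$.
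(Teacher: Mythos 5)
Your proposal is correct and follows the same skeleton as the paper's proof: Picard--Lindel\"of for local existence; positivity of $\Gamma_t$ obtained by conjugating $\Gamma_0$ with the symplectomorphisms of Lemma~\ref{lem:construction-U_t-diagonalisation} (the paper makes the same choice $a_t=0$, $b_t=i\sigma_t$ implicitly); a dissipation estimate forcing $\sigma_t\to 0$; uniform bounds giving $T=\infty$; a Gr\"onwall bound on $\mathcal{U}_t^*$ from \eqref{eqn:symplectic-equation-for-U} yielding absolute convergence of $\int_0^\infty \mathcal{S}\Lambda_s\mathcal{U}_s^*\,ds$; and passage to the limit in $\mathcal{U}_t\Gamma_t\mathcal{U}_t^*=\Gamma_0$. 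Where you genuinely diverge is the key dissipation estimate, and there your treatment is \emph{more} careful than the paper's. The paper differentiates $\|\sigma_t\|_{\Hsdd}^2$ directly and discards all the $\gamma$-cross terms as nonnegative, concluding $\|\sigma_t\|_{\Hsdd}\le e^{-t}\|\sigma_0\|_{\Hsdd}$; but a discarded term of the form $\Re\,\tr[M\gamma_t\sigma_t\sigma_t^*M]$ is sign-definite only after commuting $M$ past $\gamma_t$: it equals the manifestly nonnegative quantity $\tr[\gamma_t(M\sigma_t)(M\sigma_t)^*]$ only up to the commutator contribution $\Re\,\tr[[M,\gamma_t]\sigma_t\sigma_t^*M]$, which the paper silently ignores (the term $\tr[M\sigma_t\bar\gamma_t\sigma_t^*M]$, by contrast, is genuinely nonnegative). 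Your two-tier argument --- first the unweighted $\mathcal{L}^2$ decay $e^{-t}$, where the cross terms $\tr[\gamma_t\sigma_t\sigma_t^*]$ and $\tr[\bar\gamma_t\sigma_t^*\sigma_t]$ really are nonnegative, then the weighted estimate with the commutator controlled by $\|[M,\gamma_t]\|\le 2\|\gamma_t\|_{\cHLd}$ and one decaying $\mathcal{L}^2$ factor, with Gr\"onwall giving $\|\sigma_t\|_{\Hsdd}\lesssim(1+t)e^{-t}$ --- closes exactly this gap; the weaker rate is still integrable, which is all the convergence of $\mathcal{U}_t$ requires. What the paper's route buys (when the sign issue is granted) is a one-line estimate and a clean exponential rate; what yours buys is an argument that actually survives the non-commutativity of $M$ and $\gamma_t$.

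Two presentational points you should fix. First, you announce global existence at the end of your second paragraph on the strength of the $\mathcal{L}^1\times\mathcal{L}^2$ bounds, but since you posed the ODE on $Y^1$ the relevant blow-up criterion is the $Y^1$ norm; the uniform $\cHLd$ bound (monotonicity of $\tr[M\gamma_tM]$) and the $\Hsdd$ bound from your third paragraph are what rule out blow-up, so they must be established on the maximal interval \emph{before} concluding $T=\infty$. Second, for the final limit in the topology the paper uses (diagonal entries converging in $\cHLd$), note that your $\Hsdd$ decay also upgrades the $\mathcal{L}^1$ convergence of $\gamma_t$ to $\cHLd$ convergence, via
\begin{equation*}
\|M(\gamma_t-\gamma_s)M\|_{\mathcal{L}^1}\le 2\int_s^t\|M\sigma_r\|_{\mathcal{L}^2}\|\sigma_r^*M\|_{\mathcal{L}^2}\,dr\le 2\int_s^t\|\sigma_r\|_{\Hsdd}^2\,dr\,,
\end{equation*}
which is summable thanks to the $(1+r)e^{-r}$ bound. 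With these reorderings, your argument is complete.
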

\begin{proof}  
The existence of maximal solutions to (\ref{eq:d-gamma-diagonalization})
- (\ref{eq:d-sigma-diagonalization}) follows from the  
Picard-Lindel\"of theorem. Now using the $\mathcal{U}_{t}$ constructed in Lemma~(\ref{lem:construction-U_t-diagonalisation}),
one gets that $(\mathcal{U}_{t})^{-1}\Gamma_{0}(\mathcal{U}_{t}^{*})^{-1}=\Gamma_{t}$,
which implies that $\Gamma_{t}\geq0$ and thus $\gamma_{t}\geq0$.
It then follows from (\ref{eq:d-gamma-diagonalization}) that $\gamma_{t}$
is decreasing in the sense of quadratic forms and $\|\gamma_{t}\|_{\cHLd}\leq\|\gamma_{0}\|_{\cHLd}$.

One first obtains an estimate on $\|\sigma_{t}\|_{\mathcal L^2}^{2}=\tr[\sigma_{t}\sigma_{t}^*]$, using \eqref{eq:d-sigma-diagonalization}:
\begin{align*}
\partial_{t}\|\sigma_{t}\|_{\mathcal L^2}^{2}
\ = \ &
\tr\big[ 
-(\sigma_{t}+\sigma_{t}\bar{\gamma}_{t}+\gamma_{
}\sigma_{t}) \sigma_{t}^{*} 
-\sigma_{t}(\sigma_{t}^{*}+\bar{\gamma}_{t}\sigma_{t}^{*}+\sigma_{t}^{*}\gamma_{t}) \big]
\\[1ex]
\ \leq \ & 
-2 \tr\big[ \sigma_{t} \sigma_{t}^{*} \big]
\ = \ 
-2\|\sigma_{t}\|_{\mathcal L^2}^{2}\,.
\end{align*}
This implies that $\|\sigma_{t}\|_{\mathcal L^2}\leq\|\sigma_{0}\|_{\mathcal L^2}\exp(-t)$.
Using again \eqref{eq:d-sigma-diagonalization} and the fact that $\gamma_{t}\geq0$ one finds that
\begin{align*}
\partial_{t}\|\sigma_{t}\|_{\Hsdd}^{2} &
 = 
\tr\big[ 
-(\sigma_{t} +\sigma_{t}\bar{\gamma}_{t}
+\gamma_{t}\sigma_{t}) \sigma_{t}^{*}  M^2
-\sigma_{t} (\sigma_{t}^{*} +\bar{\gamma}_{t}\sigma_{t}^{*} +\sigma_{t}^{*}\gamma_{t}) M^2
 \big]
\\
 & \leq-2\|\sigma_{t}\|_{\Hsdd}^{2}-\tr[\gamma_{t}\sigma_{t}\sigma_{t}^{*}M^{2}]-\tr[\sigma_{t}\sigma_{t}^{*}\gamma_{t}M^{2}]
\end{align*}
We remark that $|\tr[M\gamma_{t}\sigma_{t}\sigma_{t}^{*}M]|\leq\|\gamma_{t}\|_{\cHLd}^{1/2}\,\|\gamma_{t}^{1/2}\sigma_{t}\|_{{\mathcal B}(\mathfrak{h})}\,\|\sigma_{t}\|_{\Hsdd}$
and 
\[
\|\gamma_{t}^{1/2}\sigma_{t}\|_{{\mathcal B}(\mathfrak{h})}
\leq\|\gamma_{t}^{1/2}\|_{{\mathcal L}^{2}}\|\sigma_{t}\|_{{\mathcal L}^{2}}
\leq\|\gamma_{0}\|_{\tr}^{1/2}\|\sigma_{0}\|_{\mathcal L^2}e^{-t}\,,
\]
hence
\[
\partial_{t}\|\sigma_{t}\|_{\Hsdd}^{2}\leq-2\|\sigma_{t}\|_{\Hsdd}^{2}+\sqrt{2C}e^{-t}\,\sqrt{2}\|\sigma_{t}\|_{\Hsdd}\leq-\|\sigma_{t}\|_{\Hsdd}^{2}+Ce^{-2t}
\]
which yields $\|\sigma_{t}\|_{\Hsdd}^{2}\leq Ce^{-t}\|\sigma_{0}\|_{\Hsdd}^{2}$. 
The pair $(\gamma_{t},\sigma_{t})$ is thus bounded in $\cHLd \times \Hsdd$
and the maximal time of the solution is~$T=\infty$. We also get
that $\gamma_{t}\to\gamma_{\infty}$ in $\cHLd$
as $t\to\infty$ as $\gamma_{t}$ is decreasing and bounded by below,
and $\sigma_{t}\to0$.

Integrating the derivative of $\mathcal{U}_{t}^{*}$ and taking the
norm of both sides yields 
\begin{equation}
\|\mathcal{U}_{t}^{*}\|_{\SobSymp}\leq\|\mathcal{U}_{0}^{*}\|_{\SobSymp}+\int_{0}^{t}\|\sigma_{s}\|_{\Hsdd}\|\mathcal{U}_{s}^{*}\|_{\SobSymp}ds\,.\label{eq:estimate-U_t}
\end{equation}
The Gr\"onwall lemma, combined with $\|\mathcal{U}_{0}^{*}\|_{\SobSymp}=1$
and the estimate on $\|\sigma_{t}\|_{\Hsdd}$
provide
\[
\|\mathcal{U}_{t}^{*}\|_{\SobSymp}\leq\exp\big(\int_{0}^{t}\|\sigma_{s}\|_{\Hsdd}ds\big)\leq\exp\big(C \|\sigma_{0}\|_{\Hsdd}\big)\,.
\]
Thus, the integral $\int_{0}^{\infty}\mathcal{S}\Lambda_{s}\mathcal{U}_{s}^{*}ds$
is absolutely convergent and
\[
\mathcal{U}_{t}^{*}\xrightarrow[t\to\infty]{}\mathcal{U}_{0}^{*}-i\int_{0}^{\infty}\mathcal{S}\Lambda_{s}\mathcal{U}_{s}^{*}ds=:\mathcal{U}_{\infty}^{*}
\]
in $\SobSymp$, and
the limit $\mathcal{U}_{\infty}^{*}$ is still an implementable symplectomorphism.

Hence,
\[
\Gamma_{0}-\mathcal{U}_{\infty}\Gamma_{\infty}\mathcal{U}_{\infty}^{*}=\mathcal{U}_{t}\Gamma_{t}\mathcal{U}_{t}^{*}-\mathcal{U}_{\infty}\Gamma_{\infty}\mathcal{U}_{\infty}^{*}\to0
\]
 as $t\to\infty$, where $\Gamma_{\infty}=\big(\begin{smallmatrix}\gamma_{\infty} & 0\\
0 & 1+\bar{\gamma}_{\infty}
\end{smallmatrix}\big)$, and the convergence takes place in the space of block operators with diagonal elements in $\cHLd$ and off-diagonal elements in $\Hsdd$. This proves the last point.
\end{proof}
This completes the proof of existence.
\end{proof}

\begin{proof}[Proof of uniqueness in Prop.~\ref{thm:diagonalization-with-symplectomorphism}] 
Indeed, let us consider $\gamma'$ and $\gamma''$ satisfying the conditions
of Prop.~\ref{thm:diagonalization-with-symplectomorphism}. 
Then there exists a symplectomorphism $\mathcal{U}$ such that
$\big(\begin{smallmatrix}\gamma'' & 0\\
0 & \overline{\gamma''}+1
\end{smallmatrix}\big)=\mathcal{U}^{*}\big(\begin{smallmatrix}\gamma' & 0\\
0 & \overline{\gamma'}+1
\end{smallmatrix}\big)\mathcal{U}$. As $\cU^*\cS\cU=\cS$, this is equivalent to
\eqn \label{eqn:gamma-prime-gamma-second}
\begin{pmatrix}\gamma''+1/2 & 0\\
0 & \overline{\gamma''}+1/2
\end{pmatrix}
=\mathcal{U}^{*}\begin{pmatrix}\gamma'+1/2 & 0\\
0 & \overline{\gamma'}+1/2 \,,
\end{pmatrix}\mathcal{U}
\eeqn
and we want to prove that $\gamma'$ and $\gamma''$ are unitarily equivalent in $L^2$. 
The off-diagonal entries in \eqref{eqn:gamma-prime-gamma-second} yield $u^{*}(\gamma'+1/2)v+v^{T}(\gamma'+1/2)\bar{u}=0$ and as $\mathcal U$ is a symplectomorphism, we get from \eqref{eq:def-symplecto-alternative} that $u$ is invertible and $v\bar{u}^{-1}=u^{*-1}v^{T}$. Thus,
\[
(\gamma'+\frac{1}{2})v\bar{u}^{-1}+v\bar{u}^{-1}(\gamma'+\frac{1}{2})=0 \,.
\]
We can now use a known method to solve the Lyapunov (or Sylvester) equations:
\begin{multline*}
v\bar{u}^{-1}  = -\int_0^\infty \frac{d}{dt} \Big( e^{-t(\gamma'+\frac{1}{2})}v\bar{u}^{-1} e^{-t(\gamma'+\frac{1}{2})}\Big) dt \\
 =  \int_0^\infty e^{-t(\gamma'+\frac{1}{2})}\Big(\big(\gamma'+\frac{1}{2}\big)v\bar{u}^{-1}+v\bar{u}^{-1}\big(\gamma'+\frac{1}{2}\big)\Big) e^{-t(\gamma'+\frac{1}{2})} dt 
=0\,,
\end{multline*}
where we used that $\gamma + 1/2\geq 1/2$, so that there is no problem in handling the integrals. Hence $v=0$, and, using \eqref{eq:def-symplecto-alternative} again, $u$ is a unitary operator. And thus $\gamma''=u^{*}\gamma'u$
which proves the result.
\end{proof}

We now write the HFB equations in a form that is reminiscent of a Hamiltonian structure,
and use it to give a direct proof of the conservation of the energy.

\textbf{Notation:} For $\phi\in\Hd$, $\mathcal{U}=\big(\begin{smallmatrix}u & v\\
\bar{v} & \bar{u}
\end{smallmatrix}\big)\in \SobSymp$ a symplectomorphism, and $ \gamma_{0}^{\prime} \in \mathcal H^1$ non-negative. We set 
\begin{multline*}
\mathcal{H}_{\gamma_0^\prime}(\phi,u,v):  =\langle\phi,h\phi\rangle+\tr[(u^{*}\gamma_{0}^{\prime}u+v^{T}(1+\bar{\gamma}_{0}^{\prime})\bar{v})(h+b [|\phi\rangle\langle\phi|])]\\
 +\frac{1}{2}\tr[(u^{*}\gamma_{0}^{\prime}u+v^{T}(1+\bar{\gamma}_{0}^{\prime})\bar{v})b [u^{*}\gamma_{0}^{\prime}u+v^{T}(1+\bar{\gamma}_{0}^{\prime})\bar{v}]]\\
 +\frac{1}{2}\tr[k [u^{*}\gamma_{0}^{\prime}v+v^{T}(1+\bar{\gamma}_{0}^{\prime})\bar{u}+|\phi\rangle\langle\bar{\phi}|]
 (v^{*}\gamma_{0}^{\prime}u+u^{T}(1+\bar{\gamma}_{0}^{\prime})\bar{v}+|\bar{\phi}\rangle\langle\phi|)] \,.
\end{multline*}
In the next proposition and its proof we use the  abbreviations $h (t)\equiv h (\gamma_t^{\phi_{t}})$ and $k (t)\equiv  k (\sigma^{\phi_{t}}_t),$ where, recall, $\gamma^{\phi}:=\gamma+|\phi\rangle\langle\phi|$ and $\sigma^{\phi}:=\sigma +\phi\otimes\phi$, and 
$ h (\gamma)$ and $  k (\sigma)$ are defined in \eqref{h} and \eqref{k}.

\begin{proposition}\label{prop:quasi-hamiltonian-structure}
Let 
$\rho_t=(\phi_t,\gamma_t,\sigma_t)\in C^{0}([0,T);X^{3})\cap C^{1}([0,T);\XL)
$ be a solution to the HFB equations \eqref{eq:BHF-phi}$\sim$\eqref{eq:BHF-sigma} in the classical sense, on an interval $[0,T)$, with $T>0$. 
Let $\mathcal{U}_{t}$ and $\gamma_0^\prime$ be as in Proposition~\ref{prop:symplectic-structure}.

Then $\mathcal{E}(\phi_{t},\gamma_{t},\sigma_{t})=\mathcal{H}_{\gamma_0^\prime}(\phi_{t},u_{t},v_{t})$ and the derivatives
of $\mathcal{H}_{\gamma_0^\prime}$ and of $(\phi_{t},u_{t,}v_{t})$
are linked through the equations
\begin{align}
\frac{\partial\mathcal{H}_{\gamma_0^\prime}}{\partial\langle\phi|}(\phi_{t},u_{t},v_{t}) & =\phantom{-\gamma_{0}^{\prime} \,}i\partial_{t}\phi_{t}\label{eq:dH-dPhi-star} \,, \allowdisplaybreaks \\
\frac{\partial\mathcal{H}_{\gamma_0^\prime}}{\partial u^{*}}(\phi_{t},u_{t},v_{t}) & =\phantom{-}\gamma_{0}^{\prime} \, i\partial_{t}u_{t}+\frac{1}{2}v_{t}\overline{k (t)}\label{eq:dH-du-star} \,, \allowdisplaybreaks \\
\frac{\partial\mathcal{H}_{\gamma_0^\prime}}{\partial v^{*}}(\phi_{t}, u_{t},v_{t}) & =-\gamma_{0}^{\prime} \, i\partial_{t}v_{t}+v_{t}\overline{h (t)}+\frac{1}{2}u_{t}k (t)\label{eq:dH-dv-star} \,.
\end{align}
The conservation of the energy $\mathcal{E}(\phi_{t},\gamma_{t},\sigma_{t})$
follows.
\end{proposition}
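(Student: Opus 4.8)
The plan is to establish the three assertions in turn: first the pointwise identity $\mathcal E(\phi_t,\gamma_t,\sigma_t)=\mathcal H_{\gamma_0^\prime}(\phi_t,u_t,v_t)$, then the three derivative relations \eqref{eq:dH-dPhi-star}--\eqref{eq:dH-dv-star}, and finally the conservation of energy, which is deduced from them. For the identity I would start from Proposition~\ref{prop:symplectic-structure}(ii), which gives $\Gamma_t=\mathcal U_t^*\Gamma_0^\prime\mathcal U_t$. Writing $\mathcal U_t^*=\big(\begin{smallmatrix}u_t^* & v_t^T\\ v_t^* & u_t^T\end{smallmatrix}\big)$ and reading off the blocks yields $\gamma_t=u_t^*\gamma_0^\prime u_t+v_t^T(1+\bar\gamma_0^\prime)\bar v_t$ and $\sigma_t=u_t^*\gamma_0^\prime v_t+v_t^T(1+\bar\gamma_0^\prime)\bar u_t$; taking the adjoint of the latter and adding $|\phi_t\rangle\langle\bar\phi_t|$ produces exactly the two operators $\sigma_t^{\phi_t}$ and $(\sigma_t^{\phi_t})^*$ sitting inside the last trace of $\mathcal H_{\gamma_0^\prime}$. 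Substituting these expressions into the explicit energy \eqref{eq:energy}, after rewriting $\tr[h(\gamma+|\phi\rangle\langle\phi|)]+\tr[b[|\phi\rangle\langle\phi|]\gamma]=\langle\phi,h\phi\rangle+\tr[\gamma(h+b[|\phi\rangle\langle\phi|])]$ and the final integral as $\tfrac12\tr[k(\sigma^{\phi})(\sigma^{\phi})^*]$, reproduces $\mathcal H_{\gamma_0^\prime}$ term by term; this part is a direct substitution.

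Next I would compute the three partial derivatives of the explicit polynomial $\mathcal H_{\gamma_0^\prime}$, treating $\phi,\bar\phi,u,u^*,v,v^*$ as independent and repeatedly using the trace symmetries $\tr[b[A]B]=\tr[Ab[B]]$ and $\tr[k(A)B]=\tr[Ak(B)]$. For $\phi$ the term $\tfrac12\tr[b[\gamma]\gamma]$ is $\phi$-independent (as $\gamma$ does not involve $\phi$), the two quadratic-in-$\phi$ terms contribute $(h+b[\gamma])\phi=h(\gamma_t)\phi$, and the anomalous term contributes $k(\sigma_t^{\phi_t})\bar\phi$; by the HFB equation \eqref{eq:BHF-phi} the sum is $i\partial_t\phi_t$, which is \eqref{eq:dH-dPhi-star}. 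For $u^*$ and $v^*$ the chain rule applied to the quadratic maps $\gamma=u^*\gamma_0^\prime u+v^T(1+\bar\gamma_0^\prime)\bar v$ and $\sigma=u^*\gamma_0^\prime v+v^T(1+\bar\gamma_0^\prime)\bar u$ collects the $h$- and $b$-contributions into a factor built from $\gamma_0^\prime$, $u_t$ (resp. $v_t$) and $h(t)=h(\gamma_t^{\phi_t})$, and the $k$-contribution into a factor proportional to $\gamma_0^\prime v_t\,\overline{k(t)}$ with $k(t)=k(\sigma_t^{\phi_t})$. To recognise the right-hand sides of \eqref{eq:dH-du-star}--\eqref{eq:dH-dv-star} I would then insert the component form of \eqref{eq:evolution-symplecto} (equivalently \eqref{eqn:symplectic-equation-for-U}), namely $i\partial_t u_t^*=h(t)u_t^*+k(t)v_t^*$ and $i\partial_t v_t^*=-\overline{h(t)}v_t^*-\overline{k(t)}u_t^*$, together with their adjoints; these convert the $h$- and $k$-factors into $\gamma_0^\prime\,i\partial_t u_t$ and $-\gamma_0^\prime\,i\partial_t v_t$, leaving precisely the residual anomalous pieces $\tfrac12 v_t\overline{k(t)}$ and $v_t\overline{h(t)}+\tfrac12 u_t k(t)$.

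Conservation of energy then follows quickly. Since $\mathcal H_{\gamma_0^\prime}$ is real-valued, the chain rule gives
\[
\tfrac{d}{dt}\mathcal H_{\gamma_0^\prime}(\phi_t,u_t,v_t)=2\Re\Big[\langle\partial_{\langle\phi|}\mathcal H,\partial_t\phi_t\rangle+\tr[(\partial_{u^*}\mathcal H)\,\partial_t u_t^*]+\tr[(\partial_{v^*}\mathcal H)\,\partial_t v_t^*]\Big].
\]
Substituting \eqref{eq:dH-dPhi-star}--\eqref{eq:dH-dv-star}, every contribution carrying the weight $\gamma_0^\prime$ becomes purely imaginary, namely $i\|\partial_t\phi_t\|^2$, $i\tr[\partial_t u_t^*\,\gamma_0^\prime\,\partial_t u_t]$ and $-i\tr[\partial_t v_t^*\,\gamma_0^\prime\,\partial_t v_t]$ (using $\gamma_0^\prime\ge0$), and is hence annihilated by $\Re$; the remaining anomalous cross-terms cancel upon inserting once more the evolution equations for $u_t^*,v_t^*$ and using the self-adjointness of $h(t)$ and the symmetry $k(t)^T=k(t)$. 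Therefore $\tfrac{d}{dt}\mathcal E(\phi_t,\gamma_t,\sigma_t)=\tfrac{d}{dt}\mathcal H_{\gamma_0^\prime}(\phi_t,u_t,v_t)=0$.

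I expect the main obstacle to lie in the derivative computation and its matching: the quadratic parametrizations of $\gamma_t$ and $\sigma_t$ mix $u,v$ with their transposes and conjugates ($u^T=Cu^*C$, $\bar u=CuC$, and likewise for $v$), so the operator-valued Wirtinger derivatives $\partial_{u^*},\partial_{v^*}$ must be taken holding the conjugate variables fixed, and the precise numerical coefficients — the factors $\tfrac12$ and the split between $h$- and $k$-contributions, together with the signs fixed by the chosen conventions — must come out exactly as in \eqref{eq:dH-du-star}--\eqref{eq:dH-dv-star}. These are also exactly the data that make the residual anomalous terms cancel in the energy computation, so the whole argument hinges on carrying this bookkeeping through correctly.
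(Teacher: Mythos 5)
Your proposal follows essentially the same route as the paper's own proof: read off $\gamma_t=u_t^*\gamma_0^\prime u_t+v_t^T(1+\bar\gamma_0^\prime)\bar v_t$ and $\sigma_t=u_t^*\gamma_0^\prime v_t+v_t^T(1+\bar\gamma_0^\prime)\bar u_t$ from the diagonalization \eqref{eq:symplectic-transfo-Gamma}, substitute into \eqref{eq:energy} to get $\mathcal H_{\gamma_0^\prime}$, compute the Wirtinger derivatives and identify them with \eqref{eq:dH-dPhi-star}--\eqref{eq:dH-dv-star} via the HFB equation for $\phi_t$ and the component equations \eqref{eq:HFB-u}--\eqref{eq:HFB-v} for $u_t,v_t$, then conclude conservation by the chain rule, the purely imaginary $\gamma_0^\prime$-weighted terms being killed by taking real parts, and cancellation of the residual anomalous terms. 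The one point your sketch leaves implicit is that this last cancellation also requires the symplectomorphism identity $v_t^T\bar u_t=u_t^*v_t$ from \eqref{eq:def-symplecto-alternative} (not just self-adjointness of $h(t)$ and symmetry of $k(t)$), and, as you yourself note, the signs must be taken with the paper's convention \eqref{eq:HFB-u}--\eqref{eq:HFB-v} for the bookkeeping to close.
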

\begin{proof}Eq.~\eqref{eq:symplectic-transfo-Gamma} is equivalent to
\begin{align*}
\gamma_{t} & =u_{t}^{*}\gamma_{0}^{\prime}u_{t}+v_{t}^{T}(1+\bar{\gamma}_{0}^{\prime})\bar{v}_{t} \,, \\
\sigma_{t} & =u_{t}^{*}\gamma_{0}^{\prime}v_{t}+v_{t}^{T}(1+\bar{\gamma}_{0}^{\prime})\bar{u}_{t} \,.
\end{align*}
Hence, we can rewrite the expression of the energy in terms of $\phi_t$,
$u_t$, and $v_t$ as $\mathcal{E}(\phi_{t},\gamma_{t},\sigma_{t})=\mathcal{H}_{\gamma_0^\prime}(\phi_{t},u_{t},v_{t})$.
We then compute the derivatives of $\mathcal H_{\gamma_0^\prime}$:
\begin{align*}
\frac{\partial\mathcal{H}_{\gamma_0^\prime}}{\partial\langle\phi|}(\phi,u,v) & =h\phi+b [u^{*}\gamma_{0}^{\prime}u+v^{T}(1+\bar{\gamma}_{0}^{\prime})\bar{v}]\phi+k [\sigma+\phi\otimes\phi]|\bar{\phi}\rangle  \,, \allowdisplaybreaks\\
\frac{\partial\mathcal{H}_{\gamma_0^\prime}}{\partial u^{*}}(\phi,u,v) & =\gamma_{0}^{\prime}u(h+b [|\phi\rangle\langle\phi|]+b [u^{*}\gamma_{0}^{\prime}u+v^{T}(1+\bar{\gamma}_{0}^{\prime})\bar{v}]) \\
 & \quad+(\frac{1}{2}+\gamma_{0}^{\prime})v k [v^{*}\gamma_{0}^{\prime}u+u^{T}(1+\bar{\gamma}_{0}^{\prime})\bar{v}+|\bar{\phi}\rangle\langle\phi|] \,, \allowdisplaybreaks\\
\frac{\partial\mathcal{H}_{\gamma_0^\prime}}{\partial v^{*}}(\phi,u,v) & =(1+\gamma_{0}^{\prime})v(\bar{h}+b [|\bar{\phi}\rangle\langle\bar{\phi}|]+b [u^{T}\bar{\gamma}_{0}^{\prime}\bar{u}+v^{*}(1+\gamma_{0}^{\prime})v])\\
 & \quad+(\frac{1}{2}+\gamma_{0}^{\prime})u k [u^{*}\gamma_{0}^{\prime}v+v^{T}(1+\bar{\gamma}_{0}^{\prime})\bar{u}+|\phi\rangle\langle\bar{\phi}|] \,.
\end{align*}
Replacing $(\phi,u,v)$ by $(\phi_{t},u_{t},v_{t})$ yields
\begin{align*}
\frac{\partial\mathcal{H}_{\gamma_0^\prime}}{\partial\langle\phi|}(\phi_{t},u_{t},v_{t}) & =h\phi_{t}+b [\gamma_{t}]\phi_t+k (t)\bar{\phi}_t \,, \allowdisplaybreaks\\
\frac{\partial\mathcal{H}_{\gamma_0^\prime}}{\partial u^{*}}(\phi_{t},u_{t},v_{t}) & =\gamma_{0}^{\prime}u_{t}h (t)+(\frac{1}{2}+\gamma_{0}^{\prime})v_{t}\overline{k (t)} \,, \allowdisplaybreaks\\
\frac{\partial\mathcal{H}_{\gamma_0^\prime}}{\partial v^{*}}(\phi_{t},u_{t},v_{t}) & =(1+\gamma_{0}^{\prime})v_{t}\overline{h (t)}+(\frac{1}{2}+\gamma_{0}^{\prime})u_{t}k (t) \,,
\end{align*}
which are in fact \eqref{eq:dH-dPhi-star}, \eqref{eq:dH-du-star}, \eqref{eq:dH-dv-star} using the HFB equations. 
Hence, using first the chain rule, then (\ref{eq:dH-dPhi-star}),
(\ref{eq:dH-du-star}), and (\ref{eq:dH-dv-star}),
\begin{align*}
\frac{d}{dt}\mathcal{H}_{\gamma_0^\prime}(\phi_{t},u_{t},v_{t}) & =\langle\partial_{t}\phi_{t}|\frac{\partial\mathcal{H}_{\gamma_0^\prime}}{\partial\langle\phi|}(\phi_{t},u_{t},v_{t})+\frac{\partial\mathcal{H}_{\gamma_0^\prime}}{\partial|\phi\rangle}(\phi_{t},u_{t},v_{t})|\partial_{t}\phi_{t}\rangle\\
 & \quad+\tr[\partial_{t}u_{t}^{*}\frac{\partial\mathcal{H}_{\gamma_0^\prime}}{\partial u^{*}}(\phi_{t},u_{t},v_{t})]+\tr[\partial_{t}u_{t}\frac{\partial\mathcal{H}_{\gamma_0^\prime}}{\partial u}(\phi_{t},u_{t},v_{t})]\\
 & \quad+\tr[\partial_{t}v_{t}^{*}\frac{\partial\mathcal{H}_{\gamma_0^\prime}}{\partial v^{*}}(\phi_{t},u_{t},v_{t})]+\tr[\partial_{t}v_{t}\frac{\partial\mathcal{H}_{\gamma_0^\prime}}{\partial v}(\phi_{t},u_{t},v_{t})]\\
 & =\re \tr[\partial_{t}u_{t}^{*}v_{t}\overline{k (t)}+\partial_{t}v_{t}^{*}(v_{t}\overline{h (t)}+\frac{1}{2}u_{t}k (t))] \,.
\end{align*}
We can now use that the evolution equation \eqref{eq:evolution-symplecto} on $\mathcal U_t$   is equivalent to
\begin{align}
i\partial_{t}u_{t} & =u_{t}h (t)+v_{t}\overline{k (t)}\label{eq:HFB-u} \,, \\
i\partial_{t}v_{t} & =-u_{t}k (t)-v_{t}\overline{h (t)}\label{eq:HFB-v} \,,
\end{align}
along with $\tr[A^{T}]=\tr[A]$
and the cyclicity of trace to group all the terms as in
\begin{multline*}
\frac{d}{dt} \mathcal{H}_{\gamma_0^\prime}(\phi_{t},u_{t},v_{t}) 
=\im \tr[\overline{k (t)}h (t)(v_{t}^{T}\bar{u}_{t}-u_{t}^{*}v_{t})-\overline{k (t)}k (t)v_{t}^{*}v_{t}\\
+\overline{k (t)}h (t)v_{t}^{T}\bar{u}_{t}+2h (t)h (t)v_{t}^{T}\bar{v}_{t}+\overline{k (t)}k (t)u_{t}^{T}\bar{u}_{t}+h (t)k (t)u_{t}^{T}\bar{v}_{t}]
\end{multline*}
which then vanishes since $v_{t}^{T}\bar{u}_{t}=u_{t}^{*}v_{t}$ for
a symplectomorphism (see \eqref{eq:def-symplecto-alternative}), and the terms $\overline{k (t)}k (t)v_{t}^{*}v_{t}$,
$h (t)h (t)v_{t}^{T}\bar{v}_{t}$, $\overline{k (t)}k (t)u_{t}^{T}\bar{u}_{t}$,
and $\overline{k (t)}h (t)v_{t}^{T}\bar{u}_{t}+h (t)k (t)u_{t}^{T}\bar{v}_{t}$
give real traces. 
\end{proof}

\section{Relation with the HFB eigenvalue equations}
\label{sec:Relation_with_HFB_eigengalues_equations}

In this section, we link our work with the HFB eigenvalue equations
often encountered in the physics literature \cite{Griffin1996, DoddEdwardsClarkBurnett1998, ParkinsWalls1981}.

To be explicit, we give, in Table~\ref{tab:Correspondance-Griffin-BBCFS},
the correspondence between the notations of this article and those
of an article of Griffin \cite{Griffin1996}. 
\begin{table}
\begin{centering}
\begin{tabular}{|c||c|c|c|c|c|c|c|}
\hline 
this article & $\phi(x)$ & $\gamma(x;x)$ & $\sigma(x,x)$ & $h_{g\delta}$ & $k_{g\delta}$ & $N_{j}$ & $V$\tabularnewline
\cite{Griffin1996} & $\Phi(\boldsymbol{r})$ & $\tilde{n}(\boldsymbol{r})$ & $\tilde{m}(\boldsymbol{r})$ & $\hat{\mathscr{L}}$ & $gm(\boldsymbol{r})$ & $N_{0}(E_{j})$ & $U_{ext}-\mu$\tabularnewline
\hline 
\end{tabular}
\par\end{centering}

\caption{\label{tab:Correspondance-Griffin-BBCFS}Correspondence between the
notations of this article and some notations common in the physics
literature \cite{Griffin1996}. }
\end{table}
We note that the setting in \cite{Griffin1996} is not exactly the same as ours, since the class of external
potentials $V$ that we consider excludes trapping potentials, and the
solutions  $\Phi(\boldsymbol{r})$ considered in  \cite{Griffin1996} are
time-independent.
Moreover, we note that in this paper, we give rigorous proofs in the case
of a two-body interaction potential $v$ such that $v^{2}$ is relatively
form-bounded with respect to the Laplacian, which excludes potentials
as singular as $g\delta$; hence, the correspondence we establish in
this section is only formal. 
Nevertheless, we believe that pointing out this relationship is useful.

Moreover, we note that in the physics literature (see
e.g.,~\cite[(23)]{Griffin1996}), the HFB eigenvalue equations are often investigated using a
generalized eigenbasis decomposition (using vectors often denoted by $u_j$, $v_j$ which play the same
role as below), which we
can relate to our approach in the following manner,
based on our discussion from Section~\ref{sec:symplectic-hamiltonian-structure}.

Let $\mathcal{U}_{t}=\big(\begin{smallmatrix}u_{t} & v_{t}\\
\bar{u}_{t} & \bar{v}_{t}
\end{smallmatrix}\big)$, and let $\gamma_{0}^{\prime}\geq0$ be a trace class operator as in Prop.~\ref{prop:symplectic-structure},
with the orthonormal decomposition $\gamma_{0}^{\prime}=\sum_{j\geq0}N_{j}|\zeta_{j}\rangle\langle\zeta_{j}|$.
Let 
\[
u_{j,t}:=u_{t}^{*}\zeta_{j}\qquad\mbox{and}\qquad v_{j,t}:=-v_{t}^{*}\zeta_{j}\,.
\]
Then \eqref{eq:symplectic-transfo-Gamma} yields
\begin{align*}
\gamma_{t} & =\sum_{j\geq0}\big(N_{j}\,|u_{j,t}\rangle\langle u_{j,t}|+(1+N_{j})\,|\bar{v}_{j,t}\rangle\langle\bar{v}_{j,t}|\big)\,,\\
\sigma_{t} & =\sum_{j\geq0}\big(N_{j}\,|u_{j,t}\rangle\langle v_{j,t}|+(1+N_{j})\,|\bar{v}_{j,t}\rangle\langle\bar{u}_{j,t}|\big)\,.
\end{align*}
which yield \cite[(25)]{Griffin1996} by evaluation on the diagonal:
\begin{align}
\gamma_{t}(x;x) & =\sum_{j\geq0}\big(N_{j}\,|u_{j,t}(x)|^{2}+(1+N_{j})\,|v_{j,t}(x)|^{2}\big)\,,\label{eq:gamma-of-u_j-v_j}\\
\sigma_{t}(x,x) & =\sum_{j\geq0}u_{j,t}(x)\bar{v}_{j,t}(x)(1+2N_{j})\,.\label{eq:sigma-of-u_j-v_j}
\end{align}

We now consider a pair interaction potential $v=g\delta$. We assume that
$\phi$ is independent of time and $u_{j,t}$, $v_{j,t}$ have the
simple form 
\begin{equation}
u_{j,t}=e^{-itE_{j}}u_{j,0}\,,\qquad v_{j,t}=e^{-itE_{j}}v_{j,0}\,.\label{eq:v_j-of-t}
\end{equation}
We also distinguish the quantities corresponding to $v=g\delta$ by the index $g\delta$. Then \eqref{eq:evolution-symplecto} formally yields the HFB eigenvalue equations 
\begin{align*}
h_{g\delta}u_{j}-k_{g\delta}v_{j} & =E_{j}u_{j}\,,\\
\overline{h_{g\delta}}v_{j}-\overline{k_{g\delta}}u_{j} & =-E_{j}v_{j}\,,
\end{align*}
as presented in the work of Griffin \cite[Eq.~(23)]{Griffin1996}.
Note that (\ref{eq:gamma-of-u_j-v_j}), (\ref{eq:sigma-of-u_j-v_j}),
and (\ref{eq:v_j-of-t}) imply that $\gamma_{t}(x;x)$ and $\sigma_{t}(x;x)$
are time independent, since the phases simplify.

We conclude that the HFB eigenvalue equations are the stationary version of our 
equation~\eqref{eq:evolution-symplecto}. It amounts to finding eigenvalues and eigenvectors for the
matrix $\Lambda\mathcal{S}$ in \eqref{eq:evolution-symplecto}, which is a
nonlinear problem since $\Lambda$ depends on $\gamma$ and $\sigma$ (that is, on $u$, $v$ and $\gamma_0^\prime$). 
Furthermore, the decomposition in functions $u_{j}$ and $v_{j}$ corresponds to a ``diagonalization'' 
of the generalized one-particle density matrix $\Gamma$ in the sense of 
Proposition~\ref{thm:diagonalization-with-symplectomorphism}.

\section{Existence and Uniqueness of Solutions to the HFB Equations}
\label{sec-GWP-HFB-1}

We prove the global in time existence and uniqueness of mild 
solutions to the time-dependent Hartree-Fock-Bogoliubov equations in the $H^{1}$-setting. 

We recall that, given a Banach space $X$, $f\in C(X)$, a continuous function on $X$, and $-iA$ the infinitesimal generator of a strongly continuous semigroup $G(t)$ on $X$, 
 a continuous function $\rho:[0,T)\to X$ is called a \textit{mild solution} of the problem
 \begin{align}\label{eq:def_problem}
\begin{cases}
i\partial_t \rho & =A\rho+f(\rho)\,,\\
\rho(0) & =\rho_{0}\in X\,,
\end{cases}
\end{align}
if $\rho_t$ solves the fixed point equation in integral form  (with the integral in Bochner's sense)\begin{equation}\label{eq:def_mild_HFB}
\rho_t=G(t)\rho_0 -i\int_0^t G(t-s) f(\rho_s)\,ds.
\end{equation}

In what follows we use the notation  $A\lesssim B$ to stand for an inequality of the form $A\leq CB$, for some constant where $C>0$. 
The main result of this section is the following 

\begin{theorem}
\label{thm:existence-uniquenes-up-to-Coulomb}
Let $d\leq3$ and $\rho_{0}=(\phi_{0},\gamma_{0},\sigma_{0})\in \XL$. 
Assume that the potentials $V$ and $v$ satisfy Conditions (a) and (b') of Subsection \ref{sec:quant-mb-probl}. 
\DETAILS{\begin{itemize}
\item $V$ is infinitesimally 
bounded with respect to the Laplacian,
\item $v\in W^{p, 1}(\R^d)$ for some $p >  d$. 
\end{itemize}}
Then the following hold:
\begin{enumerate}
\item[(i)] \label{ite:local-existence-mild}
\textit{Existence and uniqueness of a local mild solution:}

There exists a unique maximal solution
\[
(\rho_{t})_{t\in[0,T)}=(\phi_{t},\gamma_{t},\sigma_{t})_{t\in[0,T)}\in C^{0}([0,T);\XL)
\]
to the HBF equations \eqref{eq:BHF-phi} to \eqref{eq:BHF-sigma} 
in the mild sense, for some $0<T\leq \infty$.

\item[(ii)]\label{ite:local-existence-classical}
\textit{Existence and uniqueness of a local classical solution:}

If $\rho_{0}\in X^{3}$, then 
\[
(\rho_{t})_{t\in[0,T)}\in C^{0}([0,T);X^{3})\cap C^{1}([0,T);\XL)
\]
and $\rho_{t}$ satisfies the HBF equations \eqref{eq:BHF-phi} to \eqref{eq:BHF-sigma} in the classical sense.

\item[(iii)]\label{ite:conservation-laws}
\textit{Conservation laws:}

The number of particles $\tr[\gamma_t]$ and the energy \eqref{eq:energy} are constants.

\item[(iv)] \textit{Positivity preservation property}: \\ 
If $\Gamma=\begin{pmatrix} \gamma & \sigma\\
 \bar{\sigma} & 1+\bar \gamma \end{pmatrix}\geq 0$ at $t=0$, 
then this holds for all times.

\item[(v)]\label{ite:global-existence}
\textit{Existence of a global solution:} \\
If additionally $\Gamma_0\geq 0$, then the solution 
$\rho_{t}$ is global, i.e., $T=\infty$.
\end{enumerate}
\end{theorem}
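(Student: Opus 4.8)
The plan is to cast the HFB equations \eqref{eq:BHF-phi}--\eqref{eq:BHF-sigma} as an abstract semilinear evolution equation of the form \eqref{eq:def_problem} on the Banach space $X=\XL$, with $-iA$ the generator of the free flow and $f$ collecting all the nonlinear (interaction) terms, and then to run the standard program: local solvability by a fixed-point argument, a regularity bootstrap for classical solutions, the conservation laws, and finally a global a priori bound that closes the blow-up alternative. Here $A$ is the free HFB generator built from $h=-\Delta+V$ acting on each of the three components (via $\phi\mapsto h\phi$ and commutator/anticommutator actions on $\gamma$ and $\sigma$); since $h$ is self-adjoint and bounded below, because $V$ is infinitesimally form-bounded, $-iA$ generates a strongly continuous semigroup $G(t)$ on $\XL$ with $\|G(t)\|\le e^{\kappa t}$ for some $\kappa\in\R$.

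For part~\eqref{ite:local-existence-mild} I would first show that $f$ is locally Lipschitz on $\XL$; this is the technical heart of the local theory. One must estimate $b[\gamma]=v*d(\gamma)+v\,\sharp\,\gamma$ and $k(\sigma)=v\,\sharp\,\sigma$, together with the products $h(\gamma)\phi$, the commutators $[h(\gamma^{\phi}),\gamma]$, the anticommutators $[h(\gamma^{\phi}),\sigma]_+$, and the cubic term $k(\sigma^{\phi})\bar\phi$, and show that they map bounded sets of $\XL$ Lipschitz-continuously into the $X^{0,\infty}$-type targets. The key inputs are Lemma~\ref{lem:d-k-prop}, which controls the sharp product $v\,\sharp\,\cdot$ and the diagonal $d(\cdot)$, the hypothesis $v^2\le CM^2$, and the Sobolev embeddings available when $d\le3$ (so that $H^1\hookrightarrow L^p$ suffices to bound terms such as $|\phi|^2\phi$). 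Given the Lipschitz bound, the contraction mapping principle applied to the Duhamel map \eqref{eq:def_mild_HFB} on a short interval yields a unique maximal mild solution in $C^0([0,T);\XL)$, together with the blow-up alternative: either $T=\infty$ or $\|\rho_t\|_{\XL}\to\infty$ as $t\nearrow T$.

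For part~\eqref{ite:local-existence-classical}, starting from $\rho_0\in X^3$ I would re-run the same fixed-point argument in the stronger space $X^3$, which plays the role of the domain of $A$ (the generator losing two derivatives into $\XL$); uniqueness identifies this solution with the one from part~\eqref{ite:local-existence-mild}. Since for $\rho\in X^3$ the right-hand side $A\rho+f(\rho)$ lies in $\XL$ and depends continuously on $t$, the Duhamel formula upgrades the mild solution to a genuine $C^1([0,T);\XL)$ classical solution. With this regularity in hand, part~\eqref{ite:conservation-laws} follows by differentiating directly along the flow: conservation of the particle number $\cN(\phi_t,\gamma_t,\sigma_t)=\tr[\gamma_t]+\|\phi_t\|_{L^2}^2$ and of the energy \eqref{eq:energy} is obtained using \eqref{eq:BHF-gamma}--\eqref{eq:BHF-sigma}, cyclicity of the trace, and the relations $\gamma=\gamma^*$, $\sigma^*=\bar\sigma$; alternatively, energy conservation is exactly the content of Proposition~\ref{prop:quasi-hamiltonian-structure}, and particle-number conservation is the instance $\opA=\opN$ of Theorem~\ref{thm:preservation-particle-number} and Corollary~\ref{cor-partnumb-en-conserv}.

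The global statement~\eqref{ite:global-existence} is where the hypothesis $\Gamma_0\ge0$ is decisive, and I expect this to be the main obstacle. First I would propagate positivity: by Proposition~\ref{prop:symplectic-structure}(ii) one has $\Gamma_t=\mathcal U_t^*\Gamma_0^\prime\mathcal U_t$ with $\Gamma_0^\prime=\mathrm{diag}(\gamma_0^\prime,1+\bar\gamma_0^\prime)$ and $\gamma_0^\prime\ge0$, so $\Gamma_t\ge0$ for all $t<T$, whence $\gamma_t\ge0$ and the constraints of Proposition~\ref{prp-gampos-sigsym-1} hold at every time. The decisive consequence is part~(4) of that proposition, $\tfrac12\|\sigma_t\|_{\Hsdd}^2\le\|\gamma_t\|_{\cHLd}(1+\tr[\gamma_t])$, which bounds the $\sigma$-component of the $\XL$ norm by the $\gamma$-component; moreover $\gamma_t\ge0$ gives $\|M\gamma_t M\|_{\cL^1}=\tr[M\gamma_t M]=\tr[\gamma_t]+\tr[(-\Delta)\gamma_t]$. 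It then remains to bound $\|\gamma_t\|_{\cHLd}$ and $\|\phi_t\|_{\Hd}$ uniformly on $[0,T)$. I would extract this from the two conserved quantities: the particle number controls $\tr[\gamma_t]$ and $\|\phi_t\|_{L^2}^2$, while the conserved energy \eqref{eq:energy}, combined with $\gamma_t\ge0$---so that the leading term $\tr[h(\gamma_t+|\phi_t\rangle\langle\phi_t|)]$ bounds $\|\nabla\phi_t\|_{L^2}^2+\tr[(-\Delta)\gamma_t]$ from below up to lower-order contributions---and the bound $v^2\le CM^2$ (to treat the interaction terms as lower order), yields coercivity of the energy over the kinetic part of $\|\rho_t\|_{\XL}$. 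The delicate point, and the crux of the whole argument, is controlling the $\sigma$- and $\phi$-dependent quartic interaction terms so that they do not overwhelm the kinetic energy; this is precisely where positivity and the Proposition~\ref{prp-gampos-sigsym-1} estimate close the loop. Assembling these bounds gives $\sup_{t<T}\|\rho_t\|_{\XL}<\infty$, which contradicts the blow-up alternative unless $T=\infty$.
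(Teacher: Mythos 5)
Your overall architecture (Duhamel fixed point for a mild solution, upgrade to classical solutions, conservation laws imported from the earlier results, blow-up alternative closed by positivity plus the conserved quantities) matches the paper's. There are, however, two genuine gaps.

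First, the local theory as you set it up fails for exactly the class of potentials the theorem is designed to cover. You put only the $h$-terms into the generator $A$ and move every interaction term into the nonlinearity $f$; in particular the inhomogeneous term $k(\sigma^{\phi}_t)$ in \eqref{eq:BHF-sigma} contains the \emph{linear} term $k[\sigma]=v\,\sharp\,\sigma$, which in your scheme becomes part of $f$. For a Picard iteration with a non-smoothing (Schr\"odinger-type) group, $f$ must be locally Lipschitz from $\XL$ into $\XL$; mapping into ``$X^{0,\infty}$-type targets'' does not close the Duhamel integral in $\XL$ (and $X^{0,\infty}$ is in any case the wrong scale: the theorem lives on the trace-class/Hilbert--Schmidt spaces $X^{j}$, and the lemma you invoke, Lemma~\ref{lem:d-k-prop}, requires $v\in L^1$, which is not assumed here). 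But under the sole hypothesis $v^2\le CM^2$ --- which admits the Coulomb potential --- the map $\sigma\mapsto v\,\sharp\,\sigma$ does \emph{not} send $\Hsdd$ to $\Hsdd$: differentiating the kernel $v(x-y)\sigma(x,y)$ produces $(\nabla v)(x-y)\sigma(x,y)$, and $\nabla v$ is not controlled by the hypothesis (for Coulomb $|\nabla v|^{2}\sim|x|^{-4}$, which is not form-bounded by $M^{2}$). Lemma~\ref{prop:continuity_of_B_and_K} accordingly gives only $k:\Hsdd\to M\cL^2 M^{-1}$, a genuinely different space. This is precisely why the paper's decomposition \eqref{eq:def-A} keeps $k[\sigma]$ \emph{inside} $A$, so that the $\sigma$-component of the flow is generated by $\tilde h=h\otimes 1+1\otimes h+v(x-y)$, and why Lemma~\ref{lem:G(t)} proves boundedness of $e^{-it\tilde h}$ on $\Hsdd$ via the form bound; the remaining terms in \eqref{eq:def_f_1}--\eqref{eq:def_f_3} always pair a $b[\cdot]$ or $k[\cdot]$ factor with another $\phi$, $\gamma$ or $\sigma$ factor, which is what makes Lemma~\ref{lem:f_C1} work. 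With your choice of $A$, the contraction step is unavailable. (Your plan for part \eqref{ite:local-existence-classical} --- rerunning the fixed point in $X^3$ --- also silently requires semigroup bounds and Lipschitz estimates on $X^3$ that you never state; the paper instead invokes Lemma~\ref{lem:from_mild_to_classical}, which needs none of that.)

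Second, in the global part you name the right ingredients (propagation of $\Gamma_t\geq 0$ via Proposition~\ref{prop:symplectic-structure}(ii), Proposition~\ref{prp-gampos-sigsym-1}(4) to bound $\sigma_t$ by $\gamma_t$, conservation of $\cN$ and $\cE$), but the decisive coercivity estimate --- that the interaction part of \eqref{eq:energy} cannot overwhelm the kinetic part --- is asserted rather than proved; you yourself call it ``the crux''. The paper's actual mechanism is different: it returns to second quantization, proves the operator inequalities \eqref{eqn:estimate-dGamma-V} and \eqref{eqn:estimate-dGamma2-v}, hence $\mathbb{T}\le 3\opH+C\opN^{3}$ on Fock space, and evaluates this inequality in the state $\qf_t$. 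This is exactly where $\Gamma_0\ge 0$ enters: positivity of $\Gamma_t$ is what makes $\qf_t$ a genuine (positive) state, so that operator inequalities pass to expectations, and quasifreeness bounds $\qf_t(\opN^{3})$ by a polynomial in the conserved particle number. If you insist on arguing directly on the functional \eqref{eq:energy}, you must reprove this lower bound by hand, e.g.\ establish
\begin{equation*}
\tr[b[\gamma]\gamma]\;\ge\; -\epsilon\,\tr[-\Delta\gamma]-C_{\epsilon}\,(\tr[\gamma])^{3}
\end{equation*}
together with an analogous bound for $\tfrac12\int v(x-y)\,|\sigma(x,y)+\phi(x)\phi(y)|^{2}\,dxdy$ using $\Gamma\ge0$; as written, that step --- the heart of part \eqref{ite:global-existence} --- is missing.
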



%

\begin{proof}[Proof of Theorem~\ref{thm:existence-uniquenes-up-to-Coulomb}(i) 
 {[Local Mild Solutions]}]
We use the notations introduced at the beginning of Section~\ref{sec-HFB-general-1}. 
The proof is based on a standard fixed point argument (through an application of
the Cauchy-Lipschitz and Picard-Lindel\"of theorem).  
Separating the linear part $A\rho$ and nonlinear part $f(\rho)$, we can write the HFB equations \eqref{eq:BHF-phi} to \eqref{eq:BHF-sigma} in the form
\eqn
i\partial_{t}\rho=A\rho +f(\rho)\,,\label{eq:HBFv_form_in_rho}
\eeqn
where $\rho:=(\phi,\gamma,\sigma)\in X^{2}$. 
Then the linear part in the HFB equations is given by  
\eqn \label{eq:def-A}
A\rho=\big( h\phi \,,\, [h,\gamma] \,,\, [h,\sigma]_{+}+k[\sigma] \big)\,,
\eeqn
with the domain $D(A)=X^2$, and the nonlinear part $f:=(f_{1},f_{2},f_{3})$ by
\begin{align}
\label{eq:def_f_1}
f_{1}(\rho) & =b[\gamma]\phi+k[\sigma+\phi^{\otimes2}]\bar{\phi}\,, \\
\label{eq:def_f_2}
f_{2}(\rho) & =[b[\gamma+|\phi\rangle\langle\phi|],\gamma]+k[\sigma+\phi^{\otimes2}]\bar{\sigma}-\sigma\overline{k[\sigma+\phi^{\otimes2}]}\,, \\
\label{eq:def_f_3}
f_{3}(\rho) & =[b[\gamma+|\phi\rangle\langle\phi|],\sigma]_{+}+[k[\sigma+\phi^{\otimes2}],\gamma]_{+}\,.
\end{align}

From Lemma~\ref{lem:f_C1}, below, we obtain that $f$ 
 is continuously Fr\'echet differentiable in $\XL$ and therefore is locally Lipschitz, 
and from Lemma~\ref{lem:G(t)}, we obtain that $G(t)=\exp(itA)$ defines a strongly continuous 
 uniformly bounded semigroup on~$\XL$. 

Consequently, we can rewrite the HFB equations \eqref{eq:BHF-phi} - \eqref{eq:BHF-sigma} as a fixed point problem
\[
\rho_t=G(t) \rho_0 -i\int_0^t G(t-s) f\big(\rho_s\big) \, ds \,.
\] 
and use the Banach contraction theorem to show that \eqref{eq:BHF-phi} - \eqref{eq:BHF-sigma} have the unique local mild solution to in $\XL$
for the given initial data. (For the details for this standard argument, see \cite[Sect.~9.2e, Thm~3]{McOwen2003}.)
\end{proof}

We will now prove our main Lemmata on $G(t)=\exp(itA)$ and $f$. First, we recall the norms \eqref{norms}. 
Moreover,  if we denote the integral kernel of an operator $\s$ by $\tilde\s$, then the  norm $\|\sigma\|_{\mathcal{H}^j_\s}$ is equivalent to the  norm
\[ \|\sigma\|_{\mathcal{H}^j_\s}\simeq \|\tilde\s\|_{H^j}:= \|(M^{2}\otimes1+1\otimes M^{2})^{j/2}\tilde\s)\|_{L^{2}(\mathbb{R}^{2d})}\,.
\]
\begin{lemma}\label{lem:G(t)}
The operator $A$ generates a strongly continuous semigroup, $G(t)=\exp(itA)$, on $\XL$, uniformly bounded as $\|G(t)\|_{\cB(\XL)}\le 1$. 
\end{lemma}
%
\begin{proof} 
Let $\hat h(\sigma):= [h,\sigma]_{+}+k[\sigma]$. We define $G(t)=\exp(itA)$  on $\rho:=(\phi,\gamma,\sigma)\in X^{j}$ as  \eqn \label{G(t)-def}
G(t)\rho:=(\exp(-ith)\phi,\exp(-ith)\gamma\exp(ith), \exp(-it\hat h)(\sigma)). \eeqn 
We use that $-\Delta$ is $h$-bounded, and $h$ is $-\Delta$-bounded and that $M$ is translationally invariant. For $
(\phi,\gamma,\sigma)\in \XL$ and $k$ s.t. $M\le h+k$,
\begin{equation*}
\|\exp(-ith)\phi\|_{\Hd} = \|(h+k)\exp(-ith)\phi\|_{L^2}  = \|(h+k)\phi\|_{L^2}^{2} \ls \|\phi\|_{\Hd}. 
\end{equation*}
Similarly
$
\|\exp(-ith)\gamma\exp(ith)\|_{\cHLd}\ls \|\gamma\|_{\cHLd}\,.
$

Finally, we define the operator $\tilde h$ acting on $L^2(\R^{2d})$ by the condition $\widetilde{\hat h(\sigma)}= \tilde{h}\tilde\sigma$. 
Then we have $\tilde{h}=h_x+h_y+v(x-y)$, since  the pair potential $v$ be infinitesimally bounded with respect to $-\Delta$, the operator  $\tilde{h}=h_x+h_y+v(x-y)$ 
is self-adjoint and $h$ and $-\Delta_x-\Delta_y$ are mutually relatively bounded. Hence, using \eqref{G(t)-def} and choosing $c$ s.t. $M_x+M_y\le \tilde{h}+c$,
\begin{align}\notag \|\exp(-it([h,\sigma]_{+}+k[\sigma]))\sigma\|_{\mathcal{H}^1_\s}&\simeq \|\exp(-it\tilde{h})\tilde\s\|_{H^1}\\
\notag &\ls \|(\tilde{h}+c)\exp(-it\tilde{h})\tilde\s\|_{L^2}\ls \|\tilde\s\|_{H^1}\simeq \|\sigma\|_{\mathcal{H}^1_\s}\,.
\end{align} 
The strong continuity of $G(t)$ follows from the strong continuity of $\exp(-ith)$ and $\exp(-it\hat h)$. 
\end{proof}  

The 
following lemma allows us to control the nonlinear term $f$ in the HFB equations.

\begin{lemma}\label{lem:f_C1} 
The vector of nonlinear terms $f=(f_1,f_2,f_3)$ defined in Eq.~\eqref{eq:def_f_1}$-$\eqref{eq:def_f_3}   
maps $\XL$ into itself and is continuously Fr\'echet differentiable in $\XL$ ($f\in C^{1}(\XL)$).
\end{lemma}
\begin{proof}[Proof of Lemma~\ref{lem:f_C1}] 
For the first statement it is sufficient to prove that, for the quadratic and cubic  parts of $f$ are bounded as  
\begin{align}\label{f-est-quadr}
&\big\|\big( b[\gamma]\phi+k[\sigma]\bar{\phi}\,,\,[b[\gamma],\gamma]+k[\sigma]\bar{\sigma}-\sigma\overline{k[\sigma]}\,,
[b[\gamma],\sigma]_{+}+[k[\sigma],\gamma]_{+}\big)\big\|_{\XL}\ls \|\rho\|_{\XL}^{2}\,,\\
&\notag \big\|\big( k[\phi^{\otimes2}]\bar{\phi}\,,\,[b[|\phi\rangle\langle\phi|],\gamma]+k[\phi^{\otimes2}]\bar{\sigma}-\sigma\overline{k[\phi^{\otimes2}]}\,,\\
\label{f-est-cub} &\qquad \qquad \qquad \qquad \qquad \qquad \qquad \qquad   [b[|\phi\rangle\langle\phi|],\sigma]_{+}+[k[\phi^{\otimes2}],\gamma]_{+}\big)\big\|_{\XL}\ls \|\rho\|_{\XL}^{3}\,.
\end{align}
All the cubic estimates can be deduced
from their quadratic counterparts using
\[
\||\phi\rangle\langle\phi|\|_{\cH^1_\g}\leq\|\phi\|_{\Hd}^{2}
\quad \text{and} \quad
\|\phi\otimes\phi\|_{\Hd}\leq\|\phi\|_{\Hd}^{2}\,.
\]
We thus only consider the quadratic terms.
Using Lemma~\ref{prop:continuity_of_B_and_K}\eqref{enu:continuity-B-H1L2-LH1}, we estimate
\begin{align*}
\|b[\gamma]\phi\|_{\Hd} 
\ls \|\gamma\|_{\cHLd}\|\phi\|_{\Hd}\ls \|\rho\|_{\XL}^{2}\,.
\end{align*}
%
 For~$k[\sigma]\bar{\phi}$, we use Lemma~\ref{prop:continuity_of_B_and_K}\eqref{enu:continuity-K-H1L2-L2H1} to find
\[
\|k[\sigma]\bar{\phi}\|_{\Hd}
\leq\|M k[\sigma]\|_{\cB}\|\bar{\phi}\|_{L^2}\ls \|\sigma\|_{\Hsdd}\|\phi\|_{L^2}\,.
\]

 We estimate $[b[\gamma],\gamma]$ using Lemma~\ref{prop:continuity_of_B_and_K}.(\ref{enu:continuity-B-H1L2-LH1})
\begin{align*}
\|[b[\gamma],\gamma]\|_{\cHLd} \leq 2 &\|M b[\gamma]M^{-1}M\gamma M\|_{\mathcal L^1} \\
& \leq  2\|b[\gamma]\|_{\cHLd} \|\gamma \|_{ \cHLd}\ls  \|\gamma\|^2_{\cHLd}
 \ls \|\rho \|_{ \XL}^2\,.
\end{align*}

For $k[\sigma]\bar{\sigma}$ (and similarly $\sigma\overline{k[\sigma]}$), 
the inequality $$\|k[\sigma]\bar\s\|_{\cH^{1}_\g}=\|Mk[\sigma]\bar\s M\|_{\cL^{1}}\le\|M k[\sigma]\|_{\cL^{2}}\|\bar\s M\|_{\cL^{2}},$$ Lemma~\ref{prop:continuity_of_B_and_K}\eqref{enu:continuity-K-H1L2-L2H1} (see estimate \eqref{Mk-est} below)  
and $\|\bar\s M\|_{\cL^{2}}\le \|\s\|_{\cH_{\s}^{1}}$ (which follows from the definition of $\|\s\|_{\cH_{\s}^{1}}$) give the estimate 
\begin{align}\label{ksig-est}
\|k[\sigma]\bar{\sigma}\|_{\cHLd} 
\ls \|\sigma\|_{\Hsdd}^{2} \,.  \end{align}

For $b[\gamma]\sigma$ (or similarly $\sigma\overline{b[\gamma]}$), using Lemma~\ref{prop:continuity_of_B_and_K}\eqref{enu:continuity-B-H1L2-LH1}, we obtain
\begin{align*}
\|b[\gamma]\sigma\|_{\Hsdd}  \leq\|M b[\gamma]M^{-1}\|_{\cB}\|M\sigma M\|_{\mathcal{L}^{2}}
  \ls \|\gamma\|_{\cHLd}\|\sigma\|_{\Hsdd} \,.
\end{align*}
And finally $k[\sigma]\bar{\gamma}$ (and similarly $\gamma k[\sigma]$), using Lemma~\ref{prop:continuity_of_B_and_K}.\eqref{enu:continuity-K-H1L2-L2H1} (see estimate \eqref{Mk-est}), we arrive at
\begin{align*}
\|k[\sigma]\bar{\gamma}\|_{\Hsdd}  \leq\|M k[\sigma]\|_{\mathcal{L}^{2}}\|\bar{\gamma}M\|_{\cB}
  \ls \|\sigma\|_{\Hsdd}\|\gamma\|_{\cHLd}\,,
\end{align*}
which completes the proof of \eqref{f-est-quadr} and therefore of \eqref{f-est-cub}.  

To prove that $f$ is Fr\'echet differentiable, we observe that each $f_j$ is a linear combination of multi-linear maps and therefore $df(\rho)\xi$ is of the same form as $f(\rho)$ and can be estimated as above.  \end{proof}

\begin{proof}[Proof of Theorem~\ref{thm:existence-uniquenes-up-to-Coulomb}(ii) 
 {[Local Classical Solutions]}]
The existence of classical solutions to the HFB equations for initial data in $X^3$ then follows from:
\begin{lemma}[See {\cite[Lemma 3.1]{Segal1963}}.]
\label{lem:from_mild_to_classical}
If $-iA$ is the generator of a continuous one-parameter semi-group in the Banach space $X$, and if $f$ is continuously differentiable on $X$, then a mild solution of Eq.~\eqref{eq:def_problem} has its values in the domain $\mathcal D(A)$ of $A$ throughout its interval of existence provided this is the case initially.

In other words, $\rho_t$, if it exists at all, then satisfies the differential equation \eqref{eq:def_problem}  in the obvious sense.\qedhere
\end{lemma}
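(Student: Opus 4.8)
The plan is to reduce the statement to the classical regularity theorem for the \emph{linear} inhomogeneous Cauchy problem: if $-iA$ generates the semigroup $G(t)$, if $\rho_0\in\mathcal D(A)$, and if the inhomogeneity $g(t):=f(\rho_t)$ is of class $C^1$ in $t$, then the function defined by the Duhamel formula \eqref{eq:def_mild_HFB} takes values in $\mathcal D(A)$, is continuously differentiable, and solves \eqref{eq:def_problem} in the strong sense. Granting this linear fact, the whole task is to upgrade the regularity of $t\mapsto f(\rho_t)$, and this is precisely where the hypotheses $\rho_0\in\mathcal D(A)$ and $f\in C^1$ enter.

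First I would show that a mild solution $\rho$ is Lipschitz in $t$ on every compact subinterval $[0,T']\subset[0,T)$. The key observation is a translation identity: using $G(t+h-s)=G(t)G(h-s)$ for $s\in[0,h]$ together with the change of variables $s\mapsto s+h$ in \eqref{eq:def_mild_HFB}, one finds $\rho_{t+h}=G(t)\rho_h-i\int_0^t G(t-s)\,f(\rho_{s+h})\,ds$, so that subtracting the Duhamel formula for $\rho_t$,
\[
\rho_{t+h}-\rho_t=G(t)(\rho_h-\rho_0)-i\int_0^t G(t-s)\big(f(\rho_{s+h})-f(\rho_s)\big)\,ds.
\]
Writing $M:=\sup_{[0,T']}\|G(\cdot)\|$ and letting $L$ be a Lipschitz constant for $f$ on a ball containing $\{\rho_s\}_{s\in[0,T']}$ (available since $f\in C^1$ and $\rho$ is continuous, hence bounded), this yields $\|\rho_{t+h}-\rho_t\|\le M\|\rho_h-\rho_0\|+ML\int_0^t\|\rho_{s+h}-\rho_s\|\,ds$, and Gr\"onwall's lemma gives $\|\rho_{t+h}-\rho_t\|\le M\|\rho_h-\rho_0\|e^{MLt}$. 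It remains to bound $\|\rho_h-\rho_0\|$ by $O(h)$, and here the hypothesis $\rho_0\in\mathcal D(A)$ is essential: $(G(h)-\1)\rho_0=\int_0^h G(\tau)(-iA\rho_0)\,d\tau=O(h)$, while the Duhamel contribution $\int_0^h G(h-s)f(\rho_s)\,ds$ is manifestly $O(h)$. Hence $\rho$ is Lipschitz, and therefore so is $g=f(\rho)$.

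The second step is the bootstrap from Lipschitz to $C^1$. Because $\XL$ (whose $\gamma$-component $\mathcal H^1=M^{-1}\mathcal L^1 M^{-1}$ involves trace-class operators) is not reflexive, I cannot invoke the reflexive-space version of the linear regularity theorem that would accept merely Lipschitz data; I genuinely need $g\in C^1$. To obtain this I would consider the linearized integral equation
\[
v_t=G(t)v_0-i\int_0^t G(t-s)\,Df(\rho_s)\,v_s\,ds\,,\qquad v_0:=-i\big(A\rho_0+f(\rho_0)\big)\,,
\]
whose coefficient $s\mapsto Df(\rho_s)\in\mathcal B(\XL)$ is continuous (as $f\in C^1$ and $\rho$ is continuous); this linear Volterra equation has a unique continuous solution $v$ by contraction and Gr\"onwall. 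One then shows that the difference quotients $h^{-1}(\rho_{t+h}-\rho_t)$ converge to $v_t$ uniformly on compacts, using the Lipschitz bound of the previous step together with the first-order expansion $f(\rho_{s+h})-f(\rho_s)=Df(\rho_s)(\rho_{s+h}-\rho_s)+o(\|\rho_{s+h}-\rho_s\|)$ afforded by $f\in C^1$. Consequently $\rho\in C^1$ with $\partial_t\rho=v$, and the chain rule gives $g=f(\rho)\in C^1$.

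Finally, with $\rho_0\in\mathcal D(A)$ and $g\in C^1$, I invoke the linear regularity result: writing $\int_0^t G(t-s)g(s)\,ds=\int_0^t G(\tau)g(t-\tau)\,d\tau$ and differentiating (legitimate since $g\in C^1$), one checks directly that the Duhamel integral lands in $\mathcal D(A)$ and that $\rho$ satisfies $i\partial_t\rho=A\rho+f(\rho)$ pointwise; thus $\rho_t\in\mathcal D(A)$ throughout $[0,T)$ and the mild solution is classical. I expect the genuine obstacle to be the $C^1$ bootstrap of the third paragraph: establishing differentiability of $\rho$ (not merely Lipschitz continuity) is what forces the passage through the linearized equation, and it is dictated by the non-reflexivity of the trace-class component of $\XL$.
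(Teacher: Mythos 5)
Your proof is correct and follows precisely the classical Segal--Pazy argument: Lipschitz continuity of the mild solution on compact subintervals via the translation identity, Gr\"onwall, and $\rho_0\in\mathcal D(A)$; the $C^1$ bootstrap through the linearized Volterra equation for the candidate derivative; and finally the regularity theorem for the linear inhomogeneous problem with $C^1$ data. The paper gives no proof of its own but invokes exactly this argument by citing Lemma 3.1 of \cite{MR0152908}, so your proposal reconstructs the proof the paper relies on (modulo the standard technicality that, in infinite dimensions, the Lipschitz bound for $f$ should be taken on a neighborhood of the compact trajectory rather than on a ball).
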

\end{proof}

\begin{proof}[Proof of Theorem~\ref{thm:existence-uniquenes-up-to-Coulomb}(iii) 
 {[Conservation Laws]}]
For classical solutions, the conservation of the number particle and of the energy were proven as consequences of the same conservation laws for the many body system in Theorem \ref{prop-en-vfull-1} and \ref{thm:preservation-particle-number}. Another proof of the conservation law for the energy  using only the HFB equations (independently from the many body problem) was given in Prop.~\ref{prop:quasi-hamiltonian-structure}, and the conservation of the particle number could also be proven directly from \eqref{eq:BHF-gamma}. We can now use those results since we proved the local existence of a classical solution. The conservation laws then extend to mild solutions by approximation.
\end{proof}

\begin{proof}[Proof of Theorem~\ref{thm:existence-uniquenes-up-to-Coulomb}(iv) {[Positivity preservation property]}] This follows from relation \eqref{Gam-evol}. Indirectly, it follows from the equivalence of the HFB equations \eqref{eq:BHF-phi} to~\eqref{eq:BHF-sigma} the self-consistent equation \eqref{eq-omt-commut-3} (see Theorem \ref{thm-Hquadr-dyn-1}). \end{proof}

\begin{proof}[Proof of Theorem~\ref{thm:existence-uniquenes-up-to-Coulomb}(v) 
 {[Global Solution]}]
We recall that for a maximal solution $\rho_t$ of the mild problem \eqref{eq:def_mild_HFB} defined on an interval $[0,T)$, we have that either $T=\infty$ or $\sup_{t\in [0,T)}\|\rho_t\|_{\XL}=\infty$ (see, e.g., \cite[Thm~4.3.4]{CazenaveHaraux1998}). It is thus enough to prove that
\[
\sup_{t\in [0,T)} \big\{
\|\phi_t\|_{\Hd},
\|\gamma_t\|_{\cHLd},
\|\sigma_t\|_{\Hsdd}\big\}<\infty
\] to show that the solutions are global. 
Let
\eqn \label{eqn:def-mathbb-T}
\mathbb T := \int dxdy \; \psi^*(x)(-\Delta)\psi(y) \,,
\eeqn
Because $V$ is infinitesimally form bounded with respect to the Laplacian,
\eqn\label{eqn:estimate-dGamma-V}
 \int dx \; \psi^*(x)\psi(x)V(x)\geq -\frac{1}{2}\mathbb T - c\opN
\eeqn
holds. 
And, because the pair potential $v$ is bounded, we have
\eqn \label{eqn:estimate-dGamma2-v}
\mathbb V := \frac{1}{2}\int dx dy \; v(x-y)\psi^*(x) \psi^*(y)
    \psi(x) \psi(y) \geq  
    -C\opN^{2}    -C\opN\,.
\eeqn
Hence, from the definition of $\opH$, \eqref{eqn:estimate-dGamma-V} and \eqref{eqn:estimate-dGamma2-v} we get
\begin{align}
\mathbb T \leq 2\opH + C \opN^{2}  + C \opN\,.
\end{align}
We now take the expectation value of $\qf_t$ and use that $\qf_t$ is quasifree to bound $\qf_t(\opN^{2})$ by $C(\qf_t(\opN)^{2}+1)$ and the conservation of the particle number and of the energy to obtain  
\begin{align}\notag
\tr[-\Delta(\gamma_t+|\phi_t\rangle \langle\phi_t|)] & \leq C(\mathcal E(\phi_t,\gamma_t,\sigma_t)+\sum_{k=0}^2\cN(\phi_t,\gamma_t,\sigma_t)^{k}) \\
\label{gam-E-bnd} & \leq C(\mathcal E(\phi_0,\gamma_0,\sigma_0)+\sum_{k=0}^2\cN(\phi_0,\gamma_0,\sigma_0)^{k}) \,.
\end{align}
Combined with the conservation of the particle number, this estimate provides 
bounds on $\|\gamma_t\|_{\cHLd}$ and $\|\phi_t\|_{\Hd}$ that are uniform in $t$. 
Moreover, uniform bounds on $\|\sigma_t\|_{\Hsdd}$ are then obtained from Proposition~\ref{prp-gampos-sigsym-1}. 
It thus follows that the solution is global, as claimed.
\end{proof}


\section{Gibbs states and Bose-Einstein condensation}
\label{sec-Gibbs-HFB-1}

In this section, we determine translation- and $U(1)$ gauge-invariant  Gibbs states
for the HFB equations without an external potential, and with an interaction potential $g\delta$, and discuss the emergence of a Bose-Einstein condensate at positive temperature.  (Recall from the introduction that  $U(1)$ gauge-invariant  Gibbs states for the HFB equations are, in fact, Gibbs states for the Hartree-Fock equations.)

We consider the system
on a torus, $\Lambda_L=\R^d/2L \Z^d$, i.e., $[- L, L]^d$ with periodic boundary conditions.
Accordingly, we denote $\Lambda_L^*:=\frac{\pi}{L}\Z^d$
the lattice reciprocal to $2L \Z^d$. We will eventually take the thermodynamic limit, $L\rightarrow\infty$,
and discuss the emergence of a Bose-Einstein condensate.

The Hamiltonian  $\opH$ of the Bose gas is $U(1)$ gauge-invariant (that is, invariant under the transformation $\psi^\sharp\rightarrow (e^{i\theta}\psi)^\sharp$), and, as we consider the case with no external potential, translation invariant. On a compact torus, where the volume is finite, these symmetries are also present in the Gibbs states of system (the notion of translation invariance should be, of course, appropriately modified). We are interested in quasifree states $\qf_L$ which on the one hand satisfy both the $U(1)$ gauge invariance and the translation invariance, and, on the other hand satisfy a fixed point equation  corresponding to the consistency condition~\eqref{eq-omt-commut-3} in the dynamical case:
\eqn\label{eqn:fixed-point-Gibbs-quasifree-state}
\Phi(\qf_L)=\qf_L \quad\text{with}\quad
\Phi(\qf_L)(\opA):=\tr[\opA\,\exp(-\beta(\opH_{HFB}(\qf_L)-\mu\opN))/\Xi]
\eeqn
where $\beta>0$ is the inverse temperature, 
$\mu$ is the chemical potential, and $\Xi=\tr[\exp(-\beta(\opH_{HFB}(\qf_L))-\mu\opN)]$.  
The  $U(1)$ gauge-invariance of $\qf_L$ then implies that the truncated expectations $\phi_{\qf_L}$ and $\sigma_{\qf_L}$ vanish. Indeed, if one of them was non-zero, then the HFB Hamiltonian $\opH_{HFB}$ would include terms which would break $U(1)$ gauge invariance, such as $\int dx \, m(x) \, \psi^*(x)\psi^*(x)+h.c.$\,. The quasifree states we consider are thus characterized by their truncated expectation $\gamma_L$, and we will replace the variable $\qf_L$ by $\gamma_L$ in the sequel of this section.

We use the expression of the HFB Hamiltonian \eqref{eq-HHFB-def-1} with $v=g\delta$ (and $\phi=0$, $\sigma=0$), although this expression was derived for more regular interaction potentials~$v$'s:
\eqn 
    \opH_{HFB}(\qf_L) = \int dxdy\,  \psi^*(x)\psi(y) \, (-\Delta +gn)(x;y) \,,
\eeqn
with $n=n(x)=\gamma_L(x;x)$. The translation invariance implies that the kernel $\gamma_L(x;y)$ is a function of $x-y$, that we still denote by $\gamma_L$, and therefore $n=n(x)=\gamma_L(x;x)$ is  independent of $x$. 
%

Applying the fixed point equation \eqref{eqn:fixed-point-Gibbs-quasifree-state} with $\opA=\psi^*(y)\psi(x)$ one can express it equivalently in the variable $\gamma_L$:
\eqn\label{eq-tGammma-fixpt-1}
    \gamma_L = 
    \frac{1}{\exp( \, \beta ( \, -\Delta+g n \1 -\mu\1) \, ) \, - \, \1} \,,
\eeqn 
for $n\in[0,\infty)$. 
The operator $\gamma_L $ is a pseudodifferential operator with symbol 
\begin{align}
    \label{eqn:coef-gamma-L}
     \hat \gamma_L(k)&:=\int_{\Lambda_L}\gamma_L(x)e^{-ix\cdot k}d x = \frac{1}{\exp(  \beta (  k^2+g n -\mu)  )  -  \1}
\end{align}
of~$\gamma_L$. 
Thus
\begin{align}
\label{eq-nL-hatgamma}
    n &=\gamma_L(0)
    = 
    \frac{1}{|\Lambda_L|}\sum_{k\in\Lambda_L^*} \hat\gamma_L(k) \,.
\end{align}
As the Fourier coefficients of $\gamma_L$ depend only of
the number $n$, we 
obtain from \eqref{eq-tGammma-fixpt-1}, \eqref{eqn:coef-gamma-L} and \eqref{eq-nL-hatgamma} a \textit{nonlinear fixed point equation} for $n$:
\eqn\label{eq-nm-fixedpt-1}
    n
    =     
    \frac{1}{|\Lambda_L|}\sum_{k\in\Lambda_L^*} 
     \frac{1}{\exp(  \beta ( k^2+gn -\mu )  )  -  1} \,.
\eeqn
Note that the knowledge of $n$ satisfying \eqref{eq-nm-fixedpt-1}, or of $\gamma_L$ satisfying \eqref{eq-tGammma-fixpt-1}  or of $\qf_L$ satisfying \eqref{eqn:fixed-point-Gibbs-quasifree-state} are equivalent.

From a physical point of view, it is natural to fix the density $n$, which can be tuned in an experiment and to compute $\mu$. So $n$ will be a parameter and we will solve \eqref{eq-nm-fixedpt-1} with the unknown $\mu$.

\begin{lemma}\label{lem:existence-mu-L}
Let $g,\beta,n>0$, and, for $d\geq 3$. Let $n_c$ be the critical density
\eqn
    n_c := \frac{1}{(2\pi)^d}\int_{\mathbb R^d} \frac{dk}{e^{\beta k^2}-1}=\frac{\zeta(\frac{d}{2})\Gamma(\frac{d}{2})}{(2\pi)^d}\beta^{-\frac{d}{2}}\,,
\eeqn
where $\zeta(x)=\sum_{n\geq 1}n^{-x}$ and $\Gamma(x)=\int_0^\infty t^{x-1}e^{-t}dt$.

We define $S_L:(-\infty,gn)\to \mathbb R$ and $S_\infty:(-\infty,gn]\to \mathbb R$ through
\begin{align*}
S_L(\mu)&:=\frac{1}{|\Lambda_L|}\sum_{k\in\Lambda_L^*} 
     \frac{1}{\exp( \beta ( k^2+gn -\mu ) ) -  1} \,,\\
S_\infty(\mu) &:= 
    \frac{1}{(2\pi)^d}\int_{\mathbb R^d} 
     \frac{dk}{\exp(  \beta (  k^2 + gn -\mu )  )  -  1} \,.
\end{align*}
Then:
\begin{itemize}
\item There exists a unique $ \mu_L(n) < gn$ such that \eqref{eq-nm-fixedpt-1} holds, i.e., 
\eqn \label{eqn:mu-L}
n=S_L(\mu_L(n)) \,.
\eeqn
\item If $n < n_c$, there exists a unique $\mu_\infty(n) < gn$ such that
\eqn\label{eqn:mu-infty}
    n    =   S_\infty(\mu_\infty(n))  \,.
\eeqn
We extend the function $\mu_\infty $ to $(0,\infty)$ by setting $\mu_\infty(n)=gn$ for $n\geq n_c$.
\end{itemize}
\end{lemma}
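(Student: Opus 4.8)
The plan is to handle the two items separately; in each case I would exhibit the relevant function ($S_L$, respectively $S_\infty$) as a continuous, strictly increasing function of $\mu$, identify its range, and then read off existence and uniqueness from strict monotonicity together with the intermediate value theorem. The entire distinction between the finite-volume and infinite-volume cases --- and hence the appearance of the critical density $n_c$ --- is encoded in the boundary behavior as $\mu\to gn$.

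For the first item, I would first observe that each summand $(\exp(\beta(k^2+gn-\mu))-1)^{-1}$ is positive and finite exactly when $k^2+gn-\mu>0$; since $0\in\Lambda_L^*$, the binding constraint comes from the zero mode, which is precisely what restricts the domain to $\mu<gn$. On this domain every summand is smooth and strictly increasing in $\mu$, so $S_L$ is continuous and strictly increasing. As $\mu\to-\infty$ each summand tends to $0$, giving $S_L(\mu)\to 0$; as $\mu\to gn^-$ the zero-mode term $(\exp(\beta(gn-\mu))-1)^{-1}$ diverges, giving $S_L(\mu)\to+\infty$. Hence $S_L$ is an increasing continuous bijection from $(-\infty,gn)$ onto $(0,\infty)$, and since $n>0$ there is a unique $\mu_L(n)<gn$ solving \eqref{eq-nm-fixedpt-1}.

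For the second item the crucial observation is that, unlike the sum, the integral stays finite all the way up to the endpoint $\mu=gn$. Indeed, at $\mu=gn$ the integrand reduces to $(\exp(\beta k^2)-1)^{-1}$, which is $\sim(\beta|k|^2)^{-1}$ as $k\to0$; because the radial measure carries a factor $|k|^{d-1}$, this infrared singularity is integrable precisely when $d\geq3$, and one computes $S_\infty(gn)=n_c$. I would then prove that $S_\infty$ is continuous on the closed half-line $(-\infty,gn]$ by dominated convergence, using the integrand at $\mu=gn$ (which, by monotonicity in $\mu$, dominates the integrand for every $\mu\leq gn$) as an integrable majorant, and strictly increasing by that same monotonicity. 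Since $S_\infty(\mu)\to0$ as $\mu\to-\infty$, the function $S_\infty$ maps $(-\infty,gn]$ increasingly and bijectively onto $(0,n_c]$. Therefore, for $n<n_c$ there is a unique $\mu_\infty(n)$, automatically strictly below $gn$ because $S_\infty(gn)=n_c>n$, with $n=S_\infty(\mu_\infty(n))$; for $n\geq n_c$ no solution with $\mu\leq gn$ exists, which is why $\mu_\infty(n):=gn$ is imposed by definition (the condensation regime).

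The hard part will be the analysis at the endpoint $\mu=gn$: checking that the Bose occupation $(\exp(\beta k^2)-1)^{-1}$ has an integrable infrared singularity exactly for $d\geq3$ --- this is what caps the range of $S_\infty$ at $n_c$ and is the analytic origin of Bose-Einstein condensation --- and justifying continuity of $S_\infty$ up to that endpoint through a uniform integrable bound. The remaining monotonicity and limit computations are routine.
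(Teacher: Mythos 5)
Your proof is correct and follows essentially the same route as the paper's: continuity and strict monotonicity of $S_L$ and $S_\infty$, the limits $0$ at $-\infty$ and ($+\infty$ for $S_L$, $n_c$ for $S_\infty$) at $gn$, and the intermediate value theorem. Your write-up is somewhat more explicit than the paper's (the dominated-convergence majorant $(\exp(\beta k^2)-1)^{-1}$ and the $|k|^{d-1}$ integrability check for $d\geq 3$), but these are just the details the paper leaves implicit, not a different argument.
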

\begin{remark}
The critical density $n_c$ can be explicitly computed.
\end{remark}
\begin{proof} In the discrete case, the existence follows from the intermediate value theorem because the map $S_L$ 
is continuous with limits $0$ at $-\infty$ and  $\infty$ at $gn$. The map $S_L$ is strictly increasing and thus there exists a unique $\mu_L(n)$ such that $n=S_L(\mu_L(n))$.

In the continuous case, we first prove the existence of $\mu_\infty(n)$, for a given $n>0$, the map
$(0,gn]\ni\mu \mapsto S_\infty(\mu)$
is well defined, continuous, $\lim_{\mu\to-\infty}S_\infty(\mu)=0$, $S_\infty(gn)=n_c$, and thus the intermediate value theorem yields the existence of a $\mu_\infty$ satisfying \eqref{eqn:mu-infty}. Since $S_\infty$ is strictly increasing, the uniqueness follows.
\end{proof}

In Theorem~\ref{thm:existence-bose-condensate}, we prove that the thermodynamic limit $\gamma_\infty$ of the self-consistent equation
\eqref{eq-tGammma-fixpt-1} for $\gamma_L$ is well defined and exhibits the so called Bose-Einstein condensation.

\begin{theorem}\label{thm:existence-bose-condensate}
Let $g,\beta,n>0$ and $d\geq 3$. Let $\gamma_L$, $n_c$, $\mu_L$ and $\mu_\infty$ as defined in \eqref{eq-tGammma-fixpt-1} and  Lemmata~\ref{lem:existence-mu-L}. Then
\eqn
\mu_L(n) \xrightarrow[L \to \infty]{} \mu_\infty(n)
 \quad \text{and} \quad
\gamma_L \xrightarrow[L \to \infty]{\cD'} \gamma_\infty     \,,
\eeqn
where  
\eqn
\hat\gamma_\infty(k)= \max\{0,n-n_C\} \, \delta(k) + 
\frac{1}{\exp\big(\beta (k^2+gn-\mu_\infty(n))\big)-1}\,.
\eeqn
\end{theorem}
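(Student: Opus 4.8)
The whole statement is driven by the momentum representation and reduces to the convergence of the Riemann sums $S_L(\mu)\to S_\infty(\mu)$ together with a careful treatment of the single mode $k=0$, the only one that can carry a macroscopic mass in the limit. The plan is to organize the argument around the two regimes isolated in Lemma~\ref{lem:existence-mu-L}: the non-condensed regime $n<n_c$, where $\mu_\infty(n)<gn$, and the condensed regime $n\ge n_c$, where $\mu_\infty(n)=gn$. Throughout I use that for $\mu$ bounded away from $gn$ the zero mode $\frac{1}{|\Lambda_L|}(e^{\beta(gn-\mu)}-1)^{-1}$ is $O(|\Lambda_L|^{-1})\to0$, that $\frac{1}{|\Lambda_L|}\sum_{k\neq0}(\,\cdot\,)$ is then a Riemann sum converging to the corresponding integral, and that both $S_L$ and $S_\infty$ are strictly increasing in $\mu$.

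First I would prove $\mu_L(n)\to\mu_\infty(n)$. In the non-condensed regime, $\mu_\infty(n)<gn$ lies in the interior where $S_L\to S_\infty$ uniformly on compact subsets of $(-\infty,gn)$; given $\varepsilon>0$, strict monotonicity gives $S_\infty(\mu_\infty-\varepsilon)<n<S_\infty(\mu_\infty+\varepsilon)$, hence for $L$ large the same strict inequalities hold for $S_L$, and since $S_L(\mu_L)=n$ with $S_L$ increasing we conclude $|\mu_L-\mu_\infty|<\varepsilon$. In the condensed regime I only need a one-sided bound: since $\mu_L<gn$ always, it suffices to rule out a subsequence with $gn-\mu_L\ge\delta>0$; along such a subsequence monotonicity gives $n=S_L(\mu_L)\le S_L(gn-\delta)\to S_\infty(gn-\delta)<S_\infty(gn)=n_c\le n$, which is a contradiction for large $L$. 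Hence $gn-\mu_L\to0$, i.e.\ $\mu_L\to gn=\mu_\infty(n)$.

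The heart of the argument, and the step I expect to be the main obstacle, is to identify the mass of the zero mode in the condensed regime. Writing
\[
n=S_L(\mu_L)=\frac{\hat\gamma_L(0)}{|\Lambda_L|}+\frac{1}{|\Lambda_L|}\sum_{k\neq0}\hat\gamma_L(k),
\]
I would show that the non-zero-mode sum converges to $n_c$, which then forces $\hat\gamma_L(0)/|\Lambda_L|\to n-n_c$. This is a genuine double limit ($L\to\infty$ together with $\mu_L\to gn$) across the integrable singularity of $(e^{\beta k^2}-1)^{-1}$ at the origin, and is exactly where the hypothesis $d\ge3$ enters, through integrability of $|k|^{-2}$ near $0$. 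I would control it by a two-sided squeeze. For the upper bound, monotonicity in $\mu$ gives $\frac{1}{|\Lambda_L|}\sum_{k\neq0}\hat\gamma_L(k)\le\frac{1}{|\Lambda_L|}\sum_{k\neq0}(e^{\beta k^2}-1)^{-1}$, and the right-hand side is a Riemann sum for the positive, radially decreasing, integrable function $(e^{\beta k^2}-1)^{-1}$ with the singular point omitted, so it converges to $n_c$ (the omitted cell around the origin has vanishing integral, and monotonicity controls the remaining Riemann error). For the lower bound, fix $\mu'<gn$; since $\mu_L>\mu'$ for large $L$, monotonicity gives $\frac{1}{|\Lambda_L|}\sum_{k\neq0}\hat\gamma_L(k)\ge\frac{1}{|\Lambda_L|}\sum_{k\neq0}(e^{\beta(k^2+gn-\mu')}-1)^{-1}\to S_\infty(\mu')$, and letting $\mu'\uparrow gn$ yields $S_\infty(\mu')\to n_c$. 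Together these give $\frac{1}{|\Lambda_L|}\sum_{k\neq0}\hat\gamma_L(k)\to n_c$, hence the condensate fraction $\hat\gamma_L(0)/|\Lambda_L|\to\max\{0,n-n_c\}$, the value being $0$ in the non-condensed regime by the first step.

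Finally I would deduce the distributional convergence $\gamma_L\xrightarrow{\cD'}\gamma_\infty$. For $f\in C_c^\infty(\R^d)$ and $L$ so large that $\supp f\subset\Lambda_L$, the identity $\gamma_L(x)=\frac{1}{|\Lambda_L|}\sum_{k\in\Lambda_L^*}\hat\gamma_L(k)e^{ikx}$ gives
\[
\langle\gamma_L,f\rangle=\frac{\hat\gamma_L(0)}{|\Lambda_L|}\int_{\R^d}f(x)\,dx+\frac{1}{|\Lambda_L|}\sum_{k\neq0}\hat\gamma_L(k)\,\hat f(-k).
\]
By the previous step the first term converges to $\max\{0,n-n_c\}\int f$, a constant (in $x$) contribution equal to the condensate fraction, which contributes the Dirac mass $\max\{0,n-n_c\}\,\delta(k)$ to $\hat\gamma_\infty$. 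The second term is a Riemann sum which, using $\mu_L\to\mu_\infty(n)$ and the same monotone domination as above (with $d\ge3$ guaranteeing local integrability of the limiting integrand when $\mu_\infty(n)=gn$), converges to the pairing of $f$ with the regular density whose symbol is $(e^{\beta(k^2+gn-\mu_\infty(n))}-1)^{-1}$. Adding the two contributions identifies the limit $\gamma_\infty$ through its Fourier transform $\hat\gamma_\infty$ as stated, which completes the plan.
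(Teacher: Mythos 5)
Your proposal is correct, and its two-step architecture (first $\mu_L(n)\to\mu_\infty(n)$ via strict monotonicity of $S_L$, $S_\infty$ and Riemann-sum convergence, split into the regimes $n<n_c$ and $n\ge n_c$; then distributional convergence of $\gamma_L$ by pairing with a test function in Fourier variables) matches the paper's. The genuine difference is how the condensate mass is located. The paper never isolates the zero mode: it splits the dual sum at a fixed small shell, $\{|k|\le\eta\}$ versus $\{|k|>\eta\}$, uses continuity of $\hat\varphi$ at $k=0$ to replace $\hat\varphi(k)$ by $\hat\varphi(0)$ on the shell, and identifies the shell's mass as $n-n_c$ by total-mass bookkeeping: $\frac{1}{|\Lambda_L|}\sum_{k}\hat\gamma_L(k)=n$ minus the convergent outer Riemann sum. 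In particular, the paper only needs integrability of $(e^{\beta k^2}-1)^{-1}$ at the level of the integral, never a lattice estimate across the singularity. You instead split off the single point $k=0$ and prove by a two-sided squeeze (upper bound from $\mu_L<gn$; lower bound from $\mu_L>\mu'$ eventually, then $\mu'\uparrow gn$) that the $k\neq0$ modes carry total mass exactly $n_c$ in the limit. This costs you one extra technical ingredient the paper avoids, namely convergence of the lattice sum $\frac{1}{|\Lambda_L|}\sum_{k\neq0}(e^{\beta k^2}-1)^{-1}\to n_c$ across the $|k|^{-2}$ singularity (your $\eta^{d-2}$-type bound, which is exactly where $d\ge 3$ enters), but it buys a strictly stronger intermediate statement: the condensate occupies precisely the zero mode, $\hat\gamma_L(0)/|\Lambda_L|\to\max\{0,n-n_c\}$, rather than only the mass of a shrinking neighborhood of $k=0$. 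Your final pairing step then still needs, as you note, the small-shell bound with $\|\hat f\|_\infty$ to control $\frac{1}{|\Lambda_L|}\sum_{0<|k|\le\eta}\hat\gamma_L(k)\hat f(-k)$, together with uniformity in the $L$-dependence of $\mu_L$ on $\{|k|>\eta\}$; with those details written out, both routes are complete.
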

\begin{remark}
The presence of the $\delta(k)$ term is interpreted as the existence of Bose-Einstein condensation, because there is an accumulation of particles in the zero mode. It occurs when $\beta^{d/2}n\geq C_d$ with $C_d$ a constant depending only on the dimension.
\end{remark}
\begin{proof}[Proof of Theorem~\ref{thm:existence-bose-condensate}]
First we prove the convergence of $\mu_L(n)$ towards a $\mu_\infty(n)$.
We first remark that $\mu_L(n)\geq -C$ for some constant $C>0$ independent of $L$. (Otherwise one could extract a subsequence such that $n=S_{L_j}(\mu_{L_j}(n)) \to 0<n$.) Thus the accumulation points of $\mu_L(n)$ are contained in $[-C,gn]$. 
Let $\mu_{L_j}(n)$ denote an extracted sequence converging to an accumulation point $\mu'$.

In the case $n<n_c$: 
If $\mu'=gn$ then for $j$ large enough $\mu_{L_j}(n)\geq (\mu_\infty(n)+gn)/2$, thus 
\begin{multline*}
n  = S_{L_j}(\mu_{L_j}(n))\geq S_{L_j}\Big(\frac{gn + \mu_\infty(n)}{2}\Big)
     \to S_\infty\Big(\frac{gn + \mu_\infty(n)}{2}\Big)>S_\infty(\mu_\infty(n)) =n
\end{multline*}
and which would lead to a contradiction. Note that it is crucial that  $\mu_\infty(n)<gn$ for $n<n_c$ to get the convergence to the integral $S_\infty\big(\frac{gn + \mu_\infty(n)}{2}\big)$. It thus follows that $\mu'<gn$. Then $S_{L_j}(\mu_{L_j}(n))$ 
converges to $n$, because by definition of $\mu_L(n)$ this sum is equal to $n$, and also to $S_\infty(\mu')$. 
(One has to control the dependency in $\mu_{L_j}(n)$ in the Riemann sums.) Hence $\mu'=\mu_\infty(n)$ and the unique accumulation point is $\mu_\infty(n)$. We thus proved the convergence of $\mu_L(n)$ to $\mu_\infty(n)$.

In the case $n\geq n_c$, we sketch an argument similar to the one above. If an accumulation point $\mu'$ was such that $\mu'< gn$, then the sums $S_{L_j}(\mu_{L_j}(n))$ would converge to integrals with a value strictly smaller than $n_c$ and thus strictly smaller than $n$. This would lead to a contradiction. Thus the only possible accumulation point is $gn$ and $\mu_L(n)\to gn=\mu_\infty(n)$.

We now prove the convergence of $\gamma_L$ towards $\gamma_\infty$. Let $\varphi\in C^\infty_0 (\mathbb R^d)$. For $L$ large enough the support of $\varphi$ is included in $\Lambda_L$, and
\eqn
 \int_{\Lambda_L} \gamma_L \varphi
 = \frac{1}{|\Lambda_L|}\sum_{k\in \Lambda_L^*} \hat \gamma_L(k) \hat \varphi(k) \,.
\eeqn
On the other hand $\langle \gamma_\infty,\varphi\rangle_{\mathcal D'}=\langle \gamma_\infty,\varphi\rangle_{\mathcal S'}=\langle \hat\gamma_\infty,\hat\varphi\rangle_{\mathcal S'}$
(Note that in the normalization we choose, the Fourier coefficients of $\varphi$ on $\Lambda_L$ and the Fourier transform coincide, there is thus no need to specify the hat notation.) The convergence of $\gamma_L$ to $\gamma_\infty$ is thus equivalent to
\eqn
 \frac{1}{|\Lambda_L|}\sum_{k\in \Lambda_L^*} \hat \gamma_L(k) \hat \varphi(k) \to  \max\{0,n-n_c\}\hat\varphi(0) +
\int_{\mathbb R^d}\frac{(2\pi)^{-d}\hat\varphi(k)dk}{e^{\beta (k ^2+gn-\mu_\infty(n))}-1} \,.
\eeqn
for all $\varphi$.

In the case $n<n_c$ the convergence is thus just a convergence of Riemann sums of the integral (with the small additional difficulty that $\mu_{L}(n)$ depends on $L$ in the sum) because there is no singularity in the function $k \mapsto (\exp(\beta(k^2+gn-\mu_\infty(n)))-1)^{-1}$.

In the case $n\geq n_c$: Let $\varepsilon>0$. 
First note that, for any fixed $\eta>0$
\eqn
 \frac{1}{|\Lambda_{L}|}\sum_{\substack{k\in\Lambda_L^*\\ |k|>\eta}} 
     \frac{\hat\varphi(k)}{\exp(  \beta (  k^2+gn -\mu_L(n) )  )  -  1} 
     \to
\int_{|k|> \eta}\frac{(2\pi)^{-d}\hat\varphi(k)dk}{e^{\beta k ^2}-1} \,, \nonumber
\eeqn
as $L\to\infty$. We choose $\eta>0$ small enough so that
\eqn
 |k|\leq \eta \, \Rightarrow \,|\hat\varphi(k)-\hat\varphi(0)|
 \leq \frac{\varepsilon}{4n}
\qquad
\text{and}
\qquad 
\int_{|k|\leq \eta}\frac{(2\pi)^{-d}\hat\varphi(0)dk}{e^{\beta k ^2}-1} \leq \frac{\varepsilon}{4} \,.\nonumber
\eeqn
The first condition on $\eta$ yields
\begin{align*}
\Bigg|\frac{1}{|\Lambda_{L}|}\sum_{\substack{k\in\Lambda_L^*\\ |k|\leq \eta}} 
     \frac{\hat\varphi(k)-\hat\varphi(0)}{\exp(  \beta (  k^2+gn -\mu_L(n) )  )  -  1}\Bigg|
\leq\frac{\varepsilon}{4} \,,
\end{align*}
then, the second condition on~$\eta$ implies
\begin{align*}
\limsup_{L\to\infty}\Bigg|\frac{1}{|\Lambda_{L}|}\sum_{\substack{k\in\Lambda_L^*\\ |k|\leq \eta}} 
     \frac{\hat\varphi(0)}{\exp(  \beta (  k^2+gn -\mu_L(n) )  )  -  1} - (n-n_c)\hat\varphi(0)\Bigg|\leq \frac{\varepsilon}{4} \,.
\end{align*}
Hence
\begin{multline*}
\limsup_{L\to\infty}\Bigg|\frac{1}{|\Lambda_{L}|}\sum_{\substack{k\in\Lambda_L^*}} 
     \frac{\hat\varphi(k)}{\exp(  \beta (  k^2+gn -\mu_L(n) )  )  -  1} \\ 
     - (n-n_c)\hat\varphi(k) - 
\int_{\mathbb R^d}\frac{(2\pi)^{-d}\hat\varphi(k)dk}{e^{\beta k ^2}-1}  \nonumber\Bigg|\leq \varepsilon \,,
\end{multline*}
and as this holds for any $\varepsilon>0$, we get the result.
\end{proof}

\appendix


\section{Self-adjointness of the Hamiltonian $\opH$}
\label{sec:Self-adjointness_H}

We use that the pair potential $v$ is infinitesimally $\Delta-$bounded,
i.e. for any $\varepsilon \in (0,1]$,
\eqn
v 
\leq -\varepsilon\Delta + C\varepsilon^{-1}\,
\eeqn
(we write $C$ for constants which depend on $v$, $d$ and change along the estimates) 
to obtain after taking $\varepsilon = 1/(3(n-1))$,   
\eqn
v(x-y)\leq \frac{1}{6(n-1)} (-\Delta_x-\Delta_y) + C (n-1) \,.
\eeqn 
Then, summing the $n(n-1)/2$ terms of this form on each $n$-particles subspace of the Fock space, we obtain that
\eqn \label{eqn:estimate-dGamma2-v2}
\mathbb V := \frac{1}{2}\int dx dy \; v(x-y)\psi^*(x) \psi^*(y)
    \psi(x) \psi(y) \leq  
\frac23\mathbb T+C\opN^{3} \,.
\eeqn
\DETAILS{Hence, from the definition of $\opH$, \eqref{eqn:estimate-dGamma-V} and \eqref{eqn:estimate-dGamma2-v} we get
\begin{align}
\mathbb T \leq 3\opH + C \opN^{2} \,.
\end{align}

(revised) 
The same arguments as those used to prove~\eqref{eqn:estimate-dGamma2-v} allow to deduce
\eqn
\mathbb V \leq C(\mathbb T + \mathbb N^3)
\eeqn}
for some $C >0$, with  $\mathbb T$  defined in~\eqref{eqn:def-mathbb-T}. 
 \DETAILS{from 
\eqn
|v| 
\leq -\varepsilon\Delta + C\varepsilon^{-1}\,.
\eeqn
for some $C >0$.} 
One can then use the KLMN theorem and the Nelson theorem (see \cite{ReedSimonII1980,Nelson1972}) to prove the self-adjointness of $\opH$. (Details can be adapted from, e.g., \cite[Section 3]{AmmariBreteaux2012}.)

\section{Definition of quasifree states} 
\label{app-quasifreedef}

For brevity, we write $\psi_j^\sharp :=
\psi^\sharp(x_j)$.
We recall that the truncated
expectations are defined via
\begin{equation}
\label{eqn:quasi-wick-2} 
    \omega(\psi_1^\sharp\cdots
    \psi_n^\sharp) = \sum_{P_n} \prod_{J \in P_n}
   \om^T (\prod_{j \in J}\psi_{j}^\sharp),
\end{equation}
where $P_n$ are partitions of the ordered set $\{1, ..., n\}$ into
ordered subsets. 

We have $ \mu(\psi) =
\omega(\psi)$ and
\begin{equation}\label{2term-wick-2}
    \om^T(\psi_1^\sharp\psi_2^\sharp) =
    \omega(\psi_1^\sharp \psi_2^\sharp)-
    \omega(\psi_1^\sharp)\omega(\psi_2^\sharp).
\end{equation} 
For   quasifree states, the correlation functions
$\omega(\psi_1^\sharp\cdots \psi_n^\sharp)$, with $n
>2$ can be expressed through $\omega(\psi^\sharp(x))$ and
$\omega(\psi^\sharp(x)\psi^\sharp(y))$ according to
the Wick formula. For example, 
\begin{equation}\label{3term-wick} \omega(\psi_1^\sharp \psi_2^\sharp
\psi_3^\sharp) =
\omega(\psi_1^\sharp)\omega(\psi_2^\sharp
\psi_3^\sharp) +
\omega(\psi_2^\sharp)\omega(\psi_1^* \psi_3^\sharp)
+ \omega(\psi_3^\sharp)\omega(\psi_1^\sharp
\psi_2^\sharp)-2\prod_{i=1}^{3}\omega(\psi_i^\sharp)
\end{equation}
and
\begin{equation}\label{4term-wick}
\omega(\psi_1^\sharp \psi_2^\sharp
\psi_3^\sharp \psi_4^\sharp) =
\omega(\psi_1^\sharp
\psi_2^\sharp)\omega(\psi_3^\sharp
\psi_4^\sharp) +
\omega(\psi_1^\sharp\psi_3^\sharp)\omega(\psi_2^\sharp
\psi_4^\sharp) +
\omega(\psi_1^\sharp\psi_4^\sharp)\omega(\psi_2^\sharp
\psi_4^\sharp)
-2\prod_{i=1}^{4}\omega(\psi_i^\sharp)
\end{equation}
(remember that $\psi$'s stand on the right of
$\psi^*$'s.) Note that
$$\omega(\psi^*(x)) = \overline{\omega(\psi(x))},\
\omega(\psi_1^*\psi_2^*) =
\overline{\omega(\psi_2\psi_1)}$$ and
$$\omega(\psi_1 \psi_2^*) =
\omega(\psi_2^*\psi_1)+\delta(x-y).$$ Thus a
quasifree state $\omega$ is completely determined by the functions
$\omega( \psi(x))$, $\mu( \psi^*(x)
\psi(y))$ and $\mu( \psi(x)  \psi(y))$.

\begin{remark} It is instructive to rewrite correlation functions
for a quasifree state $\omega$ in terms of the fluctuation fields
$\chi(x)$ which are defined as follows
\begin{equation} \label{fieldDecomp}
 \psi = \phi + \chi,\ \mbox{where}\
\phi(x)=\omega(\psi(x)),
\end{equation}
the average field. 
Then $\omega$ is  a quasifree state iff $\omega(\chi_1^\sharp\cdots
\chi_{2n-1}^\sharp)=0$ and $$\omega(\chi_1^\sharp\cdots
\chi_{2n}^\sharp)=\sum_{\pi \in S_n} \prod_{i=1}^{2n-1}
\omega(\chi_{\pi(i)}^\sharp \chi_{\pi(i+1)}^\sharp),$$ where the sum
is taken over all the permutations $\pi$ of the set of indices $\{1,
...,2n\}$ satisfying $\pi(1) < ... < \pi(2n)$.
\end{remark}

\section{Derivation of the bosonic HFB equations}
\label{sec-HFB-deriv-1}
 
In this section, we prove Theorem \ref{prop-HFB-vfull-1}.  The derivations below are done in a somewhat informal way commonly used in dealing with operators on Fock spaces (see e.g. \cite{Berezin1966, BratteliRobinson-II-1996, GustafsonSigal2011}). For instance, the commutator  
$[A, H]$, for $A= \psi(x)$ and $A= \psi(x) \psi(y)$, contains the terms  $\Delta_x \psi(x)$ and $\psi(x) \Delta_y \psi(y)$. 
 The formal computation gives $\qf(\Delta_x \psi(x))=\Delta_x \qf(\psi(x))$ and $\qf(\psi(x) \Delta_y \psi(y))=\Delta_y \qf(\psi(x) \psi(y))$, which are well-defined by our assumptions and are equal to $\Delta_x \phi(x)$ and $\Delta_y \s (x, y)$, respectively.

 To do this more carefully, one uses, instead of operator functions $\psi^\#(x)$, the operator functionals $\psi^\#(f)$, for some nice $f$.  E.g., instead  $[\psi(x), H]$, we consider the commutator  $[\psi(f), H]$, for any nice $f$, and concentrate on the term $\psi(\Delta f)$ it contains. Clearly, $\qf$ is well defined on $\psi(\Delta f)$ and can be written as $\qf(\psi(\Delta f))=\int \overline{\Delta f} (x)\qf(\psi(x))=\int \Delta \overline{ f} (x)\phi(x)=\int \overline{ f} (x) \Delta\phi(x)$. Thus we obtain the same result as above but in a weak form.

\begin{proof}[Proof of Theorem \ref{prop-HFB-vfull-1}]
We first observe that the three following condition are equivalent: 
\begin{enumerate}
\item A quasifree state $\om_{t}^{q}$ satisfies
\begin{align}
i\partial_{t}\om_{t}^{q}\big(\opA\big) & =\omega_{t}^{q}\big([\opA,\opH]\big)\,,
\end{align}
for any operator $\opA$ of order $\le 2$ in the fields.
\item A quasifree state $\om_{t}^{q}$  satisfies
\begin{align}
    i\partial_{t}\om_{t}^{q}\big(\psi(x)\big) & =
    \omega_{t}^{q}\big([\psi(x),\opH]\big)\,,\\
    i\partial_{t}\om_{t}^{q}\big(\psi^{*}(y)\psi(x)\big) & =
    \omega_{t}^{q}\big([\psi^{*}(y)\psi(x),\opH]\big)\,,\\
    i\partial_{t}\om_{t}^{q}\big(\psi(x)\psi(y)\big) & =
    \omega_{t}^{q}\big([\psi(x)\psi(y),\opH]\big)\,.
\end{align}
\item A quasifree state $\om_{t}^{q}$ with truncated expectations $\phi_{t}$, $\gamma_{t}$
and $\sigma_{t}$ satisfies 
\begin{align}
\label{eq1} i\partial_{t}\phi_{t}(x) & =\om_{t}^{q}([\psi(x),\opH]) \ ,\\
\label{eq2} i\partial_{t}\gamma_{t}(x;y) & =\om_{t}^{q}([\psi^{*}(y)\psi(x),\opH])-i\partial_{t}\big(\phi_{t}(x)\overline{\phi(y)}\big) \ ,\\
\label{eq3} i\partial_{t}\sigma_{t}(x,y) & =\om_{t}^{q}([\psi(x)\psi(y),\opH])-i\partial_{t}\big(\phi_{t}(x)\phi(y)\big) \ .
\end{align}
\end{enumerate}
We now suppose $\om_{t}^{q}$ satisfies \eqref{eq1} - \eqref{eq3}. 
Using the definition of the Hamiltonian, we obtain 
\begin{align}
i\partial_{t}\phi_{t}(x) & =\om_{t}^{q}\Big(\big[\psi(x),\int\psi^{*}(y)h(y;y')\psi(y')\, dydy'\big] \nonumber \\
 & \quad+\frac{1}{2}\big[\psi(x),\int v(y-y')\psi^{*}(y)\psi^{*}(y')\psi(y')\psi(y)\, dydy'\big]\Big)\\
 & =\om_{t}^{q}\Big(\int h(x;y')\psi(y')\, dy'+\int v(x-y)\psi^{*}(y)\psi(y)\psi(x)\, dy\Big)\,,
\end{align}
where we used the CCR \eqref{commrel} to get
\begin{multline}
\big[\psi(x),\psi^{*}(y)\psi^{*}(y')\psi(y')\psi(y)\big]\\
=\delta(x-y)\psi^{*}(y')\psi(y')\psi(y)+\delta(x-y')\psi^{*}(y)\psi(y)\psi(y')\,.\label{eq:commutator_psi,psi-st_psi-st_psi_psi}
\end{multline}
As $\om_{t}^{q}$ is a quasifree state (see Appendix~\ref{app-quasifreedef})
\begin{multline}
\om_{t}^{q}\big(\psi^{*}(y)\psi(y)\psi(x)\big)\\
=|\phi_{t}(y)|^{2}\phi_{t}(x)+\sigma(y;x)\bar{\phi}_{t}(y)+\phi_{t}(x)\gamma(y;y)+\phi_{t}(y)\gamma(x;y)\ .
\end{multline}
We thus deduce that
\begin{align*}
i\partial_{t}\phi_{t}(x) & =\int h(x;y')\phi_{t}(y')\, dy'\\
& \quad +\int v(y-x)\phi_{t}(x)\gamma_{t}(y;y)\, dy+\int v(y-x)\phi_{t}(y)\gamma_{t}(x;y)\, dy\\
 & \quad+\int v(x-y)\sigma_{t}(y,x)\bar{\phi}_{t}(y)\, dy+\int v(y-x)\phi_{t}(y)\phi_{t}(x)\bar{\phi}_{t}(y)\, dy\\
 & =\big((h+b [\gamma_{t}])\phi_{t}\big)(x)+ k (\s^{\phi_t}_t)\bar{\phi}_{t}(x)
\end{align*}
which is the dynamical equation \eqref{eq:BHF-phi} for $\phi_{t}$.

For $\gamma_{t}$ and $\sigma_{t}$, instead of $\om_{t}^{q}$ we use 
\eqn\label{omC}
    \om_{C,t}^{q}(\mathbb{A}) :=\om_{t}^{q}(W_{\phi_{t}}\opA W_{\phi_{t}}^{*}) \,,
\eeqn
where, recall, $W_{\phi}=\exp\big(\psi^{*}(\phi)-\psi(\phi)\big)$, the Weyl operators, which satisfy 
\begin{align}
    W_{\phi}^*\psi(x)W_{\phi}=\psi(x)+\phi(x) \, .
\end{align}
Note that the state $\om_{C,t}^{q}$ is quasifree because $\om_{t}^{q}$
is quasifree. By construction $\om_{C,t}^{q}(\psi(x))=0$ and thus using \eqref{eqn:quasi-wick} and the quasifreeness of $\om_{C,t}^{q}$ one sees that 
$\om_{C,t}^{q}$ vanishes on monomials of odd order in the fields. 
This provides substantial simplifications in the computations
below.

In particular the equations of the dynamics for $\gamma_t$ and $\sigma_t$ can be rewritten
\begin{align}
i\partial_{t}\gamma_{t}(x;y) & =\om_{C,t}^{q}([\psi^{*}(y)\psi(x),W_{\phi_{t}}^{*}\opH W_{\phi_{t}}]) \ , \label{eq:deriv_gamma_omega_centered}\\
i\partial_{t}\sigma_{t}(x_1,y) & =\om_{C,t}^{q}([\psi(x_1)\psi(y),W_{\phi_{t}}^{*}\opH W_{\phi_{t}}]) \ .\label{eq:deriv_sigma_omega_centered}
\end{align}
We compute $W_{\phi_{t}}^{*}\opH W_{\phi_{t}}$ modulo terms of odd degree and of degree 0 in the creation and annihilation operators: 
\begin{align}
W_{\phi_{t}}^{*}\opH W_{\phi_{t}} & \equiv\int\psi^{*}(z)\big(h+b_v[|\phi\rangle\langle\phi|]\big)(z;z')\psi(z')\, dzdz'\nonumber\\
 & \quad+\frac{1}{2}\int v(z-z')\phi_{t}(z)\phi_{t}(z')\psi^{*}(z)\psi^{*}(z')\, dzdz'+adj.\nonumber\\
 & \quad+\frac{1}{2}\int v(z-z')\psi^{*}(z)\psi^{*}(z')\psi(z')\psi(z)\, dzdz' \ .
\end{align}
Because $\om_{C,t}^{q}$ vanishes on monomials of odd order in the fields and using the commutator, the knowledge of $W_{\phi_{t}}^{*}\opH W_{\phi_{t}}$ modulo terms of odd degree and of degree 0 in the creation and annihilation operators is sufficient to compute the time derivative \eqref{eq:deriv_gamma_omega_centered} of $\gamma_t$. Thus using the CCR we get
\begin{align}
i\partial_{t}\gamma_{t}(x;y) & =\int\om_{C,t}^{q}\Big(\big(h+b_v[|\phi_{t}\rangle\langle\phi_{t}|]\big)(x;z)\psi^{*}(y)\psi(z) \nonumber \\
 & \quad-\big(h+B_v[|\phi_{t}\rangle\langle\phi_{t}|]\big)(z;y)\psi^{*}(z)\psi(x) \nonumber \\
 & \quad+v(z-x)\phi_{t}(z)\phi_{t}(x)\psi^{*}(y)\psi^{*}(z)-v(z-y)\overline{\phi_{t}(z)\phi_{t}(y)}\psi(z)\psi(x) \nonumber \\
 & \quad+v(z-x)\psi^{*}(y)\psi^{*}(z)\psi(x)\psi(z)-v(z-y)\psi^{*}(z)\psi^{*}(y)\psi(z)\psi(x)\Big)\, dz\,.
\end{align}
From the quasifreeness of $\om_{C,t}^{q}$ follows
\begin{align}
i\partial_{t}\gamma_{t}(x;y) & =\big[h+b_v[|\phi_{t}\rangle\langle\phi_{t}|+\gamma_{t}],\gamma_{t}\big](x;y) \nonumber \\
 & \quad+\int\big(v(z-x)\phi_{t}(z)\phi_{t}(x)\overline{\sigma_{t}(y,z)}-v(z-y)\overline{\phi_{t}(z)\phi_{t}(y)}\sigma_{t}(z,x) \nonumber \\
 & \quad+v(z-x)\sigma_{t}(x,z)\overline{\sigma_{t}(y,z)}-v(z-y)\sigma_{t}(x,z)\overline{\sigma_{t}(y,z)}\big)\, dz\,.
\end{align}
which is the dynamical equation  \eqref{eq:BHF-gamma} for $\gamma_{t}$.

Using the same arguments as for $\gamma_{t}$, we get
\begin{align}
i\partial_{t} & \sigma_{t}(x;y)  \nonumber \\
& =\om_{C,t}^{q}\Big(v(x-y)\phi_{t}(x)\phi_{t}(y)+v(x-y)\psi(x)\psi(y) \nonumber \\
 & \quad+\int\big((h+b_v[|\phi_{t}\rangle\langle\phi_{t}|])(x;z)\psi(y) +(h+b_v[|\phi_{t}\rangle\langle\phi_{t}|])(y;z)\psi(x)\big)\psi(z) \nonumber \\
 & \quad+v(x-z)\psi^{*}(z)\psi(y)\phi_{t}(x)\phi_{t}(z)+v(y-z)\psi^{*}(z)\psi(x)\phi_{t}(y)\phi_{t}(z) \nonumber \\
 & \quad+v(x-z)\psi^{*}(z)\psi(y)\psi(x)\psi(z)+v(y-z)\psi^{*}(z)\psi(x)\psi(y)\psi(z)\big)\, dz\Big)\,.
\end{align}
From the quasifreeness of $\om_{C,t}^{q}$ follows
\begin{align}
i\partial_{t} & \sigma_{t}(x;y)  \nonumber \\
 & =v(x-y)\phi_{t}(x)\phi_{t}(y)+v(x-y)\sigma_{t}(x,y) \nonumber \\
 & \quad+\int\Big(\big(h+b_v[|\phi_{t}\rangle\langle\phi_{t}|]\big)(x;z)\sigma_{t}(y,z)+\big(h+b_v[|\phi_{t}\rangle\langle\phi_{t}|]\big)(y;z)\sigma(x,z) \nonumber \\
 & \quad+v(x-z)\gamma_{t}(y;z)\phi_{t}(x)\phi_{t}(z)+v(y-z)\gamma_{t}(x;z)\phi_{t}(y)\phi_{t}(z) \nonumber \\
 & \quad+v(x-z)\big(\gamma_{t}(x;z)\sigma_{t}(z,y)+\gamma_{t}(y;z)\sigma_{t}(z,x)+\gamma_{t}(z;z)\sigma_{t}(x,y)\big) \nonumber \\
 & \quad+v(y-z)\big(\gamma_{t}(x;z)\sigma_{t}(z,y)+\gamma_{t}(y;z)\sigma_{t}(z,x)+\gamma_{t}(z;z)\sigma_{t}(x,y)\big)\Big)\, dz\,,
\end{align}
which is the dynamical equation  \eqref{eq:BHF-sigma} for $\sigma_{t}$.
\end{proof}

\section{Equivalence of the HBF equations with \\
the evolution generated by $\HHFB(\qf_t)$}
\label{sec-HFB-equiv-1}

In this section, we prove Theorem \ref{thm-Hquadr-dyn-1}. 

Let a quasifree state $\om_{t}^{q}$ satisfy \eqref{eq-omt-commut-3} and let $\phi_{t}, \gamma_{t}$ and $\sigma_{t}$ denote its truncated expectations. Below,  we use the  abbreviations $h (t)\equiv h (\gamma_t^{\phi_{t}})$ and $k (t)\equiv  k (\sigma^{\phi_{t}}_t),$ where, recall, $\gamma^{\phi}:=\gamma+|\phi\rangle\langle\phi|$ and $\sigma^{\phi}:=\sigma +|\phi\rangle\langle\phi|$, and 
$ h (\gamma)$ and $  k (\sigma)$ are defined in \eqref{h} and \eqref{k}.
To find the equation for $\phi_{t}$, we compute \begin{align*}
i\partial_{t} \phi_{t}(x) & = {\om}_{t}^{q}\big([\psi(x), \Hhfb]\big)\\
 & =\tilde{\om}_{t}^{q}\Big(\int h (t)(x;z)\psi(z)dz-b[|\phi_{t}\rangle\langle\phi_{t}|]\phi_{t}(x) +\int\psi^{*}(z)k (t)(x,z)\, dz\Big)\\
 & =h (t) {\phi}_{t}(x)-b [|\phi_{t}\rangle\langle\phi_{t}|]\phi_{t}(x)+k (t)\overline{{\phi}_{t}}(x)\,.
\end{align*}
 Hence 
$\phi_{t}$ satisfies \eqref{eq:BHF-phi}.

For ${\gamma}_{t}$ and ${\sigma}_{t}$ we remark that,
modulo terms of order one and constants $W_{\phi_{t}}^{*} \Hhfb W_{\phi_{t}}$
and $\Hhfb$ coincide, hence
\begin{align}
W_{\phi_{t}}^{*}  \Hhfb W_{\phi_{t}} \equiv & \int h (t)(z;z')\psi^{*}(z)\psi(z')\, dzdz' \nonumber \\
   &+\frac{1}{2}\int\psi^{*}(z_{1})\psi^{*}(z_{2})k (t)(z_1,z_2)\, dz_{1}dz_{2}+adj.\,.
\end{align}
Recall the definition \eqref{omC} of $\om_{C,t}^{q}(\mathbb{A})$. As in the proof of Theorem \ref{prop-HFB-vfull-1} the terms coming
from the derivative of $W_{\phi_{t}}$ simplify:
\[
i\partial_{t} {\gamma}_{t}(x;y)= {\om}_{C,t}^{q}\big([\psi^{*}(y)\psi(x),W_{\phi_{t}}^{*}\Hhfb W_{\phi_{t}}]\big)\,.
\]
It is sufficient to consider $W_{\phi_{t}}^{*}\tilde{\opH}(\phi_{t},\gamma_{t},\sigma_{t})W_{\phi_{t}}$ modulo monomials of odd order in the fields:
\begin{align*}
i\partial_{t}\gamma_{t}(x;y) & = {\om}_{C,t}^{q}\Big(\int h (t)(x;z)\psi^{*}(x)\psi(z)dz-\int h (t)(z;y)\psi^{*}(z)\psi(y)\, dz\\
 & \quad+\int\psi^{*}(z)\psi^{*}(y)k (t)(z,x)\, dz -\int\overline{k (t)(z,y)}\psi(z)\psi(x)\, dz\Big)\\
 & =\int h (t)(x;z)\gamma_{t}(z;x)dz-\int {\gamma}_{t}(x;z)h_{v}(t)(z;y)\, dz\\
 & \quad+\int\overline{\sigma_{t}(y,z)}k (t)(z,x)\, dz -\int\overline{k (t)(z,y)}\sigma_{t}(x,z)\, dz\Big)\,.
\end{align*}
Similarily
\begin{align}
i\partial_{t}\sigma_{t}(x;y)= {\om}_{C,t}^{q}\big([\psi(x)\psi(y),W_{\phi_{t}}^{*}\Hhfb W_{\phi_{t}}]\big)
\end{align}
and
\begin{multline}
i\partial_{t}\gamma_{t}(x;y)  =\int h (t)(x;z)\sigma_{t}(x,z)dz+\int h (t)(y;z)\sigma_{t}(y,z)\, dz  \\
  \quad+\int\gamma_{t}(y,z)k (t)(z,x)\, dz +\int\gamma_{t}(x,z)k (t)(z,y)\, dz +k (t)(x,y)
\end{multline}
Thus 
${\gamma}_{t}$ and ${\sigma}_{t}$ satisfy \eqref{eq:BHF-gamma} and \eqref{eq:BHF-sigma}. 

We have shown that, if  a quasifree state $\om_{t}^{q}$ satisfies \eqref{eq-omt-commut-3}, then its truncated expectations, $\phi_{t}, \gamma_{t}$ and $\sigma_{t}$, satisfy  \eqref{eq:BHF-phi}, \eqref{eq:BHF-gamma} and \eqref{eq:BHF-sigma}. 
Proceeding in the opposite direction, one shows that, if  truncated expectations, $\phi_{t}, \gamma_{t}$ and $\sigma_{t}$, satisfy  \eqref{eq:BHF-phi}, \eqref{eq:BHF-gamma} and \eqref{eq:BHF-sigma}, then the corresponding quasifree state $\om_{t}^{q}$ satisfies \eqref{eq-omt-commut-3}.
\qquad \qquad \qquad $\Box$



\section{Operators $b$ and $k$} \label{sec:ops-b-k}

Recall that $W^{p, r}(\R^d)$ denotes the standard Sobolev space over $\R^d$. 

 \begin{lemma} \label{prop:continuity_of_B_and_K}
Assume that $v \in W^{p, 1}$ with $p>d$. 
Then, the operators $b$ and $k$ defined in \eqref{h} and \eqref{k}
 possess the following properties:
\begin{enumerate}
\item \label{enu:continuity-B-H1L2-LH1}$b$ is 
continuous from $\cHLd$ to $\cB(\Hd)\simeq M\cB M^{-1}$.
\item \label{enu:continuity-K-H1L2-L2H1} $k$ is 
continuous from $\Hsdd$ to $M^{-1}\cL^2$. 
\end{enumerate}
\end{lemma}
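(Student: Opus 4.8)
The plan is to prove the two continuity estimates separately, reducing each to a bound that isolates the single--variable hypothesis $v^{2}\le CM^{2}$, which I will use throughout in the equivalent form $\|vM^{-1}\|_{\cB}=\|M^{-1}v\|_{\cB}\le\sqrt C$ (i.e. $|v|\le\sqrt C\,M$ as quadratic forms). Working with kernels, the target estimates are $\|Mb[\gamma]M^{-1}\|_{\cB(L^{2})}\le C\|\gamma\|_{\cHLd}$ for (1), and, for (2), $\|Mk[\sigma]M^{-1}\|_{\mathcal{L}^{2}}\le C\|\sigma\|_{\Hsdd}$, the left--hand side being the $L^{2}(\R^{2d})$--norm of the kernel obtained by applying $M$ in the first variable and $M^{-1}$ in the second to $v(x-y)\sigma(x,y)$.

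For (1) I would split $b[\gamma]=v*d(\gamma)+v\,\sharp\,\gamma$. The exchange term $v\,\sharp\,\gamma$ has exactly the structure of $k$, now applied to the trace--class object $\gamma\in\cHLd$; since the trace norm dominates the Hilbert--Schmidt norm, which dominates the operator norm, the bound $\|M(v\,\sharp\,\gamma)M^{-1}\|_{\cB}\le C\|\gamma\|_{\cHLd}$ follows from the estimate of (2) (with $\mathcal{L}^{2}$ replaced by $\mathcal{L}^{1}$, which is only easier). The direct term $v*d(\gamma)$ is multiplication by $W:=v*n$, with $n:=d(\gamma)=\gamma(\cdot,\cdot)$; here I would first control $n$ by $\|\gamma\|_{\cHLd}$, using that $M\gamma M\in\mathcal{L}^{1}$ forces a bounded, regular kernel exactly as in Lemma~\ref{lem:d-k-prop}, and then estimate the $\Hd$--multiplier norm of $W$ by $\|W\|_{\cB(\Hd)}\le\|W\|_{L^{\infty}}+\|[M,W]M^{-1}\|_{\cB}$, the commutator being handled through the form bound and the regularity of $W=v*n$.

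The heart of the matter is (2). I would first record the two--body form of the hypothesis: applying $v^{2}\le CM^{2}$ in the relative variable $r=x-y$ and integrating over the center of mass gives $v(x-y)^{2}\le C\big(1+\tfrac14|\nabla_{x}-\nabla_{y}|^{2}\big)\le C\,(M^{2}\otimes1+1\otimes M^{2})=:C\mathcal{M}^{2}$ as forms on $L^{2}(\R^{2d})$, so that $\|v(x-y)g\|_{L^{2}}\le\sqrt C\,\|\mathcal{M}g\|_{L^{2}}$ while $\|\mathcal{M}\sigma\|_{L^{2}}=\|\sigma\|_{\Hsdd}$. Writing $T:=(M\otimes1)(1\otimes M^{-1})$, which is the positive self--adjoint Fourier multiplier $\langle\xi\rangle/\langle\eta\rangle$, I would use $\|T(v\sigma)\|^{2}=\langle\sigma,\,v\,T^{2}v\,\sigma\rangle$ and insert $\mathcal{M}^{-1}\mathcal{M}$ to reduce the whole estimate to the boundedness, on $L^{2}(\R^{2d})$, of the single operator
\[
A:=(M\otimes1)(1\otimes M^{-1})\,v(x-y)\,\mathcal{M}^{-1}\,,\qquad
\|Mk[\sigma]M^{-1}\|_{\mathcal{L}^{2}}\le\|A\|\,\|\sigma\|_{\Hsdd}\,.
\]
The crucial point, which rules out a naive estimate, is that $T=\langle\xi\rangle/\langle\eta\rangle$ is \emph{unbounded}: one cannot separate it from the multiplication by $v(x-y)$, nor differentiate the product $v(x-y)\sigma$ (this would place a derivative on the singular potential). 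Instead I would keep $T$ together with $v$ and exploit the cancellation in $\langle\xi\rangle/\langle\eta\rangle$, which is of order one precisely where the two frequencies align, together with the fact that $\|\sigma\|_{\Hsdd}$ controls regularity in \emph{both} variables (equivalently in both the relative and the center--of--mass momenta $p=(\xi+\eta)/2$, $k=(\xi-\eta)/2$). Passing to these variables, in which multiplication by $v(x-y)$ becomes convolution in $k$ at fixed $p$, I would prove $\|A\|<\infty$ fiberwise in $p$ by a Schur--type weighted--convolution argument, routing the single relative--derivative budget permitted by the form bound onto $v$ and absorbing the surplus weight of $T$ into the center--of--mass regularity.

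The main obstacle is exactly this boundedness of $A$: because multiplication by $v(x-y)$ commutes with neither $M\otimes1$ nor $1\otimes M$, and $v$ may be as singular as the Coulomb potential, the estimate cannot be obtained by bounding factors one at a time (the crude bounds $T\le C\langle\nabla_{x}-\nabla_{y}\rangle$ and $v\le\sqrt C\mathcal{M}$ each lose the cancellation and would demand a derivative on $v$). I therefore expect the relative/center--of--mass splitting and a careful harmonic--analysis estimate on the $k$--fiber to be the technically delicate step, after which (1) follows with room to spare.
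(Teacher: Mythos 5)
Your reduction of part (2) to the boundedness of $A=(M\otimes 1)(1\otimes M^{-1})\,v(x-y)\,\mathcal{M}^{-1}$, together with the two-body form bound $v(x-y)^2\le C\mathcal{M}^2$, is sound and consistent with the paper's target estimate; but your proposal stops exactly where the proof has to happen. You never prove $\|A\|<\infty$, and the route you sketch cannot be run on the stated hypothesis: a Schur-type weighted-convolution argument on the $k$-fiber needs pointwise or integrability information on $\hat v$, and $v^{2}\le CM^{2}$ provides none (it is equivalent to the operator-norm statement $\|vM^{-1}\|_{\cB}\le\sqrt C$; for Coulomb $\hat v(q)\sim|q|^{-2}$, and in general $\hat v$ is merely a tempered distribution). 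Worse, you dismiss as impossible precisely the step that makes the estimate elementary. Differentiating the product $v\sigma$ does \emph{not} place a derivative on the singular potential, because $\nabla_x[v(x-y)]=-\nabla_y[v(x-y)]$ gives the identity
\[
\nabla_x\big(v(x-y)\sigma\big)=-\nabla_y\big(v(x-y)\sigma\big)+v(x-y)(\nabla_x+\nabla_y)\sigma\,,
\]
in which $v$ is never differentiated alone. Writing $\|Mk[\sigma]M^{-1}\|_{\mathcal{L}^2}^2=\|M_y^{-1}(v\sigma)\|_{L^2}^{2}+\|M_y^{-1}\nabla_x(v\sigma)\|_{L^2}^{2}$, the terms $M_y^{-1}(v\sigma)$ and $M_y^{-1}v(\nabla_x+\nabla_y)\sigma$ are controlled by $\|M_y^{-1}v(x-y)\|_{\cB}\le\sqrt C$ (your own $\|M^{-1}v\|_{\cB}\le\sqrt C$ plus translation invariance; this is what the paper phrases as the inversion $M_y^{-2}\le Cv(x-y)^{-2}$), while $\|M_y^{-1}\nabla_y(v\sigma)\|_{L^2}\le\|v\sigma\|_{L^2}\le\sqrt C\,\|\sigma\|_{\Hsdd}$ because $\|\nabla_yM_y^{-1}\|_{\cB}\le1$ and $\|v(x-y)M_y^{-1}\|_{\cB}\le\sqrt C$. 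This short computation is the paper's proof of (2); it is exactly the route you ruled out, made licit by the derivative swap, and it is the missing idea in your proposal.

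Part (1) also contains a genuine error. Your first step for the direct term, that $M\gamma M\in\mathcal{L}^{1}$ ``forces a bounded, regular kernel exactly as in Lemma~\ref{lem:d-k-prop}'', is false for $d\ge2$: that lemma requires $j>d/2$, whereas here $j=1$, and indeed $n=d(\gamma)=\sum_j\lambda_j|\varphi_j|^2$ with $\varphi_j\in\Hd$ is in general unbounded since $\Hd\not\hookrightarrow L^\infty$ for $d\ge2$. The paper never bounds $n$ itself; it bounds $v*n$ and $\nabla(v*n)=v*\nabla n$ in $L^\infty$ directly, by Cauchy--Schwarz inside the convolution integral applied to the spectral decomposition of $\gamma$, e.g.\ $\int|v(x-y)|\,|\varphi_j\nabla\varphi_j|\,dy\le\|v(x-\cdot)\varphi_j\|_{L^2}\|\nabla\varphi_j\|_{L^2}\le\sqrt C\,\|M\varphi_j\|_{L^2}^{2}$; the multiplier bound on $\Hd$ then follows from $W,\nabla W\in L^\infty$, with no commutator analysis needed. (Your reduction of the exchange term to part (2) is viable in principle---the two-body bound needs only translation invariance, not symmetry of the kernel---but it of course inherits the unproven core of (2).)
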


\begin{proof} 
For the detailed proof of statement \eqref{enu:continuity-B-H1L2-LH1},
we refer to~\cite{BoveDaPratoFano1976}. 
For the reader's convenience, we recall here the main arguments.
We first consider the direct term, i.e., the first term in the definition of~$b$.  
It is sufficient to prove that $v*n$ (with functions $n(x)=\gamma(x;x)$) and $\nabla v*n$ uniformly bounded by $\|\gamma\|_{\cHLd}$. As those two bounds are very similar, we focus on the more difficult one, $\nabla v*n$.

Denote by $\tilde\g$  the (generalized) integral kernel of an operator $\g$. 
 Since $v \in W^{p, 1}(\R^d)$ with $p>d$, the function $v$ is bounded. Since $\nabla_x \int_{\R^d} v(x-y) \, \gamma(y;y)dy=\int_{\R^d} v(x-y) \, \nabla_y \gamma(y;y)dy$, we have
\begin{align}
\big\|\nabla_x \int_{\R^d} v(x-y) \, \gamma(y;y)dy\big\|_\infty 
&\leq \big\|v\big\|_\infty \int_{\R^d} |\nabla_y \gamma(y;y)| dy\end{align}
Furthermore, $\int_{\R^d} |\nabla_y \gamma(y;y)| dy\leq  \|\gamma\|_{\cHLd}$, which can proved by using the decomposition $\gamma=\sum_{j=1}^\infty \lambda_j |\varphi_j\rangle\langle\varphi_j|$ with $\lambda_j\geq 0$ of $\gamma$, combined with the Cauchy-Schwarz inequality: 
\begin{align}
\int_{\R^d} |\nabla_y \gamma(y;y)| dy 
&\leq \sum_{j=1}^\infty \lambda_j  \int_{\R^d} |\varphi_j(y)\nabla \varphi_j(y)| dy\\
&\leq \sum_{j=1}^\infty \lambda_j   \| \varphi_j\|_{L^2}  \|\nabla \varphi_j\|_{L^2} \\
&\leq  \sum_{j=1}^\infty \lambda_j  \|M \varphi_j\|^2_{L^2}\leq  \|\gamma\|_{\cHLd}
\end{align}
The last two estimates imply the desired result, $\|\nabla v*n \|_\infty\leq  \|\gamma\|_{\cHLd}$. 
The estimates for the exchange term (the second term in the definition of $B$) are similar.

Point (\ref{enu:continuity-K-H1L2-L2H1}) is equivalent to the estimate
 \begin{align}\label{Mk-est}
\|Mk[\sigma]\|_{\cL^{2}} &\ls \|\s\|_{\cH^{1}_\s},\end{align}  
which we now prove.

Denote   by $\tilde\s$ 
 the (generalized) integral kernel of an operator $\s$. 
Clearly, $\|\sigma\|_{\mathcal{H}^j_\s}\simeq \|\tilde\s\|_{H^{1}}$. 
Denote by $a(x,y)=v(x-y) \tilde\sigma(x, y)$, 
 the integral kernel of $k[\sigma]$. We have that
\begin{align}\label{Mk-est2}
\|Mk\|_{\cL^{2}}^{2} & =\int\int|M_{x}a(x,y)|^2dxdy \le\|a\|_{H^{1}}^{2}.\end{align}
Since $a(x,y)=v(x-y) \tilde\s(x,y)$  and 
\[
\|a \|_{H^{1}}\le \|a \|_{L^{2}}+ \|\p_x a \|_{L^{2}}+ \|\p_y a \|_{L^{2}}\,,
\] 
we use the Leibniz rule,  
$\p_x a(x,y)=(\p v(x-y)) \tilde\s(x,y) + v(x-y) \p_x \tilde\s(x,y)$, 
to find that
\begin{align}
\label{a-estim}
\|a \|_{H^{1}}\le & \big( \|v \|_{L^{\infty}} 
+ \|\p_x v M_x^{-1}\|_{\cB(L^2)} + \|\p_y v M_y^{-1}\|_{\cB(L^2)} \big) 
\, \|\tilde\s \|_{H^{1}} \, ,
\end{align}
where $L^2 := L^2(\R^d_x\times \R^d_y)$. 
The Schwartz and Sobolev inequalities imply that $$\|\p_x v f\|_{L^{2}}\le  \|\p_x v \|_{L^{p}}\|  f\|_{L^{s}}\ls \|v \|_{W^{p, 1}}\| M f\|_{L^{2}},$$ for arbitrary $s$ and $p$ satisfying $\frac1p+\frac1s = \frac12$ and $p>d$. Thus $$\|\p_x v M_x^{-1}\|\ls \|v \|_{W^{p, 1}}\,,$$ and, similarly, $\|\p_y v M_y^{-1}\|\ls  \|v \|_{W^{p, 1}}$. It follows that
\begin{align}\label{a-est'}
\|a \|_{H^{1}}\ls \|v \|_{W^{p, 1}}\|\tilde\s\|_{H^{1}}.\end{align}
This,
 together with \eqref{Mk-est2} and $\|\tilde\s\|_{H^{1}}\simeq \|\sigma\|_{\mathcal{H}^1_\s}$, yields \eqref{Mk-est}. 
\end{proof}

\bibliographystyle{plain}

\end{document}